\newcommand\footnoteref[1]{\protected@xdef\@thefnmark{\ref{#1}}\@footnotemark}
\theoremstyle{plain}
\newtheorem{lemma}{Lemma}[section]
\newauthornote{\jjvk}{JJVK}{blue!80}
\title{Generalised second law beyond\texorpdfstring{\\}{ }the semiclassical regime}
\author{Josh Kirklin}
\institution{Perimeter Institute for Theoretical Physics,\texorpdfstring{\\}{ }31 Caroline Street North, Waterloo, ON, N2L 2Y5, Canada}
\email{
  \emaillink{jkirklin@pitp.ca}
}
\begin{document}
\maketitleandtoc

\section{Introduction}
\label{Section: Introduction}

The generalised second law (GSL) has taught us much about the fundamental nature of quantum gravity. It states that the generalised entropy
\begin{equation}
  S_{\text{gen}} = \frac{A}{4 G_{\text{N}}} + S_{\text{out}}
  \label{Equation: S gen}
\end{equation}
of a black hole horizon cannot decrease over time. Here, $A$ is the area of some `cut' (i.e.\ spacelike slice) of the horizon, and $S_{\text{out}}$ is the entropy of the degrees of freedom in the region exterior to the cut. The GSL was originally proposed by Bekenstein~\cite{Bekenstein} on thermodynamical grounds, but if one views $S_{\text{out}}$ as an entanglement entropy, then one arrives at the popular interpretation of $S_{\text{gen}}$ as a genuine statistical measure of the state of microscopic gravitational degrees of freedom~\cite{Susskind_1994,jacobson1994blackholeentropyinduced,FROLOV1997339,Frolov_1998}; this is the interpretation that is most relevant for this paper.

There is a long history of proofs of the GSL~\cite{TenProofs,ZurekThorne,ThorneZurekPrice,SEWELL1987309,Wald:1995yp,FrolovPage,Sorkin:1986mg,Sorkin:1997ja,Mukohyama_1997,Wall:2011hj,Faulkner:2024gst}, with varying degrees of rigour and differing sets of assumptions -- but they have to our knowledge almost exclusively applied within \emph{semiclassical, perturbative} gravity. In this paper, we will go beyond the semiclassical regime, proving that the GSL, with an appropriate and modest modification, holds in fully quantum (but still perturbative) gravity. Our proposed modified GSL states that the increase in generalised entropy between two cuts is bounded below by the (possibly negative) \emph{free energy} of a certain set of reference degrees of freedom quantifying the relationship between the cuts. These degrees of freedom are hidden in the semiclassical regime, and accordingly our modified GSL reduces to the standard GSL when one takes an appropriate semiclassical limit. But, for reasons we will explain, the free energy term is required in the full theory for consistency with gravitational gauge invariance.

It is worth making clear from the outset exactly what we mean by the `semiclassical regime'. It is a description of gravitational physics in terms of local quantum degrees of freedom evolving on a classical manifold. Broadly speaking, there are two reasons this can be only an approximation (and one which need not remain valid during the full course of the horizon's evolution -- see for example Figure~\ref{Figure: leaving and entering semiclassical regime}, which will be explained in more detail below). First, as is widely appreciated, any formulation of spacetime in terms of a smooth manifold will break down at the Planck scale. In perturbative gravity, one basically sweeps this issue under the rug: the perturbations to the metric are regarded as quantum fields and do not account for Planck scale physics. But the second issue with the semiclassical regime is relevant even in perturbation theory: spacetime diffeomorphisms are gravitational gauge symmetries, and the value of a field at any fixed spacetime point is not diffeomorphism-invariant, so there \emph{are} no precisely local physical degrees of freedom.\footnote{In perturbation theory, we do have a local description \emph{before} imposing gauge invariance. But one must impose gauge invariance to obtain an adequate physical theory.} When it comes to the GSL, this is of key importance, because the term $S_{\text{out}}$ in~\eqref{Equation: S gen} is supposed to be the entropy of a local algebra of operators.

Thankfully, there are plenty of states in gravity for which a local picture \emph{emerges}, so that the semiclassical regime becomes a valid approximation, and~\eqref{Equation: S gen} makes sense. The key feature of such states is that they contain well-behaved \emph{dynamical reference frames}, which may be used to formulate an adequate notion of \emph{relational} locality~\cite{Goeller:2022rsx}. Such reference frames can be constructed in a variety of different ways, depending on the state. The semiclassical regime applies to perturbative gravity only when one restricts to states containing reference frames whose fluctuations are sufficiently suppressed to permit the emergence of effectively local degrees of freedom.\footnote{Often, a limit of small $G_{\text{N}}$ is also associated with the semiclassical regime. In the present language this can be understood as related to assuming that the reference frame does not significantly backreact on other degrees of freedom.} In this case, $S_{\text{out}}$ in~\eqref{Equation: S gen} may be understood as the entropy of these degrees of freedom.

Dynamical reference frames are typically not fundamental objects, instead being built out of other underlying dynamical degrees of freedom. They are important for a physical interpretation of locality even in classical gravity~\cite{Carrozza:2021gju,Carrozza:2022xut,Goeller:2022rsx}. But they are fundamentally more subtle in the quantum theory, because they must be made out of \emph{quantum} degrees of freedom. Such \emph{quantum reference frames} (QRFs) are thus subject to all the usual quantum phenomena -- they obey the uncertainty principle, they can be in superpositions, they can be entangled, and so on. They have been the subject of growing interest in various contexts~\cite{Aharonov:1967zza,Aharonov:1984,angeloPhysicsQuantumReference2011a,Giacomini:2017zju,delaHamette:2021oex,Hoehn:2023ehz,Hoehn:2019fsy,Hoehn:2020epv,AliAhmad:2021adn,Vanrietvelde:2018pgb,Giacomini:2021gei,Castro-Ruiz:2019nnl,delaHamette:2020dyi,Kabel:2024lzr,Kabel:2023jve,Carette:2023wpz,Loveridge:2019phw,loveridgeSymmetryReferenceFrames2018a,Bartlett:2006tzx,Castro-Ruiz:2021vnq,Vanrietvelde:2018dit,Hoehn:2021flk,Suleymanov:2023wio}, but are perhaps most relevant in the study of quantum gravity, where their ultimate purpose is to reconcile diffeomorphism invariance and general covariance with local quantum theory, in the way described above. Various frameworks for understanding QRFs have been proposed, sharing motivations, but differing in some key details, and most naturally applicable in different sets of circumstances. Of these, the perspective-neutral formalism~\cite{delaHamette:2021oex,Hoehn:2023ehz,Hoehn:2019fsy,Hoehn:2020epv,AliAhmad:2021adn,Giacomini:2021gei,Castro-Ruiz:2019nnl,Vanrietvelde:2018dit,Vanrietvelde:2018pgb,Hoehn:2021flk,Suleymanov:2023wio,DeVuyst:2024pop,DEHKlong} seems to be the one most generally suited for quantum gravity, since (unlike some other formalisms) it involves the direct construction of physical states solving the constraints.\footnote{See~\cite{DeVuyst:2024pop,DEHKlong} for more discussion on this.} It is the one we will employ in this paper.

QRFs were the essential ingredient which recently allowed~\cite{leutheusser2023causalconnectabilityquantumsystems,leutheusser2023emergenttimesholographicduality,Witten_2022,Chandrasekaran_2023,Chandrasekaran_2023b,kudlerflam2024generalizedblackholeentropy,Jensen_2023,Faulkner:2024gst,Ali_Ahmad_2024,Klinger:2023tgi} to identify the generalised entropy of a general subregion with the von Neumann entropy of a certain associated Type II algebra~\cite{DeVuyst:2024pop,DEHKlong,Fewster:2024pur,AliAhmad:2024wja}. Indeed, that identification was based on explicitly introducing a QRF for boosts, with its own Hilbert space (building on earlier ideas in~\cite{Donnelly_2016}). Upon imposing gauge invariance, it was then found that the relevant algebra of observables is given by the so-called crossed product of a (kinematical) Type III QFT algebra for matter/gravitons by the action of the boost. The resulting algebra is Type II because the boost corresponds to modular flow~\cite{Takesaki,Connes,Connes:1994yd}. In a limit in which the boost time can be treated as an approximately classical coordinate (i.e.\ its quantum fluctuations are suppressed relative to everything else -- this is a semiclassical regime), it was found that the von Neumann entropy $S_{\text{vN}}$ for this Type II algebra agrees with $S_{\text{gen}}$.

An added bonus of the construction in~\cite{Faulkner:2024gst,leutheusser2023causalconnectabilityquantumsystems,leutheusser2023emergenttimesholographicduality,Witten_2022,Chandrasekaran_2023,Chandrasekaran_2023b,kudlerflam2024generalizedblackholeentropy,Jensen_2023,Faulkner:2024gst,Ali_Ahmad_2024,Klinger:2023tgi} is that it allows one to define gravitational entropies without imposing a UV cutoff on the fields. Until recently, such a cutoff was required by most proofs of the GSL, because the entanglement entropy $S_{\text{out}}$ is a UV-divergent quantity (due to the Type III nature of the QFT algebra). But a UV cutoff is highly undesirable in a full theory of gravity, being at odds with general covariance and the holographic principle~\cite{Weinberg:1980gg,Bousso_2002}. Moreover, $S_{\text{gen}}$ itself can be made UV-finite by an appropriate renormalisation of Newton's constant $G_{\text{N}}$~\cite{Susskind_1994,jacobson1994blackholeentropyinduced,FROLOV1997339,Frolov_1998}, which strongly suggests that the cutoff is ultimately unnecessary. Indeed, by making the algebra Type II, the inclusion of the boost QRF intrinsically regulates the entropy, and hence eliminates the need for a UV cutoff. Leveraging this, in~\cite{Faulkner:2024gst} a perturbative, semiclassical proof of the GSL without a UV cutoff was given, combining the techniques of~\cite{Wall:2011hj} (which was the most powerful of the previous proofs, but still was semiclassical and UV-regulated) with those of~\cite{leutheusser2023causalconnectabilityquantumsystems,leutheusser2023emergenttimesholographicduality,Witten_2022,Chandrasekaran_2023,Chandrasekaran_2023b,kudlerflam2024generalizedblackholeentropy,Jensen_2023,Ali_Ahmad_2024,Klinger:2023tgi}.\footnote{See also similar methods in~\cite{Ali:2024jkx}. We should additionally point out~\cite[Section 4]{Chandrasekaran_2023}, which discussed a slightly different proof of the GSL applying when a very large amount of time has elapsed between the cuts whose generalised entropies are compared, using the fact that (in a semiclassical regime) the associated algebras of operators are approximately uncorrelated. By contrast, our proof does not require a large time separation of the cuts, and works non-semiclassically.}

In~\cite{Faulkner:2024gst}, the horizon boost QRF was explicitly identified as a conjugate variable to the asymptotic future area $A_\infty$ of the horizon. This is consistent with a broader story involving null surfaces such as horizons, where spin 2 gravitons are just one part of the gravitational data, and various other, oft-neglected, gravitational degrees of freedom turn out to be natural candidates for reference frames. Indeed, there are also spin 0 and spin 1 gravitational degrees of freedom~\cite{Ciambelli:2023mir}; the reason they are sometimes ignored is that they only appear at second order and higher in perturbation theory, whereas the gravitons appear at first order. They include, for example, perturbations to the area of the horizon. But the gravitational gauge constraints mix the various orders of perturbation theory, so a proper treatment must take certain parts of the higher order fields into account~\cite{DEHKinstability}. Through the constraints, the higher order degrees of freedom generate certain symmetries of the gravitons and matter fields, and one should also account for the Goldstone modes of these symmetries. For example, the horizon area generates a boost of the gravitons/matter~\cite{Wald_1993,Wall:2011hj,Ciambelli:2023mir,Faulkner:2024gst}, and the corresponding Goldstone mode is a degree of freedom measuring a time along the boost. The Goldstone modes of the higher order fields can more generally be understood as providing sets of reference frames relative to which the gravitons and matter may be measured.

With this in mind, an alternative way to describe the semiclassical regime for the GSL invoked in~\cite{Faulkner:2024gst} is that it applies when the fluctuations of the fields outside the horizon are suppressed relative to fluctuations of $A_\infty$ (in Planck units), since $A_\infty$ is conjugate to the boost time.
\changed{More precisely, the semiclassical regime involved in the proof of the GSL in~\cite{Faulkner:2024gst} required}
\begin{equation}
  \changed{\frac{\Delta A^\infty}{4G_{\text{N}}} \gg \Delta H_{\text{boost}},}
\end{equation}
\changed{where $H_{\text{boost}}$ is the generator of a boost of the fields outside the horizon.}
But whether this \changed{inequality} holds can change over the course of the horizon's evolution. \changed{Indeed, $A^\infty$ has no time dependence, but $H_{\text{boost}}$ does, since it generates a boost around a specific cut of the horizon.}

\begin{figure}
  \centering
  \begin{tikzpicture}[scale=1.5]
    \fill[black!10] (-1,-1.75) -- (-1,-1) -- (2,2) -- (3,2) -- (3,-1.75) -- (2,-1.75);

    \draw[decorate, decoration=snake,red] (0.4,-1) -- (2.1,-0.8);
    \draw[decorate, decoration=snake,red] (0.4,-1.2) .. controls (1.25,-1) .. (2.1,-1);
    \draw[decorate, decoration=snake,red] (0.4,-0.8) .. controls (1.25,-0.8) .. (2.1,-0.6);

    \draw[green!70!black, very thick,-stealth] (0.4,-1) -- (-0.1,-0.1) -- (-0.6,0.8);
    \draw[green!70!black, very thick,-stealth] (2.1,-0.8) -- (1.3,1.3) -- (0.9,2.35);

    \draw[green!50!black,fill=green!20!white,thick] (0.4,-1) ellipse (0.4 and 0.3) node {$m$};
    \draw[green!50!black,fill=green!20!white,thick] (2.1,-0.8) ellipse (0.4 and 0.3) node {$m'$};

    \draw[thick] (-1.05,-1.05) -- (2.05,2.05);

    \fill[decorate, decoration={snake, amplitude=0.3pt}, inner color=red!20, outer color=red!40] (-0.1,-0.1) .. controls (0.3,0.7) and (0.5,0.9) .. (1.3,1.3) .. controls (0.9,0.5) and (0.7,0.3) .. (-0.1,-0.1);

    \draw[thick,decorate, decoration={snake,amplitude=0.5pt},black!70] (-0.1,-0.1) -- (1.3,1.3);

    \node[above left] at (-1,-1.1) {\large$\mathscr{H}$};

    \draw[Latex-,gray] (-0.6,-0.4) .. controls (-1,0) and (-2,0.5) .. (-2,1) node[black,above] {\small semiclassical};
    \draw[Latex-,gray] (0.4,0.85) .. controls (0,1.25) and (-0.5,1.5) .. (-0.5,2) node[black,above] {\small not semiclassical};
    \draw[Latex-,gray] (1.6,1.75) .. controls (1.35,2.2) and (1.8,2) .. (1.8,2.3) node[black,above] {\small semiclassical};
  \end{tikzpicture}
  \caption{\changed{The semiclassical regime applies when $\Delta A^\infty/4G_{\text{N}}\gg \Delta H_{\text{boost}}$, where $A^\infty$ is the area of the horizon in the asymptotic future, and $H_{\text{boost}}$ is the generator of a boost acting on the fields outside the horizon~\cite{Faulkner:2024gst}.} During the course of the horizon's evolution, the quantum state can move in and out of \changed[the]{this} semiclassical regime. An example is shown, with two clumps of matter $m$ and $m'$ which are highly entangled with each other \changed{across modes of $H_{\text{boost}}$}, and initially located outside the horizon. Suppose the horizon starts off in a semiclassical state (so \changed[the fields do]{$H_{\text{boost}}$ does} not fluctuate too much), then $m$ falls in, and some time later $m'$ falls in. After $m$ crosses the horizon, the entanglement causes \changed[the fields in the exterior region]{$H_{\text{boost}}$} to be much more highly fluctuating, \changed{while $\Delta A^\infty$ is unchanged,} causing us to leave the semiclassical regime. Then, once $m'$ falls in, the fluctuations of \changed[the fields]{$H_{\text{boost}}$} will be suppressed once more, so we return to the semiclassical regime.}
  \label{Figure: leaving and entering semiclassical regime}
\end{figure}

For instance, suppose we start off in the semiclassical regime, and then some matter $m$ falls into the horizon which is very highly entangled with some other matter $m'$ remaining outside the horizon{ but otherwise in a pure state; in particular we can consider a joint state of $m$ and $m'$ of the form}
\begin{equation}
  \changed{\ket{\psi}_{mm'} = \sum_\lambda\ket{\lambda}_m\otimes\ket{-\lambda}_{m'},}
\end{equation}
\changed{where each $\ket{\lambda}$ is a boost eigenstate, and the sum is done over a very large range of boost eigenvalues $\lambda$}.
This \changed{entanglement} would cause the fluctuations of \changed[the fields outside the horizon]{$H_{\text{boost}}$} to increase \changed{after $m$ has fallen in}; in an extreme case, \changed[they]{it} would now fluctuate more than $A_\infty\changed{/4G_{\text{N}}}$, and thus we would have left the semiclassical regime. Subsequently, the other matter could fall in, \changed{purifying the entanglement,} reducing the \changed[field]{$H_{\text{boost}}$} fluctuations back to their former value, and so returning us to the semiclassical regime. This is depicted in Figure~\ref{Figure: leaving and entering semiclassical regime}. The traditional GSL cannot be applied \changed{in a gauge-invariant way} to processes like these, because it involves the entropy $S_{\text{out}}$ of local fields, and so as discussed above only makes sense semiclassically.\footnote{\changed{To be clear, it is not necessarily the case that this process is \emph{inconsistent} with the traditional GSL. Indeed, if one assumes that the matter $m$ and $m'$ obeys the quantum null energy condition (QNEC) or quantum focussing conjecture (QFC)~\cite{Bousso:2015mna,Bousso:2015wca}, then this process \emph{does} satisfy the traditional GSL. The issue we are raising here is that a rigorously gauge-invariant derivation of the GSL has so far been lacking for this kind of process, precisely due to the violation of $\Delta A^\infty /4G_{\text{N}}\gg\Delta H_{\text{boost}}$. The QNEC and QFC are (quasi)local conditions, and hence (beyond a semiclassical regime) are subject to the same gauge ambiguities as local observables in gravity, as noted at the beginning of the paper. It would be interesting in future work to investigate gauge-invariant versions of the QNEC and QFC.}}\textsuperscript{,}\footnote{Another process worth mentioning is that of black hole evaporation. Eventually, the horizon area will go to zero, and so cannot fluctuate at all, suggesting that the semiclassical approximation will be rendered invalid. But this is naturally a somewhat more speculative scenario than the example depicted in Figure~\ref{Figure: leaving and entering semiclassical regime}, since it will be significantly influenced by non-perturbative effects, and thus goes beyond the scope of this paper.}

The point of this paper is to generalise the GSL so that it applies also to such processes. As a prerequisite there needs to be an extension of the definition of generalised entropy beyond the semiclassical regime, and there is some amount of ambiguity in how this should be done. But since the underlying von Neumann algebra is unambiguously defined outside the semiclassical regime, let us propose the following well-motivated extension (see also~\cite{kudlerflam2024generalizedblackholeentropy}): we just set $S_{\text{gen}}=S_{\text{vN}}$. This is an attractive definition because it allows us to maintain a statistical interpretation for $S_{\text{gen}}$. The key question is then whether the GSL applies to this definition -- in other words, \emph{is $S_{\text{vN}}$ non-decreasing in time, to all orders beyond the semiclassical regime, even during non-semiclassical processes like the one described in the previous paragraph?} As already mentioned, the answer turns out to be: \emph{yes, up to the free energy of a certain reference frame}. In an appropriate semiclassical regime, this free energy is positive, so one recovers the semiclassical GSL. But in more general states, the reference frame free energy can be negative, in which case $S_{\text{vN}}$ is allowed to decrease. We will shortly give some more detail on why this is nevertheless the appropriate form of the GSL beyond the semiclassical regime, in Section~\ref{Subsection: GSL with QRFs}.

Our proof will build on the techniques of~\cite{leutheusser2023causalconnectabilityquantumsystems,leutheusser2023emergenttimesholographicduality,Witten_2022,Chandrasekaran_2023,Chandrasekaran_2023b,kudlerflam2024generalizedblackholeentropy,Jensen_2023,Wall:2011hj,Faulkner:2024gst,Ali_Ahmad_2024,Klinger:2023tgi}, with a key extra ingredient, as follows. Those works only accounted for a single gauge symmetry (the boost), which was all that was required to regulate the von Neumann entropy. Here, we will also account for another gauge symmetry: null translations along the horizon. This, in a nutshell, is what allows our proof to work.

On a null surface, boosts and null translations may be performed along each null ray independently, but here, for simplicity, we will consider only transformations which act on all the null rays simultaneously. The principle of ultralocality~\cite{Wall:2011hj,Ciambelli:2024swv} suggests that it should be possible to extend our analysis beyond this restriction. We are also ignoring (again, for simplicity) various other gauge symmetries, such as diffeomorphisms which act tangentially to the cuts of the horizon. Note that the gauge symmetries which we \emph{do} account for are arguably more important than the ones we do not. As already mentioned, boosts are special because they correspond to modular flow. But null translations are also special, because they correspond to `half-sided modular translations'~\cite{Borchers:1991xk,Borchers:2000pv}. Thus, the gauge symmetries we account for are deeply tied to the modular structure of the theory.

Since we are gauging null translations, it is no longer physically meaningful to consider \emph{fixed} regions exterior to the horizon. This is because a null translation will move degrees of freedom in and out of these regions; since this is a gauge symmetry, it would not be clear which degrees of freedom ought to then be considered as part of the physical subsystem associated with the region. To solve this problem, one should instead define regions in a relational manner~\cite{DeWitt:1962cg,PhysRevLett.4.432,RevModPhys.33.510,PhysRev.124.274,Rovelli_1991}. This means that the location of the region should be specified covariantly in terms of dynamical degrees of freedom, so that the region itself is also moved around by gauge symmetries, and any interior degrees of freedom remain inside. In other words, we need a QRF specifying the location of the region. There are many ways one could construct such a QRF~\cite{Goeller:2022rsx,Carrozza:2022xut}; for example, one could imagine shooting in a particle such that the location at which it crosses the horizon specifies a cut $\mathcal{S}$ (see Figure~\ref{Figure: dynamical cuts from infalling particles}).\footnote{Beyond two spacetime dimensions, this should not be taken too seriously, since an infalling particle only really gives the location of a single point on the horizon. We are implicitly assuming that there is some additionally specified way of going from this single point to a full cut. Alternatively, the cut might be determined by a congruence of particles, or a brane, falling in from all directions. In any case, these are only particular examples of ways to construct dynamical cuts.} The exterior of the cut $\mathcal{S}$ is then a \emph{relationally} and \emph{covariantly} defined region, since null translations move the particle, and thus move $\mathcal{S}$. Cuts $\mathcal{S}$ of the horizon with this covariance property are an essential part of our proof. We will call them `dynamical cuts', and it is their free energy which contributes to the non-semiclassical GSL.

\changed{It seems like this covariance is all that `dynamical cuts' should be required to satisfy. From Section~\ref{Section: dynamical cuts} of the paper onwards, we will use a concrete but simplified construction, in which the location of each dynamical cut is represented by the position operator acting on an auxiliary $L^2(\RR)$ Hilbert space. This is similar to what was done in~\cite{Chandrasekaran_2023b}, where observers were introduced into de Sitter spacetime, each carrying clocks with auxiliary Hilbert spaces $L^2(\RR)$, whose position operators measure boost time. It is important to note that the setup of~\cite{Chandrasekaran_2023b}, as well as the one described in this paper, are only \emph{toy models}. The physical significance of these models is that they capture the essential properties required of clocks and dynamical cuts, and QRFs more generally. More realistic physical models would involve QRFs that are not auxiliary to the system under consideration, but rather constructed out of pre-existing degrees of freedom, such as the metric and matter fields. This has already been done to some extent for the observers of~\cite{Chandrasekaran_2023b}; for example instead of an auxiliary clock,~\cite{Chen:2024rpx} showed that one may use a slow-rolling inflaton field, and in the case of a horizon,~\cite{Faulkner:2024gst} demonstrated that one may use the gravitational Goldstone mode of the boost.}

\changed{We anticipate that similarly more realistic constructions may be applied to the case of dynamical cuts. As already noted, particles crossing the horizon provide an example, in which case the $L^2(\RR)$ Hilbert space of the dynamical cut can literally be interpreted as a space of position wavefunctions for the particle. But dynamical cuts could also be constructed in other ways. For example, one natural procedure is to use the metric to construct an affine time coordinate along the null horizon, and then specify a dynamical cut as being located at a particular affine time. In the quantum theory, the metric should be quantised, meaning affine time would be quantised, and thus so would be this dynamical cut -- but a full analysis of its properties would require a deeper understanding of the quantisation of the metric. It will be interesting to do this in future work, but for most of the paper we will just restrict to the simple $L^2(\RR)$ model.}

\subsection{GSL with QRFs}
\label{Subsection: GSL with QRFs}

The thermodynamical properties of a system depend non-trivially on the choice of reference frame used to make observations~\cite{Hoehn:2023ehz}, so one might not be surprised that the standard GSL requires modification before it can be applied it to the regions exterior to dynamical cuts. To get a better idea of what kind of modification needs to be done, consider the following rather heuristic derivation of the GSL, initially for fixed cuts.

Let us assume that the full system of the black hole and its exterior evolves unitarily, but that, from the point of view of the exterior region, the horizon appears as a thermal bath at the Hawking temperature $T$, and that there is a steady state of thermal equilibrium between the horizon and the exterior. Then one may show that the \emph{free energy}
\begin{equation}
  F_{\text{out}} = \expval{H_{\text{out}}} - TS_{\text{out}}
\end{equation}
of the exterior region is a non-increasing quantity~\cite{Sorkin:1986mg,Sorkin:1997ja,TenProofs,Landi:2020bsq}.\footnote{As explained in~\cite[Section 4.2]{TenProofs}, this is a special case of monotonicity of relative entropy.} Here, $H_{\text{out}}$ is the Hamiltonian with respect to which the state is thermal. In particular, for Killing horizons, $H_{\text{out}}$ is the generator of evolution along the Killing vector field generating the horizon, and one may show using the gravitational constraints that $T^{-1}H_{\text{out}}=-\frac{A}{4G_{\text{N}}} + H_\infty$, where $H_\infty$ is a contribution at spacelike infinity. Since we are assuming the full system evolves unitarily, we have that $H_\infty$ is conserved (because it is the Hamiltonian for the full system), so $T^{-1}\dv{t}\expval{H_{\text{out}}} = -\dv{t} \frac{\expval{A}}{4G_{\text{N}}}$ (this may also be argued using the first law of black hole mechanics~\cite{Sorkin:1997ja}). Combining this with $\dv{t}F_{\text{out}}\le0$, one has
\begin{equation}
  \dv{t}\qty(\frac{\expval{A}}{4G_{\text{N}}}+S_{\text{out}})\ge 0,
\end{equation}
which is the standard GSL.

Consider now instead two dynamical cuts $\mathcal{S}_a$ and $\mathcal{S}_b$ defined by infalling particles $a$ and $b$, such that $\mathcal{S}_a$ is to the future of $\mathcal{S}_b$, as depicted in Figure~\ref{Figure: dynamical cuts from infalling particles}. Let $\mathcal{N}_a,\,\mathcal{N}_b$ denote the regions exterior to $\mathcal{S}_a,\,\mathcal{S}_b$ respectively. We can again invoke that free energy is non-increasing. We shall consider the free energy of all the degrees of freedom in $\mathcal{N}_b$. By the time we evolve to $\mathcal{N}_a$, $a$ has left the exterior, crossing over the horizon into the black hole. But the free energy is non-increasing only when we evaluate it on a fixed set of degrees of freedom. Thus, we must include $a$ explicitly on the right-hand side of
\begin{equation}
  F(\mathcal{N}_b)\ge F(\mathcal{N}_a\cup a),
\end{equation}
where $F$ denotes the free energy of a given set of degrees of freedom. Next, we use that (this follows from subadditivity of entropy, and the assumption that interactions between the particle $a$ and the rest of the degrees of freedom are negligible):
\begin{equation}
  F(\mathcal{N}_a\cup a) \ge F(\mathcal{N}_a) + F(a)
  \label{Equation: free energy superadditivity}
\end{equation}
to obtain
\begin{equation}
  F(\mathcal{N}_b)-F(\mathcal{N}_a)\ge F(a).
\end{equation}
Now, using the same argument as for fixed cuts, the left-hand side of this is given by $T$ times the difference in generalised entropies of the two regions, so we have
\begin{equation}
  S_{\text{gen}}(\mathcal{N}_a)-S_{\text{gen}}(\mathcal{N}_b) \ge T^{-1}F(a) = T^{-1}\expval{H_a} - S(a),
  \label{Equation: modified GSL}
\end{equation}
where $H_a$ is the thermal Hamiltonian of $a$, and $S(a)$ is the entropy of $a$. This is a modified GSL, stating (roughly speaking) that the increase in the generalised entropy between two dynamical cuts is bounded below by the free energy (divided by $T$) of the degrees of freedom defining the later cut.

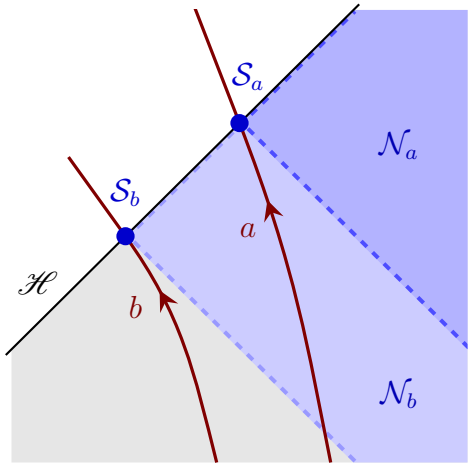
\begin{figure}
  \centering
  \begin{tikzpicture}[scale=1.5]
    \fill[black!10] (-1,-2) -- (-1,-1) -- (2,2) -- (3,2) -- (3,-2) -- (2,-2);

    \fill[blue!20] (0,0) -- (2,2) -- (3,2) -- (3,-2) -- (2,-2) -- (0,0);
    \begin{scope}
      \clip (0,0) -- (2,2) -- (3,2) -- (3,-2) -- (2,-2) -- (0,0);
      \draw[line width=3pt,blue!40,dashed] (2,-2) -- (0,0) -- (2,2);
    \end{scope}
    \fill[blue!30] (1,1) -- (2,2) -- (3,2) -- (3,-1) -- (1,1);
    \begin{scope}
      \clip (1,1) -- (2,2) -- (3,2) -- (3,-1) -- (1,1);
      \draw[line width=3pt,blue!70,dashed] (3,-1) -- (1,1) -- (2,2);
    \end{scope}
    \draw[thick] (-1.05,-1.05) -- (2.05,2.05);

    \tikzstyle particle=[red!50!black,very thick,postaction={decorate,decoration={markings,mark=at position .53 with {\arrow[scale=1.5]{stealth}}}}]
    \begin{scope}
      \clip (-1.5,-2) rectangle (3,2);
      \draw[particle] (0.8,-2) .. controls (0.6,-1.2) and (0.5,-0.7) .. (0,0) node[midway,left,shift={(-0.4,0.5)}] {\large$b$} -- (-0.5,0.7);
      \draw[particle, shift={(1,1)}] (0.8,-3) .. controls (0.6,-2) and (0.5,-1.3) .. (0,0) node[midway,left,shift={(-0.4,1)}] {\large$a$} -- (-0.5,1.3);
    \end{scope}

    \node[blue!70!black] at (2.4,-1.4) {\large$\mathcal{N}_b$};
    \node[blue!70!black] at (2.4,0.8) {\large$\mathcal{N}_a$};

    \fill[blue!80!black] (0,0) circle (0.08) node[above,shift={(0em,0.7em)}] {\large$\mathcal{S}_b$};
    \fill[blue!80!black] (1,1) circle (0.08) node[above,shift={(0.3em,0.8em)}] {\large$\mathcal{S}_a$};
    \node[above left] at (-0.5,-0.6) {\large$\mathscr{H}$};
  \end{tikzpicture}
  \caption{
    One natural example of a ``dynamical cut'' of a horizon $\mathscr{H}$ is given by the location at which it is crossed by an infalling particle. Here, the worldlines of two such particles $a$ and $b$ are shown, which intersect the horizon at the cuts $\mathcal{S}_a$ and $\mathcal{S}_b$ respectively. The associated exterior regions obey $\mathcal{N}_a\subset\mathcal{N}_b$, because $a$ falls through the horizon later than $b$.
  }
  \label{Figure: dynamical cuts from infalling particles}
\end{figure}

The argument just given makes many simplifying assumptions\footnote{A good account of many of them may be found in~\cite{TenProofs}.} and is too na\"ive to be taken with anything less than a hefty grain of salt (otherwise this paper would be somewhat shorter). Nevertheless, using much more rigorous methods,~\eqref{Equation: modified GSL} is the form of GSL that we will prove.

How should we interpret this modification of the GSL in the context of gravity and QRFs? As we will show later in the paper, one may write
\begin{equation}
  T^{-1}\expval{H_a} = \frac1{4G_{\text{N}}}\Delta_a A,
\end{equation}
where $\Delta_a A$ is the contribution to the change in area from $\mathcal{S}_b$ to $\mathcal{S}_a$ due to $a$ crossing the horizon (this is not the full change in area, because it excludes contributions from the \emph{fields} crossing the horizon). Moreover, if $a$ obeys the null energy condition, then $\Delta_a A$ is positive, from which~\eqref{Equation: modified GSL} implies
\begin{equation}
  S_{\text{gen}}(\mathcal{N}_a)\ge S_{\text{gen}}(\mathcal{N}_b) - S(a).
  \label{Equation: modified GSL NEC}
\end{equation}
The right-hand side of this inequality may be rewritten
\begin{equation}
  S_{\text{gen}}(\mathcal{N}_b) - S(a) = \frac{A(\mathcal{S}_b)}{4G_{\text{N}}} + S(\mathcal{N}_b|a),
\end{equation}
where $S(\mathcal{N}_b|a) = S(\mathcal{N}_b) - S(a)$ is `conditional entropy', which quantifies the amount of information needed to communicate the state of $\mathcal{N}_b$ to someone who already has access to the state of $a$~\cite{Horodecki:2005ehk}.\footnote{The importance of conditional entropies in extending the GSL beyond the semiclassical regime was also recently suggested in~\cite{ahmad2024emergentgeometryquantumprobability}.} In light of the fact that the state of $a$ determines the location of the later region $\mathcal{N}_a$, it makes some amount of intuitive sense that this conditional entropy would play a role in the GSL. Indeed, it is commensurate with the vague notion that one `needs to know the state of the later cut in order to compare its exterior region with that of the earlier cut'. It also perhaps motivates defining what one might call the `generalised conditional entropy'
\begin{equation}
  S_{\text{gen}}(\mathcal{N}_b|a) = \frac{1}{4G_{\text{N}}}\qty\Big(A(\mathcal{S}_b)+\Delta_a A) + S(\mathcal{N}_b|a),
\end{equation}
so that~\eqref{Equation: modified GSL} (which is strictly stronger than~\eqref{Equation: modified GSL NEC} under the present conditions, but also applies under general conditions) may be written
\begin{equation}
  S_{\text{gen}}(\mathcal{N}_a)\ge S_{\text{gen}}(\mathcal{N}_b|a).
\end{equation}
This holds without requiring $a$ to obey the null energy condition.

If the particle $a$ obeys the null energy condition \emph{and} is unentangled from other degrees of freedom (so $S(a)=0$), then~\eqref{Equation: modified GSL NEC} implies the standard GSL
\begin{equation}
  S_{\text{gen}}(\mathcal{N}_a) \ge S_{\text{gen}}(\mathcal{N}_b).
  \label{Equation: standard GSL}
\end{equation}
These conditions are appropriate for a semiclassical limit in which the location of the dynamical cut $\mathcal{S}_a$ may be treated as a classical variable. On the other hand, if neither of these conditions are obeyed, then~\eqref{Equation: modified GSL} still implies the standard GSL~\eqref{Equation: standard GSL} -- so long as the particle $a$ obeys a form of the \emph{quantum} null energy condition~\cite{Bousso:2015mna,Bousso:2015wca}. But it should be noted that in this case~\eqref{Equation: modified GSL} is still a strictly stronger inequality than the standard GSL.

One more comment on our modified GSL~\eqref{Equation: modified GSL} is in order. Suppose we rewrite it as follows:
\begin{equation}
  \frac{1}{4G_{\text{N}}}\qty(A(\mathcal{S}_b)+\Delta_a A-A(\mathcal{S}_a)) + S(\mathcal{N}_b) \le S(\mathcal{N}_a) + S(a).
  \label{Equation: modified GSL subadditivity}
\end{equation}
By the subadditivity of entropy, the right-hand side of~\eqref{Equation: modified GSL subadditivity} is also bounded below by $S(\mathcal{N}_a\cup a)$. In a sense, the modified GSL that we prove is a mixture of the ordinary GSL with this subadditivity between $\mathcal{N}_a$ and $a$. One might wonder whether it is possible to replace the right-hand side of~\eqref{Equation: modified GSL subadditivity} with $S(\mathcal{N}_a\cup a)$ to eliminate this subadditivity and get a stronger inequality, but it seems unlikely that this will work in any meaningful way. This is because the combined system $\mathcal{N}_a\cup a$ is essentially equivalent to $\mathcal{N}_b$; indeed, operators acting on $a$ can move it to $\mathcal{S}_b$, and then we have $\mathcal{N}_a=\mathcal{N}_b$. Thus, all observables in $\mathcal{N}_b$ are also observables in $\mathcal{N}_a\cup a$, so comparing the von Neumann entropy of $\mathcal{N}_b$ with the von Neumann entropy of $\mathcal{N}_a\cup a$ will not provide us with any useful insights.\footnote{This is one reason to be sceptical of the heuristic thermodynamical argument above, where we used subadditivity~\eqref{Equation: free energy superadditivity}.} For this reason, the splitting of $\mathcal{N}_a\cup a$ into $\mathcal{N}_a$ and $a$ is an essential part of the story.

We have so far mostly described the dynamical cuts $\mathcal{S}_a,\mathcal{S}_b$ as being located at the intersections of particle worldlines with the horizon. But this is far from the only way one could construct a dynamical cut (indeed, see~\cite{Goeller:2022rsx} for an extensive discussion on the large array of available dynamical frames in gravity). Another possibility could be to specify a cut as the location at which a certain field takes on a certain value. Yet another possibility could be to specify a cut in such a way that its exterior region agrees with the timelike envelope of an accelerating observer who remains just outside the horizon. The key point is that the cuts one uses should be quantum degrees of freedom which transform covariantly under gauge symmetries. We will implement this in the simplest possible way, which to some degree remains agnostic of how exactly a given dynamical cut is constructed.

More precisely, we will model each cut as having a Hilbert space $L^2(\RR)$. The position operator on this Hilbert space gives the location of the cut on the horizon, and is taken to transform appropriately under boosts and null translations. We will then formulate a von Neumann algebra $\mathcal{A}_{\mathcal{N}_a}^G$ of gauge-invariant operators acting in the region $\mathcal{N}_a$ exterior to each cut, and a von Neumann algebra $\mathcal{A}^G_{\mathcal{S}_a\in\mathcal{N}_b}$ of operators acting on a cut $\mathcal{S}_a$ located inside the exterior region $\mathcal{N}_b$ of another cut $\mathcal{S}_b$. Then, we will prove
\begin{equation}
  S_{\text{vN}}(\mathcal{A}_{\mathcal{N}_a}^G)\ge S_{\text{vN}}(\mathcal{A}_{\mathcal{N}_b}^G) + F(\mathcal{S}_a\in\mathcal{N}_b),
  \label{Equation: exact GSL}
\end{equation}
where
\begin{equation}
  F(\mathcal{S}_a\in\mathcal{N}_b) = \expval*{H_{\mathcal{S}_a|\mathcal{S}_b}} - S_{\text{vN}}(\mathcal{A}_{\mathcal{S}_a\in\mathcal{N}_b}^G),
\end{equation}
with $H_{\mathcal{S}_a|\mathcal{S}_b}$ the generator of a boost of $\mathcal{S}_a$ around $\mathcal{S}_b$ (here and in the rest of the paper we work in units such that $T=1$). The notation $S_{\text{vN}}(\mathcal{A})$ denotes the von Neumann entropy of a given state with respect to the algebra $\mathcal{A}$.

The inequality~\eqref{Equation: exact GSL} holds exactly, without any need for a UV cutoff or semiclassical limit. But if one does impose these, we will show that $S_{\text{vN}}(\mathcal{A}_{\mathcal{N}_a}^G)$ reduces to the usual semiclassical generalised entropy of $\mathcal{N}_a$, and~\eqref{Equation: exact GSL} reduces to~\eqref{Equation: modified GSL}. As explained above, we take $S_{\text{vN}}(\mathcal{A}_{\mathcal{N}_a}^G)$ to be the \emph{definition} of $S_{\text{gen}}(\mathcal{N}_a)$ beyond the semiclassical limit. Therefore,~\eqref{Equation: exact GSL} is a genuine exact extension of the GSL beyond the semiclassical/UV cutoff regimes.

Although the nature of \emph{non}-perturbative quantum gravity remains mysterious, our result gives evidence that a relational picture involving quantum reference frames might suffice for a full understanding of the statistical and thermodynamical properties of perturbative gravitational degrees of freedom.

The rest of the paper gives the details of our proof of~\eqref{Equation: exact GSL}, and proceeds as follows. In Section~\ref{Section: fixed cuts} we review relevant details of the derivation of the semiclassical GSL given in~\cite{Faulkner:2024gst} in terms of the entropies of algebras outside fixed cuts, and explain why the inability to extend that derivation beyond the semiclassical regime stems from not accounting for null translation gauge invariance. To remedy this, in Section~\ref{Section: dynamical cuts} we construct the von Neumann algebras of the regions exterior to \emph{dynamical cuts}, show that these algebras have well-defined density operators and von Neumann entropies, and describe their representation on a physical Hilbert space. Then, in Section~\ref{Section: GSL}, we prove that these algebras obey the GSL~\eqref{Equation: exact GSL}, which follows from conditioning on an ordering of the cuts, and an application of the monotonicity of relative entropy. In Section~\ref{Section: semiclassical}, we demonstrate how~\eqref{Equation: exact GSL} reduces to the ordinary semiclassical GSL, in an appropriate semiclassical limit. We end in the Conclusion with some discussion on the limitations of our approach, and brief speculation on how it may extend to the non-perturbative regime.

\section{Algebras outside fixed cuts}
\label{Section: fixed cuts}

We shall begin by recapping what happens when we do not use dynamical cuts, and in particular why failing to properly account for null translation invariance prevents one from extending the GSL beyond the semiclassical limit.

Let $v$ be a future-directed null coordinate along the horizon $\mathscr{H}$. For concreteness, we shall assume that $v$ is an affine parameter along each null ray. Other options for $v$ include the `dressing time' of~\cite{Ciambelli:2023mir}. Consider a fixed cut $\mathcal{S}(v)$ at null time $v$, let $\mathcal{N}(v)$ denote the spacetime region exterior to $\mathcal{S}(v)$, and let $\mathcal{A}_{\text{QFT}}(v)$ denote the von Neumann algebra of field operators (including those acting on perturbative gravitons) with support in $\mathcal{N}(v)$; this algebra is Type III$_1$~\cite{Witten_2018,Sorce:2023fdx,Buchholz:1986bg,Yngvason_2005}, and for simplicity we take it to be a factor (meaning it has a trivial center).

We will in this paper consider spacetimes for which the horizon $\mathscr{H}$ is a Cauchy surface, so that $\mathcal{A}_{\text{QFT}}(v)$ is generated by field operators on the horizon with support to the future of $v$. This class of spacetimes includes, for example, black hole horizons with AdS asymptotics. But it does not include other important cases such as the asymptotically flat Schwarzschild black hole, where one must additionally include in $\mathcal{A}_{\text{QFT}}(v)$ field operators with support at future null infinity $\mathscr{I}^+$ and timelike infinity $i^+$. Although we expect that our results may be extended to such cases using the techniques described in~\cite[Section 4]{Faulkner:2024gst}, we choose not to explicitly do so here, since the way in which we use dynamical cuts should be more or less independent of the inclusion of field operators at $\mathscr{I}^+\cup i^+$.

The generator of null translations $v\to v+s$ of fields on the horizon is the integrated averaged null energy
\begin{equation}
  P = \int_{\mathscr{H}} \eta T_{vv},
\end{equation}
where $T_{vv}$ are the components along $\pdv{v}$ of the field stress-energy tensor, and $\eta=\iota_{\partial_v}\epsilon$ is a volume form on $\mathscr{H}$, with $\epsilon$ the spacetime volume form. The action generated by $P$ maps the algebras outside different cuts $\mathcal{S}(v)$ into each other:
\begin{equation}
  \mathcal{A}_{\text{QFT}}(v+s) = e^{isP}\mathcal{A}_{\text{QFT}}(v)e^{-isP}.
  \label{Equation: field algebra null translation}
\end{equation}
We assume two additional conditions hold (this assumption applies in a large class of QFTs). First, that $P\ge 0$, i.e.\ $P$ is a non-negative operator; this follows from the averaged null energy condition (ANEC)~\cite{Roman,Borde,Roman2,Faulkner_2016,Hartman_2017,Kravchuk_2018}. And second, that there is a `vacuum state' $\ket{\Omega}\in\mathcal{H}$ which satisfies $P\ket{\Omega}=0$, and which is cyclic and separating for $\mathcal{A}_{\text{QFT}}(v)$ (note if it is cyclic and separating for one $v$, it is cyclic and separating for all $v$, by~\eqref{Equation: field algebra null translation}). Many more details on the scope of these assumptions (and their motivations) may be found in~\cite{Wall:2011hj,Faulkner:2024gst}. It turns out these conditions are sufficient to apply the `half-sided modular translation' machinery of Borchers~\cite{Borchers:1991xk,Borchers:2000pv} and Wiesbrock~\cite{Wiesbrock:1992mg} to find that the modular Hamiltonian of $\ket{\Omega}$ for the algebra $\mathcal{A}_{\text{QFT}}(v)$ is given by
\begin{equation}
  H_v = H - 2\pi v P,\qq{where}
  H = 2\pi \int_{\mathscr{H}} \eta v T_{vv}.
\end{equation}
This operator generates an $\mathcal{S}(v)$-preserving boost. As discussed in~\cite{Faulkner:2024gst}, the overall gauge constraint for this boost (assuming minimally coupled matter) may be written in the form
\begin{equation}
  \mathcal{C}_v = H_v + \hat{q} + \mathcal{Q}_v,
\end{equation}
where $\hat{q}=-\frac{A^{(2)}_\infty}{4G_{\text{N}}}$ is the second order fluctuation of the area of the horizon in the asymptotic future $v\to\infty$, and $\mathcal{Q}_v$ is a contribution in the asymptotic past $v\to-\infty$ whose exact form we need not worry about (since it commutes with all operators of concern in $\mathcal{N}(v)$).

As in~\cite{Faulkner:2024gst}, we model $\hat{q}$ as the position operator on a Hilbert space $\mathcal{H}_C=L^2(\RR)$. The subscript ${}_C$ is supposed to reflect the fact that the conjugate momentum operator $\hat{p}$ now plays the role of a \emph{clock}, measuring time along the boost generated by $\hat{q}$. This is an \emph{ideal quantum reference frame}, meaning it can have states of arbitrarily precise times. One could also model this clock with a \emph{non-ideal} quantum reference frame by restricting the spectrum of $\hat{q}$~\cite{DeVuyst:2024pop,DEHKlong}.

Consider the algebra of operators in $\mathcal{N}(v)$, which before imposing gauge-invariance is the tensor product of $\mathcal{A}_{\text{QFT}}(v)$ and $\mathcal{B}(\mathcal{H}_C)$ (since the clock at $v\to \infty$ is inside $\mathcal{N}(v)$).\footnote{Here we choose not to include other degrees of freedom at $v\to\infty$. See the Conclusion for discussion on this point.} Imposing invariance under the boost gauge symmetry (i.e.\ restricting to operators commuting with $\mathcal{C}_v$, denoted via the superscript ${}^{\mathcal{C}_v}$ below), one finds the algebra
\begin{equation}
  (\mathcal{A}_{\text{QFT}}(v)\otimes\mathcal{B}(\mathcal{H}_C))^{\mathcal{C}_v} = \qty{e^{iH_v\hat{p}}ae^{-iH_v\hat{p}},\hat{q}\Bigm\vert a\in \mathcal{A}_{\text{QFT}}(v)}'' = \mathcal{A}_{\text{QFT}}(v)\rtimes_\alpha \RR.
\end{equation}
The double commutant $''$ produces the von Neumann algebra generated by the operators in the braces. In the language of quantum reference frames, $e^{iH_v\hat{p}}ae^{-iH_v\hat{p}}$ is a `dressed' or `relational' observable, acting with $a$ at the time specified by the clock, while $\hat{q}$ is the generator of a `reorientation' of the clock (i.e.\ a change in the time it reads). Von Neumann algebras of this form have been well-studied since the `70s~\cite{Takesaki,Connes,Connes:1994yd}; they are called `crossed product algebras', and the right-hand side above is the notation for this, with $\alpha$ being the group action of $\RR$ on $\mathcal{A}_{\text{QFT}}(v)$ by boosts:
\begin{equation}
  \alpha_t: a\mapsto  e^{i\mathcal{C}_v t}a e^{-i\mathcal{C}_v t} =  e^{iH_v t} a e^{-iH_v t}.
\end{equation}
More generally, one can construct the crossed product of any algebra by any group of its automorphisms. But the particular crossed product described here is special, because $\alpha$ corresponds to the \emph{modular flow} of a state on $\mathcal{A}_{\text{QFT}}(v)$ (in particular, the vacuum $\ket{\Omega}$). It is a theorem that the crossed product of a Type III algebra by modular flow is Type II~\cite{TheoryOfOperatorAlgebrasI}. Thus, $(\mathcal{A}_{\text{QFT}}(v)\otimes\mathcal{B}(\mathcal{H}_C))^{\mathcal{C}_v}$ is Type II, which is physically significant because it means states on this algebra have well-defined density operators and von Neumann entropies. In a `semiclassical' limit in which the state is such that fluctuations of the clock time $\hat{p}$ are suppressed relative to fluctuations of the fields, the von Neumann entropy of such an algebra agrees\footnote{To show precise agreement, one imposes a UV cutoff on the fields. But such a UV cutoff is not required for either the von Neumann entropy or the generalised entropy to be well-defined.} with the generalised entropy of the region $\mathcal{N}(s)$~\cite{Chandrasekaran_2023b,kudlerflam2024generalizedblackholeentropy,Jensen_2023,Ali_Ahmad_2024,Klinger:2023tgi}.

\changed{There is an important technical point to be made here. For Type II factors, the von Neumann entropy is only defined up to the addition of some state-independent constant. One can fix this constant by requiring the entropy of a certain state to take a certain value. Alternatively, one does not need to fix this constant if one only discusses \emph{differences} in entropies, which are unambiguous because the constant cancels. So a more precise statement of the above is that, in the semiclassical limit, differences in the generalised entropy agree with differences in the von Neumann entropy for $(\mathcal{A}_{\text{QFT}}(v)\otimes\mathcal{B}(\mathcal{H}_C))^{\mathcal{C}_v}$.}

This was exploited in~\cite{Faulkner:2024gst} to demonstrate a form of the generalised second law (GSL) that holds semiclassically. In particular, if one considers two cuts $\mathcal{S}(v),\,\mathcal{S}(v')$ and their exterior regions $\mathcal{N}(v),\,\mathcal{N}(v')$ at different times $v<v'$, one may construct the corresponding algebras
\begin{equation}
  (\mathcal{A}_{\text{QFT}}(v)\otimes\mathcal{B}(\mathcal{H}_C))^{\mathcal{C}_v}, \qquad (\mathcal{A}_{\text{QFT}}(v')\otimes\mathcal{B}(\mathcal{H}_C))^{\mathcal{C}_{v'}}.
  \label{Equation: non-isotonic algebras}
\end{equation}
One may compare the von Neumann entropies of these two algebras, for a given fixed state. Taking a semiclassical limit,~\cite{Faulkner:2024gst} showed (using the monotonicity of QFT relative entropy, and the gravitational constraints) that the von Neumann entropy of the latter algebra is greater than that of the former, in what is essentially a statement of the GSL. \changed{Here, we are comparing the entropies of two Type II factors. These entropies are only defined up to the addition of some state-independent constants, and the constants can be different for each algebra, which means that the \emph{relative} normalisation of the entropies is something that we must fix in order for this comparison to be meaningful. In~\cite{Faulkner:2024gst} this relative normalisation was fixed by requiring that the entropies of the two algebras are equal whenever the QFT is in the vacuum state $\ket{\Omega}$.}\footnote{\changed{In any case, it is worth emphasising that the physical content of the entropy comparison is independent of this normalisation. Regardless of which normalisation is chosen, one always has the same fundamental inequality -- it is just that the various terms that appear in the inequality are labelled in different ways. After imposing a UV cutoff, one gets the same standard GSL, regardless of these choices of convention. We will comment on these points in more detail in Subsection~\ref{Subsection: normalisation ambiguities}.}}

\subsection{Beyond the semiclassical regime?}

In~\cite{Faulkner:2024gst}, some corrections to the semiclassical limit were investigated, but it was not conclusive whether the GSL would hold beyond this regime. The main obstruction to going beyond the semiclassical limit is that the latter algebra in~\eqref{Equation: non-isotonic algebras} is not a subalgebra of the former algebra, despite the corresponding spacetime region for the latter being a subset of the former: $\mathcal{N}(v')\subset\mathcal{N}(v)$. In algebraic quantum field theory, one usually assumes that for any two spacetime regions obeying $\mathcal{U}\subseteq \mathcal{V}$ the corresponding algebras obey $\mathcal{A}_{\mathcal{U}}\subseteq\mathcal{A}_{\mathcal{V}}$. This property is called `isotony'; the algebras constructed here \emph{do not} obey it. This is physically and intuitively unreasonable, since any operations or observations which can be carried out in the smaller region $\mathcal{N}(v')$ should also be capable of being carried out in the larger region $\mathcal{N}(v)$. If one did have isotony, one could exploit quantum information formulae (such as the monotonicity of relative entropy for subalgebras) in order to attempt to demonstrate a version of the GSL, without any assumption on the state (semiclassical or otherwise). Indeed, we will do so later in the paper.

The reason the algebras constructed above do not obey isotony is as follows. One does have isotony for the kinematical algebras (i.e.\ before imposing boost invariance):
\begin{equation}
  \mathcal{A}_{\text{QFT}}(v')\otimes\mathcal{B}(\mathcal{H}_C) \subseteq \mathcal{A}_{\text{QFT}}(v)\otimes\mathcal{B}(\mathcal{H}_C), \qq{since}\mathcal{A}_{\text{QFT}}(v') \subset \mathcal{A}_{\text{QFT}}(v).
  \label{Equation: kinematical isotony}
\end{equation}
In constructing~\eqref{Equation: non-isotonic algebras}, we imposed invariance under the boosts of the respective regions -- but the two boosts are different, as depicted in Figure~\ref{Figure: different boosts}. An operator invariant under one boost need not be invariant under another boost, and this is what breaks isotony.

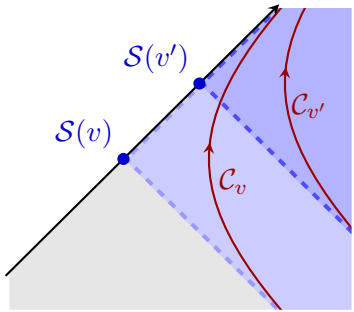
\begin{figure}
  \centering
  \begin{tikzpicture}[scale=1]
    \fill[black!10] (-1.5,-2) -- (-1.5,-1.5) -- (2,2) -- (3,2) -- (3,-2) -- (2,-2);

    \fill[blue!20] (0,0) -- (2,2) -- (3,2) -- (3,-2) -- (2,-2) -- (0,0);
    \begin{scope}
      \clip (0,0) -- (2,2) -- (3,2) -- (3,-2) -- (2,-2) -- (0,0);
      \draw[line width=3pt,blue!40,dashed] (2,-2) -- (0,0) -- (2,2);
    \end{scope}
    \fill[blue!30] (1,1) -- (2,2) -- (3,2) -- (3,-1) -- (1,1);
    \begin{scope}
      \clip (1,1) -- (2,2) -- (3,2) -- (3,-1) -- (1,1);
      \draw[line width=3pt,blue!70,dashed] (3,-1) -- (1,1) -- (2,2);
    \end{scope}
    \begin{scope}
      \clip (0,0) -- (2,2) -- (3,2) -- (3,-2) -- (2,-2) -- (0,0);
      \tikzstyle flowline=[red!60!black,thick,postaction={decorate,decoration={markings,mark=at position .53 with {\arrow{stealth}}}}]
      \draw[flowline] (3,-3) .. controls (.5,-.5) and (.5,.5) .. (3,3);
      \draw[flowline,shift={(1,1)}] (3,-3) .. controls (.5,-.5) and (.5,.5) .. (3,3);
    \end{scope}

    \node[red!60!black] at (1.45,-0.3) {$\mathcal{C}_v$};
    \node[red!60!black,shift={(1,1)}] at (1.45,-0.3) {$\mathcal{C}_{v'}$};

    \draw[thick,-stealth] (-1.55,-1.55) -- (2.05,2.05);

    \fill[blue!80!black] (0,0) circle (0.08) node[above left] {$\mathcal{S}(v)$};
    \fill[blue!80!black] (1,1) circle (0.08) node[above left] {$\mathcal{S}(v')$};

  \end{tikzpicture}
  \caption{The boosts associated with different cuts $\mathcal{S}(v)$ and $\mathcal{S}(v')$ do not preserve the corresponding exterior regions. Also, operators commuting with the generator $\mathcal{C}_v$ of one boost do not necessarily commute with the generator $\mathcal{C}_{v'}$ of the other.}
  \label{Figure: different boosts}
\end{figure}

Of course, in gravity all boosts are gauge symmetries, so we should really (at a minimum) be looking at operators which are invariant under both boosts. If we do so, then isotony manifestly follows from~\eqref{Equation: kinematical isotony}:
\begin{equation}
  (\mathcal{A}_{\text{QFT}}(v')\otimes\mathcal{B}(\mathcal{H}_C))^{\mathcal{C}_v,\mathcal{C}_{v'}} \subseteq (\mathcal{A}_{\text{QFT}}(v)\otimes\mathcal{B}(\mathcal{H}_C))^{\mathcal{C}_v,\mathcal{C}_{v'}}.
  \label{Equation: physical isotony?}
\end{equation}
One could then attempt to find entropy inequalities for these algebras. However, this is unsatisfactory for a different reason: these two algebras are in fact exactly equal to one another. The reason is that the difference $\mathcal{C}_{v'}-\mathcal{C}_v$ is proportional to $P-\Theta$, where, as above, $P$ is the integrated averaged null energy of the fields, while $\Theta$ is the expansion of the horizon in the asymptotic past $v\to \infty$; this combination $P-\Theta$ is the constraint generating null translations. Since $\Theta$ commutes with $\mathcal{A}_{\text{QFT}}(v)$, it follows that
\begin{equation}
  (\mathcal{A}_{\text{QFT}}(v)\otimes\mathcal{B}(\mathcal{H}_C))^{\mathcal{C}_v,\mathcal{C}_{v'}} = (\mathcal{A}_{\text{QFT}}(v)\otimes\mathcal{B}(\mathcal{H}_C))^{\mathcal{C}_v,P} = (\qty({\mathcal{A}_{\text{QFT}}(v)})^P\otimes\mathcal{B}(\mathcal{H}_C))^{\mathcal{C}_v}.
\end{equation}
But, as shown in~\cite{Borchers2009InstituteFM},
\begin{equation}
  \qty(\mathcal{A}_{\text{QFT}}(v))^P = \mathcal{A}_{\text{QFT}}(\infty) := \bigcap_{v''>v} \mathcal{A}_{\text{QFT}}(v''),
\end{equation}
i.e.\ the only field operators invariant under $P$ are those located in the asymptotic future $v\to\infty$. So
\begin{equation}
  (\mathcal{A}_{\text{QFT}}(v)\otimes\mathcal{B}(\mathcal{H}_C))^{\mathcal{C}_v,\mathcal{C}_{v'}} = ({\mathcal{A}_{\text{QFT}}(\infty)}\otimes\mathcal{B}(\mathcal{H}_C))^{\mathcal{C}_v},
\end{equation}
and by similar reasoning,
\begin{equation}
  (\mathcal{A}_{\text{QFT}}(v')\otimes\mathcal{B}(\mathcal{H}_C))^{\mathcal{C}_v,\mathcal{C}_{v'}} = (\mathcal{A}_{\text{QFT}}(v')\otimes\mathcal{B}(\mathcal{H}_C))^{\mathcal{C}_v,P} = ({\mathcal{A}_{\text{QFT}}(\infty)}\otimes\mathcal{B}(\mathcal{H}_C))^{\mathcal{C}_v}.
\end{equation}
Since the algebras in~\eqref{Equation: physical isotony?} are equivalent, their von Neumann entropies in any state will be the same, so we cannot hope to use them to deduce anything like the GSL.

Physically speaking, the reason this happened is that the boosts generated by $\mathcal{C}_v,\,\mathcal{C}_{v'}$ do not preserve the regions $\mathcal{N}(v'),\,\mathcal{N}(v)$, respectively. Thus, thinking of these as fixed spacetime subregions (as we have been doing in this section) is not compatible with gauge-invariance, and hence should not be expected to give physically meaningful subsystems.

An alternative (and better) approach, which we use in the rest of the paper, is relational. In particular, if one gives a \emph{dynamical} and \emph{covariant} definition of the regions outside the horizon, in terms of degrees of freedom that transform in the appropriate way, then one will get well-defined physical subsystems. This is not at all a radical thing to do -- it has been recognised many times~\cite{DeWitt:1962cg,PhysRevLett.4.432,RevModPhys.33.510,PhysRev.124.274,Rovelli_1991,Goeller:2022rsx,Carrozza:2021gju,Carrozza:2022xut} that a complete formulation of local observables in quantum gravity should be relational in nature. Indeed, the crossed product algebra construction with the boost clock $\mathcal{H}_C$ considered above is one manifestation of this, with the clock being a degree of freedom relative to which we observe other degrees of freedom. The remainder of the paper generalises this by introducing a degree of freedom (a `dynamical cut') relative to which we can define spacetime regions exterior to the horizon.

\section{Algebras outside dynamical cuts}
\label{Section: dynamical cuts}

In the previous section, we only considered one-dimensional gauge groups of boosts. We will now consider an enlarged \emph{two}-dimensional gauge group, given by the semidirect product of boosts $v\mapsto e^{2\pi t}v$ with null translations $v\mapsto v + s$ (depicted in Figure~\ref{Figure: gauge group}):
\begin{equation}
  G = \qty\Big{v\mapsto e^{2\pi t}v + s\Bigm\vert t\in\RR,\,s\in\RR}.
\end{equation}
This group contains boosts around any constant $v$ cut. It has a unitary representation on the fields given by
\begin{equation}
  U_{\text{QFT}}(s,t) = e^{-isP}e^{-itH}.
\end{equation}
Since $\mathscr{H}$ is a horizon, we can take the asymptotic future area to be invariant under null translations. Thus, the unitary representation of $G$ on $\mathcal{H}_C$ is given by
\begin{equation}
  U_C(s,t) = e^{-it\hat{q}}.
\end{equation}
As described in the previous section, to properly account for this larger gauge group in the context of spacetime regions exterior to a horizon, we need to define such regions relationally. This is the objective of this section.

\begin{figure}
  \centering
  \begin{tikzpicture}[scale=1]
    \fill[black!10] (-1.5,-2) -- (-1.5,-1.5) -- (2,2) -- (3,2) -- (3,-2) -- (2,-2);

    \begin{scope}
      \clip (-1.5,-2) -- (-1.5,-1.5) -- (2,2) -- (3,2) -- (3,-2) -- (2,-2);
      \draw[dashed,gray] (0,0) -- (3,-3);
      \tikzstyle flowline=[thick,postaction={decorate,decoration={markings,mark=at position .53 with {\arrow[scale=1.3]{stealth}}}}]
      \draw[flowline,red!70!black] (3,-3) .. controls (.5,-.5) and (.5,.5) .. (3,3);
      \draw[flowline,red!70!black] (3,-3) .. controls (.5,-.5) and (-.5,-.5) .. (-3,-3);
      \draw[flowline,blue!70!black,shift={(0,-0.4)}] (-3,-3) -- (3,3);
    \end{scope}

    \draw[thick,-stealth] (-1.55,-1.55) -- (2.05,2.05);
    \node[above left] at (-0.7,-0.8) {$\mathscr{H}$};
    \node[above left] at (0.8,0.7) {$v$};
  \end{tikzpicture}
  \caption{We will account for a gauge group consisting of both boosts {\color{red!70!black}$v\to e^{2\pi t}v$} and null translations {\color{blue!70!black}$v \to v+s$}. The dashed line represents $v=0$, but this group contains boosts around any fixed $v$.}
  \label{Figure: gauge group}
\end{figure}
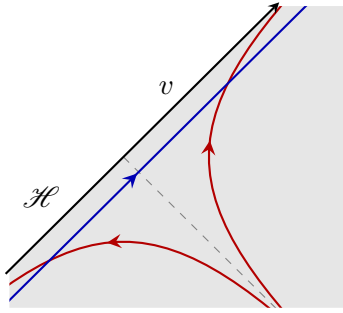

\subsection{Including a single dynamical cut}

We now introduce a \emph{dynamical cut} $\mathcal{S}$, which has Hilbert space $\mathcal{H}_{\mathcal{S}}= L^2(\RR)$. The position operator $\hat{v}$ on this Hilbert space corresponds to the location of the cut $\mathcal{S}$ in the coordinate $v$. There is a unitary representation of the gauge group $G$ on the cut, given in terms of the position eigenstates by
\begin{equation}
  U_{\mathcal{S}}(s,t) \ket{v} = e^{\pi t}\ket*{e^{2\pi t}v + s}
\end{equation}
(the factor of $e^{\pi t}$ is required for unitarity). This may be written as
\begin{equation}
  U_{\mathcal{S}}(s,t) = e^{-is\hat{k}} e^{-it\hat{d}}, \qq{where} \hat{d} = \pi\pb*{\hat{v}}{\hat{k}} = \pi(\hat{v}\hat{k}+\hat{k}\hat{v}),
\end{equation}
and $\hat{k}$ is the conjugate momentum to $\hat{v}$. The dynamical cut is an \emph{ideal} quantum reference frame; we comment on the non-ideal case in the Conclusion.

Let $\mathcal{N}$ denote the region exterior to $\mathcal{S}$. This region is not fixed in spacetime; rather its location depends on the state of $\mathcal{S}$. What von Neumann algebra of operators should we assign to the subsystem associated with this dynamical spacetime region?

To answer this, it helps to first address a class of operators we should \emph{not} include in this algebra: those which move $\mathcal{S}$. Indeed, suppose an operator $a$ moves $\mathcal{S}$ to the past. This will extend the region $\mathcal{N}$ into the past, and there will be some operators with support in the new $\mathcal{N}$ which did not have support in the old $\mathcal{N}$. Let $b$ denote such an operator. Then $a^\dagger b a$ is an operator which moves $\mathcal{S}$ to the past, acts in the enlarged $\mathcal{N}$, and then moves $\mathcal{S}$ back to its original location; overall, this operator acts \emph{outside} $\mathcal{N}$. Similarly, if $a$ moves $\mathcal{S}$ to the future, then $a b a^\dagger$ will act outside $\mathcal{N}$. We should not allow operators of this kind, since they are in conflict with our original intention to define the algebra of operators \emph{in} $\mathcal{N}$. Thus, in either case $a,a^\dagger$ must be excluded from the algebra.

So all operators in the algebra must commute with $\hat{v}$, and from this observation it is fairly clear how to proceed. For the fields, if the cut $\mathcal{S}$ is in a position eigenstate $\ket{v}$, then we should be able to act with any $a\in\mathcal{A}_{\text{QFT}}(v)$. This means that the algebra should contain any operators of the form $a \otimes \dyad{v}$, for $v\in\RR$ and $a\in\mathcal{A}_{\text{QFT}}(v)$.\footnote{More precisely, such operators are \emph{affiliated} with the algebra.} The span of all such operators may be written as a direct integral
\begin{equation}
  \mathcal{A}_{\text{QFT}}(\hat{v}) := \int^\oplus_\RR \dd{v} \mathcal{A}_{\text{QFT}}(v)\otimes\dyad{v} = \qty\Big{a\otimes\dyad{v}\Bigm\vert v\in\RR,\, a\in\mathcal{A}_{\text{QFT}}(v)}''.
  \label{Equation: dressed QFT algebras}
\end{equation}
Regardless of the location of $\mathcal{S}$, the exterior region $\mathcal{N}$ also always contains the asymptotic future of the horizon, so we should additionally include all operators acting on $\mathcal{H}_C$. Thus, the overall algebra of operators in $\mathcal{N}$ (before imposing gauge invariance) takes the form
\begin{equation}
  \mathcal{A}_{\mathcal{N}}=\mathcal{A}_{\text{QFT}}(\hat{v})\otimes \mathcal{B}(\mathcal{H}_C).
\end{equation}
It should be noted that this algebra has a non-trivial center consisting of functions of $\hat{v}$. Thus, unlike the case of a fixed cut, this algebra is not a factor. However, upon imposing gauge invariance, this center becomes trivial; indeed, only constant functions of $\hat{v}$ are null translation invariant. So the gauge-invariant algebra will be a factor.

Let us now impose gauge-invariance. One may do this by first imposing null translation invariance, and then imposing boost invariance (the other way around also works and is similar). The boost clock $\mathcal{H}_C$ is already null translation invariant, so we only need to impose null translation invariance on $\mathcal{A}_{\text{QFT}}(\hat{v})$. A general element of this algebra may be written $A=\int_{-\infty}^\infty \dd{v} a(v)\otimes\dyad{v}$ with $a(v)\in\mathcal{A}_{\text{QFT}}(v)$, and null translation invariance $A = e^{is(P+\hat{k})}A e^{-is(P+\hat{k})}$ is satisfied if and only if $a(v)= e^{ivP}a(0)e^{-ivP}$. Substituting this into the definition of $A$, one finds $A = e^{i\hat{v}P} a(0) e^{-i\hat{v}P}$. Thus, the null translation invariant subalgebra is
\begin{equation}
  e^{i\hat{v}P}\mathcal{A}_{\text{QFT}}(0)e^{-i\hat{v}P}\otimes\mathcal{B}(\mathcal{H}_C)\subset \mathcal{A}_{\mathcal{N}}.
  \label{Equation: null translation invariant single cut}
\end{equation}
Next we impose boost invariance. The overall generator of a boost (around $v=0$ -- note we can use a boost around any fixed $v$ since we already imposed null translation invariance) is
\begin{equation}
  H + \hat{q} + \mathcal{Q} + \hat{d},
\end{equation}
where $\mathcal{Q}=\mathcal{Q}_0$. The situation is now basically the same as in the previous section, where we had a fixed cut, and only boost invariance to impose. The difference is that now the fixed QFT subalgebra $\mathcal{A}_{\text{QFT}}(v)$ is replaced by the \emph{dressed} QFT subalgebra $e^{-i\hat{v}P}\mathcal{A}_{\text{QFT}}(0)e^{i\hat{v}P}$. But, for the same reasons as before, the invariant subalgebra is a crossed product
\begin{align}
  \mathcal{A}_{\mathcal{N}}^G &= e^{-i\hat{v}P}\mathcal{A}_{\text{QFT}}(0)e^{i\hat{v}P}\rtimes_\alpha\RR\\
  &= \qty{e^{iP\hat{v}}e^{iH\hat{p}}ae^{-iH\hat{p}}e^{-iP\hat{v}},\hat{q}\Bigm\vert a\in \mathcal{A}_{\text{QFT}}(0)}''
  \label{Equation: gauge invariant single cut}
\end{align}
As before, this algebra is made up of dressed operators $e^{iP\hat{v}}e^{iH\hat{p}}ae^{-iH\hat{p}}e^{-iP\hat{v}}$ and reorientations generated by $\hat{q}$.
Here, $\alpha$ is a \emph{dressed} boost, not a fixed boost as in the previous section:
\begin{equation}
  \alpha_t: \quad e^{i\hat{v}P}\mathcal{A}_{\text{QFT}}(0)e^{-i\hat{v}P} \to e^{i\hat{v}P}\mathcal{A}_{\text{QFT}}(0)e^{-i\hat{v}P},\quad
  A \mapsto e^{itH_{\hat{v}}}A e^{-itH_{\hat{v}}},
\end{equation}
where $H_{\hat{v}}=H-2\pi\hat{v} P$ is a generator of a boost in the QFT around the dynamical cut $\mathcal{S}$. Dressed operators may also be written in terms of this dressed boost:
\begin{equation}
  e^{iP\hat{v}}e^{iH\hat{p}}ae^{-iH\hat{p}}e^{-iP\hat{v}}
  =
  e^{iH_{\hat{v}}\hat{p}}e^{iP\hat{v}}ae^{-iP\hat{v}}e^{-iH_{\hat{v}}\hat{p}}.
\end{equation}
Thus, we have succeeded in writing the gauge-invariant algebra outside a dynamical cut as a crossed product. It is worth pointing out that here we are only taking a crossed product with a one-dimensional group of boosts -- not the full two-dimensional gauge group. This is because we are not allowing operators which move the cut, which would be included as reorientations associated with null translations, if we were to do the full two-dimensional crossed product.

At this point it may be observed that the algebra $\mathcal{A}_{\mathcal{N}}^G$ is isomorphic to the boost-invariant algebra of a fixed cut; the two are related by the unitary map $e^{i\hat{v}P}$, which dresses null translations. Therefore, this algebra is Type II${}_\infty$; it has traces, density operators, and von Neumann entropies. Because of this isomorphism, we have not really gained much additional structure of physical interest. But so far we have only been considering a single dynamical cut. The true power of our construction will come when there are \emph{multiple} dynamical cuts available, and indeed our eventual aim is to compare the subsystems associated with different dynamical cuts.

\subsection{Accounting for multiple dynamical cuts}

So let us now generalise the previous setup to the case with multiple dynamical cuts, $\mathcal{S}_a$, $a=1,2,\dots$, with exterior regions $\mathcal{N}_a$ respectively. Each of these has Hilbert spaces $\mathcal{H}_a=L^2(\RR)$, with position operator $\hat{v}_a$ and conjugate momentum operator $\hat{k}_a$. As described previously, the generator of a null translation on $\mathcal{S}_a$ is $\hat{k}_a$, while the generator of a boost (around $v=0$) is $\hat{d}_a=\pi\pb*{\hat{v}_a}{\hat{k}_a}$.

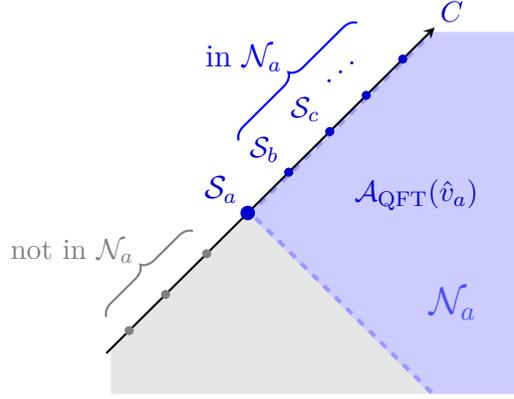
\begin{figure}
  \centering
  \begin{tikzpicture}[scale=1.2]
    \fill[black!10] (-1.5,-2) -- (-1.5,-1.5) -- (2,2) -- (2,-2);
    \fill[blue!20] (0,0) -- (2,2) -- (3,2) -- (3,-2) -- (2,-2) -- (0,0);
    \begin{scope}
      \clip (0,0) -- (2,2) -- (3,2) -- (3,-2) -- (2,-2) -- (0,0);
      \draw[line width=3pt,blue!40,dashed] (2,-2) -- (0,0) -- (2,2);
    \end{scope}

    \draw[thick,-stealth] (-1.55,-1.55) -- (2.05,2.05);

    \node[blue!70] at (2.25,-1) {\Large$\mathcal{N}_a$};
    \node[blue!80!black] at (1.85,0.2) {$\mathcal{A}_{\text{QFT}}(\hat{v}_a)$};

    \fill[blue!80!black] (0,0) circle (0.08) node[above left] {\large$\mathcal{S}_a$};

    \fill[blue!80!black] (0.45,0.45) circle (0.05) node[above left] {$\mathcal{S}_b$};
    \fill[blue!80!black] (0.9,0.9) circle (0.05) node[above left] {$\mathcal{S}_c$};
    \fill[blue!80!black] (1.3,1.3) circle (0.05) node[above left] {$\iddots$};
    \fill[blue!80!black] (1.7,1.7) circle (0.05);

    \fill[gray] (-0.45,-0.45) circle (0.05);
    \fill[gray] (-0.9,-0.9) circle (0.05);
    \fill[gray] (-1.3,-1.3) circle (0.05);

    \draw [blue,thick,decorate,decoration={brace,amplitude=5pt,mirror,raise=0ex}] (1.2,2.1) -- (-0.1,0.8) node[midway,above left]{\large in $\mathcal{N}_a$\,};
    \draw [gray,thick,decorate,decoration={brace,amplitude=5pt,mirror,raise=0ex}] (-0.6,-0.2) -- (-1.55,-1.15) node[midway,above left]{not in $\mathcal{N}_a$\,};

    \node[above right, blue!80!black] at (2,2) {$C$};
  \end{tikzpicture}
  \caption{The region $\mathcal{N}_a$ outside a dynamical cut $\mathcal{S}_a$ contains the fields in $\mathcal{N}_a$, and the asymptotic future horizon area / boost clock $C$. Additionally, it contains all the dynamical cuts which are to the future of $\mathcal{S}_a$. The algebra $\mathcal{A}_{\mathcal{N}_a}$ must account for all these degrees of freedom.}
  \label{Figure: what's in N_a}
\end{figure}

Consider the exterior region $\mathcal{N}_a$ associated with a particular choice of cut $\mathcal{S}_a$. We will now construct the algebra $\mathcal{A}_{\mathcal{N}_a}$ of this region. The story goes much as described above, but with one key difference.
Depending on the state, the region $\mathcal{N}_a$ can also contain the \emph{other} cuts $\mathcal{S}_b\ne \mathcal{S}_a$ (as depicted in Figure~\ref{Figure: what's in N_a}), and we must account for operators acting on these cuts. In particular, $\mathcal{S}_b$ is in $\mathcal{N}_a$ if and only if $\mathcal{S}_b$ is to the future of $\mathcal{S}_a$. It helps to now note two things:
\begin{itemize}
  \item Since no operators in $\mathcal{A}_{\mathcal{N}_a}$ can act on cuts to the past of $\mathcal{S}_a$, no operator can move $\mathcal{S}_b$ from the past of $\mathcal{S}_a$ to its future.
  \item Conversely, if $\mathcal{S}_b$ is located to the \emph{future} of $\mathcal{S}_a$, then no operator in $\mathcal{A}_{\mathcal{N}_a}$ can move it to the \emph{past} of $\mathcal{S}_a$ -- since if an $a\in\mathcal{A}_{\mathcal{N}_a}$ does this, then $a^\dagger$ would be an example of an operator violating the previous bullet point.
\end{itemize}
It is useful to therefore decompose the Hilbert space of each cut $\mathcal{S}_b$ as
\begin{equation}
  \mathcal{H}_b = \mathcal{H}_b^{>v}\oplus\mathcal{H}_b^{<v},
\end{equation}
where
\begin{equation}
  \mathcal{H}_b^{>v} = \theta(\hat{v}_b-v) \mathcal{H}_b = L^2((v,\infty)),\qquad \mathcal{H}_b^{<v} = \theta(v-\hat{v}_b) \mathcal{H}_b = L^2((-\infty,v))
\end{equation}
are the sectors of the Hilbert space in which $\mathcal{S}_b$ is to the future and past of $v$ respectively.\footnote{States with support at $\hat{v}_b=v$ are included once we perform a completion in the Hilbert space norm, which is part of the definition of a Hilbert space direct sum.} Here $\theta$ is a step function, so for example $\theta(\hat{v}_b-v) = \int_v^\infty\dd{v_b}\dyad{v_b}_b$ (where $\ket{v_b}_b$ are the eigenstates of $\hat{v}_b$, and we use $\dyad{v_b}_b$ as shorthand for $\ket{v_b}_b\bra{v_b}_b$). By the above bullet points, when $\mathcal{S}_a$ is located at $v$, operators in $\mathcal{N}_a$ acting on $\mathcal{S}_b$ must preserve these two sectors. Moreover, they can only act non-trivially on $\mathcal{H}_b^{>v}$. The algebra of all such operators may be written
\begin{equation}
  \mathcal{A}_{\mathcal{S}_b}(v) = \mathcal{B}(\mathcal{H}_b^{>v})\oplus \mathbb{C}\mathds{1}_{\mathcal{H}_b^{<v}}.
\end{equation}
Properly conditioning on the location of $\mathcal{S}_a$ amounts to performing a similar direct integral to~\eqref{Equation: dressed QFT algebras}. One finds that $\mathcal{A}_{\mathcal{N}_a}$ should contain all operators in
\begin{equation}
  \mathcal{A}_{\mathcal{S}_b}(\hat{v}_a) = \int^\oplus_{\RR}\dd{v_a}\mathcal{A}_{\mathcal{S}_b}(v_a)\otimes\dyad{v_a}_a.
\end{equation}
Note that we also have $\mathcal{A}_{\mathcal{S}_b}(v+s)\in\mathcal{A}_{\mathcal{S}_b}(v)$ for $s\ge 0$, and
\begin{equation}
  \mathcal{A}_{\mathcal{S}_b}(v+s) = e^{i\hat{k}_bs}\mathcal{A}_{\mathcal{S}_b}(v)e^{-i\hat{k}_bs},
  \label{Equation: cut algebra null translation}
\end{equation}
analogous to~\eqref{Equation: field algebra null translation}.

Altogether, including all operators acting on the fields and the other cuts (conditioned on the location of $\mathcal{S}_a$), as well as all operators acting on the boost clock Hilbert space $\mathcal{H}_C$, one finds that the kinematical algebra of operators associated with $\mathcal{N}_a$ is
\begin{equation}
  \mathcal{A}_{\mathcal{N}_a} = \qty\Big(\mathcal{A}_{\text{QFT}}(\hat{v}_a) \vee \bigvee_{b\ne a} \mathcal{A}_{\mathcal{S}_b}(\hat{v}_a)) \otimes \mathcal{B}(\mathcal{H}_C).
\end{equation}
Here, the notation $\mathcal{K}\vee\mathcal{L}$ denotes the `join' of von Neumann algebras $\mathcal{K},\mathcal{L}$, which is the smallest von Neumann algebra containing both $\mathcal{K}$ and $\mathcal{L}$ as subalgebras. In this case we can alternatively write
\begin{equation}
  \mathcal{A}_{\text{QFT}}(\hat{v}_a) \vee \bigvee_{b\ne a} \mathcal{A}_{\mathcal{S}_b}(\hat{v}_a)= \int^\oplus_{\RR}\dd{v_a}\mathcal{A}_{\text{QFT}}(v_a)\otimes\bigotimes_{b\ne a}\mathcal{A}_{\mathcal{S}_b}(v_a)\otimes\dyad{v_a}_a.
\end{equation}
Let us now impose gauge invariance. The overall generators of boosts and null translations are
\begin{equation}
  \mathcal{C} = H+\hat{q}+\mathcal{Q} + \sum_a \hat{d}_a, \qquad
  \mathcal{D} = P - \Theta + \sum_a\hat{k}_a,
\end{equation}
respectively (recall $\Theta$ is the horizon expansion at $v\to -\infty$). We proceed as before, by first imposing null translation invariance and then imposing boost invariance. The structure is essentially the same as previously, but now including the operators acting on the other cuts. One finds for the null translation invariant subalgebra (generalising~\eqref{Equation: null translation invariant single cut}):
\begin{equation}
  \mathcal{A}_{\mathcal{N}_a}^{\mathcal{D}} = e^{i\hat{v}_a(P + \sum_{b\ne a}\hat{k}_b)}\qty\Big(\mathcal{A}_{\text{QFT}}(0)\otimes\bigotimes_{b\ne a}\mathcal{A}_{\mathcal{S}_b}(0))e^{-i\hat{v}_a(P + \sum_{b\ne a}\hat{k}_b)}\otimes\mathcal{B}(\mathcal{H}_C).
\end{equation}
Now imposing boost invariance, one finds that the full gauge invariant subalgebra is a crossed product (generalising~\eqref{Equation: gauge invariant single cut}):
\begin{align}
  \mathcal{A}_{\mathcal{N}_a}^G &= e^{i\hat{v}_a(P + \sum_{b\ne a}\hat{k}_b)}\qty\Big(\mathcal{A}_{\text{QFT}}(0)\otimes\bigotimes_{b\ne a}\mathcal{A}_{\mathcal{S}_b}(0))e^{-i\hat{v}_a(P + \sum_{b\ne a}\hat{k}_b)}\rtimes_\alpha\RR\\
  &=
  \begin{multlined}[t]
    \Big\{e^{i\hat{v}_a(P+\sum_{b\ne a}\hat{k}_b)}e^{i\hat{p}(H+\sum_{b \ne a}\hat{d}_b)}ae^{-i\hat{p}(H+\sum_{b\ne a}\hat{d}_b)}e^{-i\hat{v}_a(P+\sum_{b\ne a}\hat{k}_b)},\hat{q}
      \Bigm\vert \\
    a\in \mathcal{A}_{\text{QFT}}(0)\otimes\bigotimes_{b\ne a}\mathcal{A}_{\mathcal{S}_b}(0)\Big\}''.
  \end{multlined}
  \label{Equation: invariant algebra}
\end{align}
As before, $\alpha$ is a dressed boost $\alpha_t:A\mapsto e^{itH_{\hat{v}_a}}Ae^{-itH_{\hat{v}_a}}$ in the region $\mathcal{N}_a$, where
\begin{equation}
  H_{\hat{v}_a} = H+\sum_{b\ne a}\hat{d}_b - 2\pi \hat{v}_a\qty\Big(P+\sum_{b\ne a}\hat{k}_b)
  \label{Equation: dressed boost generator}
\end{equation}
is the generator of this boost acting on the fields and cuts.

Unlike in the case of a single dynamical cut, the gauge-invariant algebra $\mathcal{A}_{\mathcal{N}_a}^G$ has a non-trivial center. Indeed, before imposing gauge-invariance the center of the algebra is
\begin{equation}
  Z(\mathcal{A}_{\mathcal{N}_a}) = \qty{f(\hat{v}_a),\,\theta(\hat{v}_b-\hat{v}_a)\Bigm\vert f:\RR\to \CC,\, b\ne a}''.
\end{equation}
The functions $f(\hat{v}_a)$ of the location of $\mathcal{S}_a$ are knocked out upon imposing gauge invariance, as previously. However, the other operators $\theta(\hat{v}_b-\hat{v}_a)$ survive in the center of the gauge-invariant algebra, and imposing gauge invariance does not enlarge the center (by Lemma~\ref{Lemma: crossed product center}, using that boosts are outer automorphisms of the QFT algebras). Thus, the center of the gauge-invariant algebra is given by
\begin{equation}
  Z(\mathcal{A}_{\mathcal{N}_a}^G) = \qty{\theta(\hat{v}_b-\hat{v}_a)\Bigm\vert b\ne a}''.
  \label{Equation: gauge-invariant center}
\end{equation}
Each central element measures whether other cuts $\mathcal{S}_b$ are to the future or past of $\mathcal{S}_a$.

\subsection{Traces on the algebras}
\label{Subsection: traces}

The algebra $\mathcal{A}_{\mathcal{N}_a}^G$ can be equipped with a family of traces parametrised by its positive central elements. Below, we will construct one such trace, which we denote $\Tr_a$. Any other trace may then be written~\cite{Sorce:2023fdx,TheoryOfOperatorAlgebrasI} as $\Tr_a^z$, where
\begin{equation}
  \Tr_a^z(A) = \Tr_a(zA),
  \label{Equation: trace ambiguity}
\end{equation}
for any positive central element $z\in Z(\mathcal{A}_{\mathcal{N}_a}^G)$. Given the form of the center~\eqref{Equation: gauge-invariant center}, it is clear that the different possible traces weight the sectors in which each cut is to the future or past of $\mathcal{S}_a$ by different amounts.

A trace allows one to define the density operator $\rho$ of any normal state $\Phi$ on $\mathcal{A}_{\mathcal{N}_a}^G$, and hence to find a von Neumann entropy for that state. But different choices of trace will lead to different density operators and entropies. Indeed, if $\rho$ is the density operator with respect to $\Tr_a$, then the density operator with respect to $\Tr_a^z$ is $\rho z^{-1}$, since
\begin{equation}
  \Phi(A) = \Tr_a(\rho A) = \Tr_a^z(\rho z^{-1} A) \qq{for all} A\in\mathcal{A}_{\mathcal{N}_a}^G.
\end{equation}
This then implies that the von Neumann entropies defined with the two traces differ by the expectation value of $\log z$:
\begin{equation}
  -\Tr_a^z\qty(\rho z^{-1}\log(\rho z^{-1})) = -\Tr_a(\rho\log\rho) + \Phi(\log z).
  \label{Equation: entropy ambiguity}
\end{equation}
This could in principle lead to issues in constructing inequalities relating entropies for different algebras, such as the GSL. However, our derivation of the GSL will follow from the monotonicity of relative entropy, and the value of the relative entropy does not depend on which trace one uses to define density operators -- indeed, the relative entropy can be written without any reference at all to traces or density operators~\cite{Araki:1976zv}. Therefore, in the rest of the paper we will just use the particular trace $\Tr_a$ which we construct below, thereby fixing this ambiguity in density operators and entropies (but one could use any of the other traces $\Tr_a^z$ and derive the same final result for the GSL -- \changed[indeed the expectation value of $\log z$ would contribute equally to the two sides of~\eqref{Equation: exact GSL}, and thus cancel out, giving the same inequality]{we will comment on this in more detail in Subsection~\ref{Subsection: normalisation ambiguities}}).

Let us now construct the trace. There are several ingredients that go into this. First, we need a normal weight on $\mathcal{A}_{\text{QFT}}(0)\otimes\bigotimes_{b\ne a}\mathcal{A}_{\mathcal{S}_b}(0)$ which satisfies the KMS condition for the boost generated by $H+\sum_{b\ne a}\hat{d}_b$. An appropriate such weight is the tensor product
\begin{equation}
  \Psi_a = \Psi_{\text{QFT}}\otimes \bigotimes_{b\ne a}\Psi_{\mathcal{S}_b},
\end{equation}
where
\begin{equation}
  \Psi_{\text{QFT}}(a) = \bra{\Omega}a\ket{\Omega}, \qquad a \in \mathcal{A}_{\text{QFT}}(0),
\end{equation}
and $\Psi_{\mathcal{S}_b}$ is a weight on $\mathcal{A}_{\mathcal{S}_b}(0)$ defined by
\begin{equation}
  \Psi_{\mathcal{S}_b}(a\oplus \alpha \mathds{1}_{\mathcal{H}_b^{<0}}) = \tr_{\mathcal{H}_b^{>0}}\qty(e^{-\hat{d}_b}a) + \alpha,\qquad a\in \mathcal{B}(\mathcal{H}_b^{>0}), \,\alpha\in\CC.
\end{equation}
Here $\tr_{\mathcal{H}_b^{>0}}$ is the ordinary Hilbert space trace associated with $\mathcal{H}_b^{>0}$, which may be written
\begin{equation}
  \tr_{\mathcal{H}_b^{>0}}\qty(a) = \int_0^\infty\dd{v_b} \bra{v_b}_b a \ket{v_b}_b.
\end{equation}
Note that boosts $e^{-it\hat{d}_a}$ preserve $\mathcal{H}_b^{>0}$ and $\mathcal{H}_b^{<0}$, so $e^{-\hat{d}}$ is a well-defined (but unbounded) operator acting on $\mathcal{H}_b^{>0}$.\footnote{It is worth commenting on a perhaps initially confusing point. A na\"ive continuation of $e^{-it\hat{d}_b}\ket{v_b}_b = e^{\pi t}\ket*{e^{2\pi t}v_b}_b$ to $t=-i$ might lead one to conclude that $e^{-\hat{d}_b}\ket{v_b}_b=-\ket{v_b}_b$, so $e^{-\hat{d}_b}=-1$. But $e^{-\hat{d}_b}$ should be a non-negative operator by construction. A similar argument might continue $e^{it\hat{d}_b}\hat{v}_be^{-it\hat{d}_b}=e^{2\pi t}\hat{v}_b$ and $e^{it\hat{d}_b}\hat{k}_be^{-it\hat{d}_b}=e^{-2\pi t}\hat{k}_b$ to $t=-i$ to conclude that $e^{-\hat{d}_b}$ commutes with $\hat{v}_b$ and $\hat{k}_b$, which would imply $e^{-\hat{d}_b}$ is a constant multiple of the identity. But this would also mean that $\hat{d}_b$ is a multiple of the identity (since it is Hermitian), which is clearly not true. The underlying cause of failure for such arguments is the unbounded nature of the operators involved. In the case at hand the action of $e^{-\hat{d}_b}$ is better understood as a rotation by $2\pi$ in the complex plane, and in general the operators and wavefunctions it acts on may have branch cuts and singularities, which these arguments do not account for. One might hope that the Paley-Wiener theorem, which says that states in $\mathcal{H}_b^{>0}$ have holomorphic momentum space wavefunctions in the upper half plane, could help -- but it is not sufficient to constrain their contributions from the lower half plane.}

$\Psi_{\text{QFT}}$ and $\Psi_{\mathcal{S}_b}$ are normal faithful functionals. Also, $\Psi_{\text{QFT}}$ is finite, so it is a state, but $\Psi_{\mathcal{S}_b}$ is only semifinite, so it is a weight. Thus, $\Psi_a$ is a normal faithful weight. It is clear that $\Psi_{\text{QFT}}$ satisfies the KMS condition with generator $H$, because $H$ is the modular Hamiltonian of $\ket{\Omega}$. Also, $\Psi_{\mathcal{S}_b}$ satisfies the KMS condition with generator $\hat{d}_b$, since for the operators acting on $\mathcal{H}_b^{>0}$ it involves the trace with the appropriate thermal operator $e^{-\hat{d}_b}$, and all relevant operators acting on $\mathcal{H}_b^{<0}$ are multiples of the identity and so commute with everything and obey the KMS condition trivially. By linearity, $\Psi_a$ therefore obeys the KMS condition for the flow generated by $H+\sum_{b\ne a}\hat{d}_b$:
\begin{equation}
  \Psi_a\qty(e^{i(H+\sum_{b\ne a}\hat{d}_b)\tau}Be^{-i(H+\sum_{b\ne a}\hat{d}_b)\tau}A)
  =
  \Psi_a\qty(Ae^{i(H+\sum_{b\ne a}\hat{d}_b)(\tau+i)}Be^{-i(H+\sum_{b\ne a}\hat{d}_b)(\tau+i)}),
  \label{Equation: Psi KMS}
\end{equation}
for all $A,B\in\mathcal{A}_{\text{QFT}}(0)\otimes\bigotimes_{b\ne a}\mathcal{A}_{\mathcal{S}_b}(0)$.

Next, define a map
\begin{equation}
  \gamma_a: e^{i\hat{v}_a(P+\sum_{b\ne a}\hat{k}_b)}\qty\Big(\mathcal{A}_{\text{QFT}}(0)\otimes\bigotimes_{b\ne a}\mathcal{A}_{\mathcal{S}_b}(0))e^{-i\hat{v}_a(P+\sum_{b\ne a}\hat{k}_b)} \to \mathcal{A}_{\text{QFT}}(0)\otimes\bigotimes_{b\ne a}\mathcal{A}_{\mathcal{S}_b}(0)
\end{equation}
by
\begin{equation}
  \gamma_a(a)\otimes\ket{0}_a = a\ket{0}_a.
\end{equation}
This is well-defined because $a\in e^{i\hat{v}_a(P+\sum_{b\ne a}\hat{k}_b)}\qty\Big(\mathcal{A}_{\text{QFT}}(0)\otimes\bigotimes_{b\ne a}\mathcal{A}_{\mathcal{S}_b}(0))e^{-i\hat{v}_a(P+\sum_{b\ne a}\hat{k}_b)}$ commutes with $\hat{v}_a$. It is normal and faithful.

Finally, for any $A\in\mathcal{A}_{\mathcal{N}_a}^G$, we have
\begin{equation}
  \bra{0}_p A \ket{p}_p \in
  e^{i\hat{v}_a(P+\sum_{b\ne a}\hat{k}_b)}\qty\Big(\mathcal{A}_{\text{QFT}}(0)\otimes\bigotimes_{b\ne a}\mathcal{A}_{\mathcal{S}_b}(0))e^{-i\hat{v}_a(P+\sum_{b\ne a}\hat{k}_b)},
  \label{Equation: general A}
\end{equation}
where $\ket{p}_p$ are eigenstates of $\hat{p}$. This may be confirmed by writing $A$ in the form
\begin{equation}
  A=\int_{-\infty}^\infty\dd{t}e^{i\hat{v}_a(P+\sum_{b\ne a}\hat{k}_b)}e^{i\hat{p}(H+\sum_{b \ne a}\hat{d}_b)}a(t)e^{-i\hat{p}(H+\sum_{b\ne a}\hat{d}_b)}e^{-i\hat{v}_a(P+\sum_{b\ne a}\hat{k}_b)}e^{-i\hat{q}t}
\end{equation}
for $a(t)\in \mathcal{A}_{\text{QFT}}(0)\otimes\bigotimes_{b\ne a}\mathcal{A}_{\mathcal{S}_b}(0)$.
One then has
\begin{equation}
  \bra{0}_p A \ket{p}_p
  = e^{i\hat{v}_a(P+\sum_{b\ne a}\hat{k}_b)}a(p)e^{-i\hat{v}_a(P+\sum_{b\ne a}\hat{k}_b)}.
\end{equation}

Having prepared these ingredients, one can define a trace functional $\Tr_a:\mathcal{A}_{\mathcal{N}_a}^G\to\CC$ by
\begin{equation}
  \Tr_a(A) = \Psi_a\qty(\gamma_a\qty(\bra{0}_pe^{-\hat{q}/2}Ae^{-\hat{q}/2}\ket{0}_p)).
\end{equation}
This is a normal weight by construction.
For $A$ in the form~\eqref{Equation: general A}, one may confirm that
\begin{equation}
  \Tr_a(e^{\hat{q}/2}A^\dagger Ae^{\hat{q}/2}) = \int_{-\infty}^\infty\dd{t}\Psi_a\qty(a(t)^\dagger a(t)),
\end{equation}
so $\Tr_a(e^{\hat{q}/2}A^\dagger Ae^{\hat{q}/2})=0$ if and only if $a(t)=0$ for all $t$, i.e.\ if and only if $Ae^{\hat{q}/2}=0$; thus, $\Tr_a$ is faithful.
Gauge invariance implies, for any $A\in\mathcal{A}_{\mathcal{N}_a}^G$,
\begin{equation}
  \bra{p}_pAe^{-\hat{q}/2}\ket{0}_p = e^{i(p+i/2)H_{\hat{v}_a}}\bra{0}_p e^{-\hat{q}/2}A\ket{-p}_pe^{-i(p+i/2)H_{\hat{v}_a}},
\end{equation}
which gives, for $A,B\in\mathcal{A}_{\mathcal{N}_A}^G$,
\begin{multline}
  \bra{0}_pe^{-\hat{q}/2}ABe^{-\hat{q}/2}\ket{0}_p = \int_{-\infty}^\infty\dd{p}\bra{0}_pe^{-\hat{q}/2}A\ket{p}_p\bra{p}_pBe^{-\hat{q}/2}\ket{0}_p \\*
  = \int_{-\infty}^\infty\dd{p}e^{i(p-i/2)H_{\hat{v}_a}}\bra{-p}_pAe^{-\hat{q}/2}\ket{0}_pe^{-H_{\hat{v}_a}}\bra{0}_pe^{-\hat{q}/2}B\ket{-p}_pe^{-i(p+i/2)H_{\hat{v}_a}}.
\end{multline}
Using that $\Psi_a$ is boost invariant, i.e.\ $\Psi_a(e^{ip(H+\sum_{b\ne a}\hat{d}_b)}(\cdot)e^{-ip(H+\sum_{b\ne a}\hat{d}_b)})=\Psi_a(\cdot)$, and that $\gamma_a(ab)=\gamma_a(a)\gamma_a(b)$, we then have (since the action of $\gamma_a$ converts each $H_{\hat{v}_a}$ that appears into an $H+\sum_{b\ne a}\hat{d}_b$)
\begin{align}
  \Tr_a(AB) &= \int_{-\infty}^\infty\dd{p}\Psi_a\qty(\gamma_a\qty(\bra{-p}_pe^{-\hat{q}/2}A\ket{0}_p)e^{-H-\sum_{b\ne a}\hat{d}_b}\gamma_a\qty(\bra{0}_pBe^{-\hat{q}/2}\ket{-p}_p)e^{H+\sum_{b\ne a}\hat{d}_b}) \\
  &= \int_{-\infty}^\infty\dd{p}\Psi_a\qty(\gamma_a\qty(\bra{0}_pe^{-\hat{q}/2}B\ket{-p}_p)\gamma_a\qty(\bra{-p}_pAe^{-\hat{q}/2}\ket{0}_p))\\
  &= \Psi_a\qty(\gamma_a\qty(\bra{0}_pe^{-\hat{q}/2}BAe^{-\hat{q}/2}\ket{0}_p ))\\
  &= \Tr_a(BA),
\end{align}
where the second line follows from the KMS property~\eqref{Equation: Psi KMS}. So $\Tr_a$ obeys the cyclic property.

In summary, $\Tr_a:\mathcal{A}_{\mathcal{N}_a}^G\to\CC$ satisfies all the properties of a trace: it is a normal faithful weight obeying the cyclic property. One may confirm that $\Tr_a(1)$ is infinite, but there are operators with finite trace (this follows from the simple to confirm fact that there are operators $a$ with finite $\Psi_a(a)$), so $\Tr_a$ is a semifinite trace. Indeed, due to its crossed product structure, $\mathcal{A}_{\mathcal{N}_a}^G$ is a direct sum of Type II${}_\infty$ factors, each corresponding to different orderings of the cuts.

\subsection{Representation on a physical Hilbert space}
\label{Subsection: physical Hilbert space}

In this subsection, we will construct a physical Hilbert space $\mathcal{H}_{\text{phys}}$ which carries a normal representation of the gauge-invariant operators which make up the dynamical cut algebras $\mathcal{A}_{\mathcal{N}_a}^G$. We will also give the density operators of states in $\mathcal{H}_{\text{phys}}$ in these algebras. It should be noted that the generalised second law that we will prove in Section~\ref{Section: GSL} is independent of the particular Hilbert space representation one employs. All that is needed is that the representation is normal; then states in this representation admit density operators with von Neumann entropies, to which the GSL then applies. This is useful, because it means that the simplifying assumptions involved in constructing the physical Hilbert space below should not affect the status of the GSL. Nevertheless, we include this subsection for completeness.

The overall gauge constraints are
\begin{equation}
  \mathcal{C} = H+\hat{q}+\mathcal{Q} + \sum_a \hat{d}_a, \qquad
  \mathcal{D} = P - \Theta + \sum_a\hat{k}_a.
\end{equation}
Let us recap the various components of the system we are considering. There are the quantum fields, with Hilbert space $\mathcal{H}_{\text{QFT}}$, on which $H,P$ act. There is also the asymptotic future area of the horizon, whose Hilbert space is $\mathcal{H}_C=L^2(\RR)$, with position operator $\hat{q}$. Additionally, there are each of the dynamical cuts $\mathcal{S}_a$, whose Hilbert spaces are $\mathcal{H}_a=L^2(\RR)$, with position and momentum operators $\hat{v}_a,\hat{k}_a$, and we have defined $\hat{d}_a=\pi\pb*{\hat{v}_a}{\hat{k}_a}$.

In the constraints $\mathcal{C}$ and $\mathcal{D}$ there are also the operators $\mathcal{Q}$ and $-\Theta$, which are the generators of boosts and null translations in the asymptotic past of the horizon respectively. These operators did not play a role in the construction of the algebras in the previous section, as they do not act inside $\mathcal{N}_a$ for any cut. Now, however, we are considering the Hilbert space of the full system, so we must take the asymptotic past into account. We will model $\mathcal{Q},-\Theta$ as the generators of boosts and null translations in the regular representation of the gauge group $G$, so they act on a Hilbert space $\mathcal{H}_{-\infty}=L^2(G,\dd{\mu_L})$, where $\dd{\mu_L}$ is the left-invariant measure on $G$.

Modelling the past asymptotic degrees of freedom in this way can be understood as making the assumption that there is an ideal quantum reference frame for the full gauge group $G$ available in the asymptotic past (similar to how the boost clock is an ideal quantum reference frame for boosts in the asymptotic future). Alternatively, $\mathcal{H}_{-\infty}$ can be viewed as a Hilbert space of edge modes that would be required to glue the horizon $\mathscr{H}$ into spacetimes with more complicated past asymptotics than what we are considering (as can $\mathcal{H}_C$ for future asymptotics).

In more detail, labelling the element $g\in G$ that maps $v\mapsto e^{2\pi t}v + s$ by $(s,t)$, we have the group eigenstates $\ket{g}=\ket{s,t}\in\mathcal{H}_{-\infty}$. The operator $e^{i\Theta s'}e^{-i\mathcal{Q}t'}$ then acts by left translation by the group element corresponding to $(s',t')$:
\begin{equation}
  e^{i\Theta s'}e^{-i\mathcal{Q}t'} \ket{s,t} = \ket*{t+t',e^{2\pi t'}s+s'}.
\end{equation}
We will disregard any other past asymptotic degrees of freedom (as we did when only using the boost clock in the asymptotic future). Of course, a more realistic physical model would involve a more complicated set of degrees of freedom in the asymptotic past and future.

Thus, the full kinematical Hilbert space is
\begin{equation}
  \mathcal{H}_{\text{kin}} = \mathcal{H}_{\text{QFT}}\otimes\bigotimes_a\mathcal{H}_a\otimes \mathcal{H}_C\otimes\mathcal{H}_{-\infty}.
\end{equation}
Let us now address how to construct the physical Hilbert space from $\mathcal{H}_{\text{kin}}$. The aim is to eliminate redundant, gauge-equivalent states from $\mathcal{H}_{\text{kin}}$. The gauge group $G$ is non-compact, but locally compact, so we can construct the physical Hilbert space via group-averaging (also sometimes called the method of coinvariants~\cite{Chandrasekaran_2023b}). One thing we have to be careful about is the fact that $G$ is not \emph{unimodular}; indeed the left- and right-invariant measures are respectively (up to constant factors)
\begin{equation}
  \dd{\mu_L(s,t)} = e^{-2\pi t}\dd{s}\dd{t}, \qquad \dd{\mu_R(s,t)} = \dd{s}\dd{t}.
\end{equation}
The factor $\Delta(s,t) = e^{2\pi t}$ by which these differ is the `modular function'\footnote{As far as the author is aware, this use of `modular' is quite distinct from that associated with Tomita-Takesaki theory, such as in `modular flow' (although there is no doubt an etymological connection). `Modular' is one of those unfortunate pieces of mathematical vocabulary with far too many disparate uses.} of the group; the fact that $\Delta(s,t)\ne 1$ is what it means for this group to be non-unimodular. Let us define a new measure $\mu$ on $G$ by taking the geometric mean of $\mu_L$ and $\mu_R$:
\begin{align}
  \dd{\mu(s,t)} &= \Delta(s,t)^{1/2}\dd{\mu_L(s,t)} = \Delta(s,t)^{-1/2}\dd{\mu_R(s,t)} \\
  &= e^{-\pi t}\dd{s}\dd{t}.
\end{align}
The construction of the physical Hilbert space $\mathcal{H}_{\text{phys}}$ then proceeds by defining a new inner product on kinematical states, which involves using this measure $\mu$ to average over the overall unitary representation $U(s,t)=e^{-is\mathcal{D}}e^{-it\mathcal{C}}$ of $G$ on $\mathcal{H}_{\text{kin}}$:
\begin{align}
  \Braket{\phi}{\phi'} &= \int_G \dd{\mu(s,t)} \bra{\phi}U(s,t) \ket{\phi'} \\
  &= \int_{-\infty}^\infty \dd{s} \int_{-\infty}^\infty \dd{t} \bra{\phi}e^{-is\mathcal{D}}e^{-it(\mathcal{C}-i\pi)}\ket{\phi'}.
  \label{Equation: physical inner product}
\end{align}
The choice of measure ensures that this inner product is Hermitian (in fact this is the unique choice that works~\cite{Giulini:1998kf}).

Kinematical states of the form
\begin{equation}
  \qty(U(s,t)\Delta(s,t)^{-1/2}-\mathds{1})\ket{\phi} = \qty(e^{-is\mathcal{D}}e^{-it(\mathcal{C}-i\pi)}-\mathds{1})\ket{\phi}
\end{equation}
are null in this inner product. In fact, physical states, which we denote $\Ket{\phi}\in\mathcal{H}_{\text{phys}}$, may be concretely constructed as equivalence classes of kinematical states, modulo these null states (the full physical Hilbert space is the completion of the space of all such equivalence classes in the norm given by the inner product above). These physical states are gauge-invariant in the sense that the equivalence classes do not change when we act with $U(s,t)\Delta(s,t)^{-1/2}$. Morally speaking, they satisfy $U(s,t) = \Delta(s,t)^{1/2}$, or in terms of the generators:
\begin{equation}
  \mathcal{C} - i\pi = 0,\qquad \mathcal{D} = 0.
  \label{Equation: non unimodular constraints}
\end{equation}
In fact, we may (loosely speaking) perform the integral in~\eqref{Equation: physical inner product} to obtain
\begin{equation}
  \Braket{\phi}{\phi'} = 4\pi^2 \bra{\phi}\delta(\mathcal{D})\delta(\mathcal{C}-i\pi)\ket{\phi'},
\end{equation}
so~\eqref{Equation: non unimodular constraints} are implemented via delta functions in the physical inner product.
The first equation in~\eqref{Equation: non unimodular constraints} may look odd, but due to~\cite{Giulini:1998kf} the $i\pi$ term is a necessary consequence of $G$ being non-unimodular (see~\cite{Duval:205249,Duval:1991jn,trunk1999algebraicconstraintquantizationpseudorigid} as well, and~\cite{Marnelius_1995} where this particular gauge group $G$ was also considered). It may be viewed as a kind of anomaly, via which physical states are charged under the transformation generated by $\mathcal{C}$. In this sense, it might be related to the anomaly observed in~\cite{Ciambelli:2024swv} (although that anomaly followed from a rather different line of reasoning).

Let us write the map from a given kinematical state $\ket{\phi}$ to the corresponding physical state $\Ket{\phi}$ (concretely, the equivalence class containing $\ket{\phi}$) as $\zeta:\ket{\phi}\mapsto\Ket{\phi}$. Suppose $a$ is a kinematical operator (so $a\in\mathcal{B}(\mathcal{H}_{\text{kin}})$) which is gauge-invariant, i.e.\ $U(s,t)aU(s,t)^\dagger=a$, or equivalently
\begin{equation}
  [a,\mathcal{C}]=[a,\mathcal{D}]=0.
\end{equation}
Then $a$ gives rise to an operator $r(a)$ acting in $\mathcal{H}_{\text{phys}}$, defined via
\begin{equation}
  r(a) \zeta\ket{\phi} = \zeta a \ket{\phi}.
\end{equation}
In this way, gauge-invariant kinematical operators are represented as physical operators:
\begin{equation}
  r: \mathcal{B}(\mathcal{H}_{\text{kin}})^G\to \mathcal{B}(\mathcal{H}_{\text{phys}}).
\end{equation}

We will show momentarily that the restriction of $r$ to $\mathcal{A}_{\mathcal{N}_a}^G$ is normal and faithful. Therefore, a trace on $r(\mathcal{A}_{\mathcal{N}_a}^G)$ may be defined via $\Tr_a^{\text{phys}} = \Tr_a \circ r^{-1}$, and the corresponding density operator $\rho$ of any physical state $\Ket{\phi}$ exists and is uniquely given by
\begin{equation}
  \Bra{\phi}A\Ket{\phi} = \Tr_a^{\text{phys}}(\rho A) \qq{for all} A\in r(\mathcal{A}_{\mathcal{N}_a}^G).
  \label{Equation: density operator definition}
\end{equation}
Moreover, the von Neumann entropy of the state may be written
\begin{equation}
  S = -\Mel{\phi}{\log\rho}{\phi} = -\Tr_a^{\text{phys}}(\rho\log \rho) = -\Tr_a\qty(r^{-1}(\rho)\log r^{-1}(\rho)),
\end{equation}
i.e.\ it is the von Neumann entropy of $r^{-1}(\rho)$. For this reason, we may work entirely at the level of the gauge-invariant kinematical algebras $\mathcal{A}_{\mathcal{N}_a}^G$, rather than their physical representations $r(\mathcal{A}_{\mathcal{N}_a}^G)$, when dealing with entropies (as in the derivation of the GSL in Section~\ref{Section: GSL}).

Let us now explain why the representation is normal and faithful. These are implied by the existence of the past asymptotic reference frame, via a generalisation of an argument given in~\cite{DEHKlong}, which briefly goes as follows. A given physical state $\Ket{\phi}=\zeta\ket{\phi}$ may always be written in the form
\begin{equation}
  \Ket{\phi} = \zeta\qty\big(\ket*{\phi_{|-\infty}}\otimes\ket{0,0}),
  \label{Equation: -infty reduction}
\end{equation}
for some unique $\ket*{\phi_{|-\infty}}\in \mathcal{H}_{\text{QFT}}\otimes\bigotimes_a\mathcal{H}_a\otimes \mathcal{H}_C$ (in the perspective-neutral formalism, this is the state `in the perspective of' the past asymptotic frame), where $\ket{0,0}\in\mathcal{H}_{-\infty}$ is the group eigenstate at the identity in $G$. It may be shown that the map $\mathcal{R}_{-\infty}:\Ket{\phi} \mapsto \ket*{\phi_{|-\infty}}$ defined by~\eqref{Equation: -infty reduction} is unitary, and furthermore that it implements a spatial isomorphism $r(a) = \mathcal{R}_{-\infty}^\dagger a \mathcal{R}_{-\infty}$ between $r(\mathcal{A}_{\mathcal{N}_a}^G)$ and $\mathcal{A}_{\mathcal{N}_a}^G$. This automatically implies that the representation is normal and faithful, as required.

It should also be noted that normalness and faithfulness will hold for a more general class of past asymptotic frames than the simple $L^2(G)$ employed here (including in particular non-ideal ones). The key object is the map $\mathcal{R}_{-\infty}$ (known in the perspective-neutral formalism as a Page-Wootters reduction map), which essentially uses the past asymptotic frame to fix the gauge. A generalisation of this map still exists if the frame is `complete', meaning its state can be used to completely fix the gauge (complete frames transform under faithful representations of $G$) -- but for non-ideal frames it is typically an isometry rather than a full unitary map. It still satisfies $r(a) = \mathcal{R}_{-\infty}^\dagger a \mathcal{R}_{-\infty}$ for $a\in\mathcal{A}_{\mathcal{N}_a}^G$, so the representation is normal. On the other hand, since $\mathcal{R}_{-\infty}$ is only an isometry, this is not automatically a spatial isomorphism, and we need to show faithfulness another way. For this we note that $r(\theta(\hat{v}_a-\hat{v}_b))\ne 0$, since there are clearly states in $\mathcal{H}_{\text{QFT}}\otimes \bigotimes_a\mathcal{H}_a\otimes\mathcal{H}_C$ for which $\mathcal{S}_a$ is located to the future of $\mathcal{S}_b$. By Lemma~\ref{Lemma: faithful normal rep}, we may then conclude that the representation of $\mathcal{A}_{N_a}^G$ on the physical Hilbert space is faithful.

\subsubsection{Density operator of a state in the physical Hilbert space}

We shall now write down explicitly the density operator of a physical state $\Ket{\phi}$. Let us first note that we can (similarly to~\eqref{Equation: -infty reduction}) always put such a state in the form (up to reordering of tensor factors):
\begin{equation}
  \Ket{\phi} = \zeta\qty\big(\ket*{\phi_{|C\mathcal{S}_a}}\otimes \ket{0}_p \otimes \ket{0}_a),
\end{equation}
where
\begin{equation}
  \ket*{\phi_{|C\mathcal{S}_a}} = \bra{0}_p\otimes\bra{0}_a\int_{-\infty}^\infty\dd{s}\int_{-\infty}^\infty\dd{t}e^{-is\mathcal{D}}e^{-it(\mathcal{C}-i\pi)}\ket{\phi}.
  \label{Equation: reduced state}
\end{equation}
In the perspective-neutral approach to QRFs~\cite{delaHamette:2021oex,Hoehn:2023ehz,Hoehn:2019fsy,Hoehn:2020epv,AliAhmad:2021adn,Giacomini:2021gei,Castro-Ruiz:2019nnl,Vanrietvelde:2018dit,Vanrietvelde:2018pgb,Hoehn:2021flk,Suleymanov:2023wio,DeVuyst:2024pop,DEHKlong}, this is the state `in the perspective of' the boost clock $C$ and the cut $\mathcal{S}_a$,\footnote{Note that the previous literature on the perspective-neutral framework has been restricted to unimodular gauge groups. Here we are generalising to the particular non-unimodular group $G$.} and
\begin{equation}
  \mathcal{R}_{C\mathcal{S}_a}: \mathcal{H}_{\text{phys}} \to \mathcal{H}_{\text{QFT}} \otimes\mathcal{H}_{-\infty}\otimes\bigotimes_{b\ne a} \mathcal{H}_b
\end{equation}
is the so-called Page-Wootters reduction map from $\Ket{\phi}$ to $\ket*{\phi_{|C\mathcal{S}_a}}$ (defined by~\eqref{Equation: reduced state}). One may show that $\mathcal{R}_{C\mathcal{S}_a}$ is a unitary operator. Let us further define
\begin{equation}
  \ket*{\phi_a(\{v_b\})} = \qty\Big(\bigotimes_{b\ne a} \bra{v_b}_b)\ket*{\phi_{|C\mathcal{S}_a}} \in \mathcal{H}_{\text{QFT}}\otimes\mathcal{H}_{-\infty},
\end{equation}
where $\{v_b\}$ denotes the collection of variables $v_b$, $b\ne a$.

Next, since $\ket{\Omega}$ is cyclic and separating for $\mathcal{A}_\varphi(0)$, we can define the relative Tomita operator $S_{\chi|\Omega}$ of this algebra to any other state $\ket{\chi}$ it acts on, via
\begin{equation}
  S_{\chi|\Omega} a \ket{\Omega} = a^\dagger \ket{\chi} \qq{for all} a \in\mathcal{A}_\varphi(0).
\end{equation}
More information on such operators may be found in the excellent review~\cite{Witten_2018}.

The density operator of $\Ket{\phi}$ is then given by
\begin{equation}
  \rho = r\Big[e^{i\hat{v}_a(P+\sum_{b\ne a}\hat{k}_b)}\tilde\rho e^{-i\hat{v}_a(P+\sum_{b\ne a}\hat{k}_b)}\Big],
  \label{Equation: density operator formula}
\end{equation}
where
\begin{multline}
  \tilde\rho = \int_{\RR^{2N}} \dd{p}\dd{p'}\qty\Big(\prod_{b\ne a} \dd{v_b}\dd{v_b'})
  e^{H/2} \qty(S_{e^{-ip'(H+\mathcal{Q})}\phi_a(\{e^{-2\pi p'}v_b\})|\Omega})^\dagger S_{e^{-ip (H+\mathcal{Q})}\phi_a(\{e^{-2\pi p}v_b'\})|\Omega} e^{H/2} \\
  \otimes \bigotimes_{b\ne a}\qty\Big(\theta(v_b)e^{\hat{d}_b/2}\ket{v_b}_b\bra*{v_b'}_be^{\hat{d}_b/2}\theta(v_b') + \theta(-v_b)\delta(v_b-v_b')\mathds{1}_b^{<0})
  \otimes e^{-\pi(N-1)(p+p')}e^{\hat{q}/2}\ket{p'}_p\bra{p}_p e^{\hat{q}/2}  ,
  \label{Equation: tilde rho}
\end{multline}
with $N$ the total number of dynamical cuts. This formula may be explicitly derived using similar techniques to those explained in~\cite{DEHKlong}. Here, let us just confirm that it is correct.

First, we note that $\tilde\rho$ is boost-invariant. Indeed, using $e^{-i(H+\mathcal{Q})t}S_{\chi|\Omega}e^{iHt} = S_{e^{-i(H+\mathcal{Q})t}\chi|\Omega}$ for $\ket{\chi}\in\mathcal{H}_{\text{QFT}}\otimes\mathcal{H}_{-\infty}$, we have
\begin{multline}
  e^{-it\mathcal{C}}\tilde\rho e^{it\mathcal{C}}\\
  = \int_{\RR^{2N}} \dd{p}\dd{p'}\qty\Big(\prod_{b\ne a} \dd{v_b}\dd{v_b'})e^{H/2} \qty(S_{e^{-i(p'+t)(H+\mathcal{Q})}\phi_a(\{e^{-2\pi p'}v_b\})|\Omega})^\dagger S_{e^{-i(p+t) (H+\mathcal{Q})}\phi_a(\{e^{-2\pi p}v_b'\})|\Omega} e^{H/2} \\
  \otimes \bigotimes_{b\ne a}\qty\Big(\theta(v_b)e^{\hat{d}_b/2}e^{\pi t}\ket*{e^{2\pi t}v_b}_b\bra*{e^{2\pi t}v_b'}_be^{\pi t}e^{\hat{d}_b/2}\theta(v_b') + \theta(-v_b)\delta(v_b-v_b')\mathds{1}_b^{<0})\\
  \otimes e^{-\pi(N-1)(p+p')}e^{\hat{q}/2}\ket{p'+t}_p\bra{p+t}_p e^{\hat{q}/2} ;
  \label{Equation: tilde rho boosted}
\end{multline}
after a change of variables $p\to p - t$, $p'\to p' - t$, $v_b\to e^{-2\pi t}v_b$, one finds that the right-hand side of~\eqref{Equation: tilde rho boosted} reduces to~\eqref{Equation: tilde rho}, so $e^{-it\mathcal{C}}\tilde\rho e^{it\mathcal{C}} = \tilde\rho$ as claimed. Therefore, $e^{i\hat{v}_a(P+\sum_{b\ne a}\hat{k}_b)}\tilde\rho e^{-i\hat{v}_a(P+\sum_{b\ne a}\hat{k}_b)}$ is invariant under the full gauge group, which means~\eqref{Equation: density operator formula} is well-defined.

Next, we note that
\begin{equation}
  \tilde\rho \in \mathcal{A}_{\text{QFT}}(0) \otimes \bigotimes_{b\ne a} \mathcal{A}_{\mathcal{S}_b}(0) \otimes \mathcal{B}(\mathcal{H}_C).
  \label{Equation: tilde rho algebra}
\end{equation}
For the parts of the integrand in~\eqref{Equation: tilde rho} which act on the cuts and $\mathcal{H}_C$, this is clear. For the QFT part, one has that, for any states $\ket{\chi},\ket{\chi'}$ in a Hilbert space acted on by $\mathcal{A}_{\text{QFT}}(0)$, the combination $S_\Omega^\dagger S_{\chi'|\Omega}^\dagger S_{\chi|\Omega}S_\Omega$ (where $S_\Omega$ is the Tomita operator of $\ket{\Omega}$) is in the commutant of $\mathcal{A}_{\text{QFT}}(0)$. This follows from the observation that
\begin{equation}
  \bra{\Omega} ab S_\Omega^\dagger S_{\chi'|\Omega}^\dagger S_{\chi|\Omega}S_\Omega c\ket{\Omega} = \bra{\chi'}abc\ket{\chi} = \bra{\Omega} a S_\Omega^\dagger S_{\chi'|\Omega}^\dagger S_{\chi|\Omega}S_\Omega bc\ket{\Omega}
\end{equation}
for all $a,b,c\in\mathcal{A}_{\text{QFT}}(0)$, and noting that $\ket{\Omega}$ is cyclic. Then, using the fact that modular conjugation $a\to J_\Omega a J_\Omega$ exchanges $\mathcal{A}_{\text{QFT}}(0)$ and its commutant, and that $H$ is the modular Hamiltonian of $\ket{\Omega}$, we have that
\begin{equation}
  e^{H/2} S_{\chi'|\Omega}^\dagger S_{\chi|\Omega} e^{H/2} = J_\Omega S_\Omega^\dagger S_{\chi'|\Omega}^\dagger S_{\chi|\Omega}S_\Omega  J_\Omega
\end{equation}
is affiliated with $\mathcal{A}_{\text{QFT}}(0)$. It follows that~\eqref{Equation: tilde rho algebra} holds, and hence that $\rho$ is an element of $r(\mathcal{A}_{\mathcal{N}_a}^G)$.

It is straightforward to then check that $\rho$ reproduces the appropriate expectation values of $\Ket{\phi}$. Indeed, given an $A=r(a)\in r(\mathcal{A}_{\mathcal{N}_a}^G)$, we have
\begin{align}
  \Tr^{\text{phys}}_a(\rho A) &= \Psi_a(\bra{0}_p e^{-\hat{q}/2}\gamma_a(r^{-1}(\rho)a)e^{-\hat{q}/2}\ket{0}_p) \\
  &= \Psi_a(\bra{0}_pe^{-\hat{q}/2}\tilde\rho \gamma_a(a) e^{-\hat{q}/2}\ket{0}_p)\\
  &
  \begin{multlined}[t]
    =\Psi_a\Big(\int_{\RR^{2N-1}} \dd{p}\qty\Big(\prod_{b\ne a} \dd{v_b}\dd{v_b'}) e^{H/2} \qty(S_{\phi_a(\{v_b\})|\Omega})^\dagger S_{e^{-ip(H+\mathcal{Q})}\phi_a(\{e^{-2\pi p}v_b'\})|\Omega}  e^{H/2} \\
      \otimes \bigotimes_{b\ne a}\qty\Big(\theta(v_b)e^{\hat{d}_b/2}\ket{v_b}_b\bra*{v_b'}_be^{\hat{d}_b/2}\theta(v_b') + \theta(-v_b)\delta(v_b-v_b')\mathds{1}_b^{<0})\\
    e^{-\pi(N-1)p}\bra{p}_p e^{\hat{q}/2}\gamma_a(a)e^{-\hat{q}/2}\ket{0}_p\Big)
  \end{multlined}\\
  &
  \begin{multlined}[t]
    =\bra{\Omega}\int_{\RR^{2N-1}} \dd{p}\qty\Big(\prod_{b\ne a} \dd{v_b}\dd{v_b'}) \qty(S_{\phi_a(\{v_b\})|\Omega})^\dagger S_{e^{-ip(H+\mathcal{Q})}\phi_a(\{e^{-2\pi p}v_b'\})|\Omega}\\
    \bigotimes_{b\ne a}\bra{v_b'}
    e^{-\pi(N-1)p}e^{(H+\sum_{b\ne a}\hat{d}_b)/2}\bra{p}_p e^{\hat{q}/2}\gamma_a(a)e^{-\hat{q}/2}\ket{0}_p \\
    e^{-(H+\sum_{b\ne a}\hat{d}_b)/2}\bigotimes_{b\ne a}\ket{v_b}\ket{\Omega}
  \end{multlined}\\
  &
  \begin{multlined}[t]
    =\int_{\RR^{2N-1}} \dd{p}\qty\Big(\prod_{b\ne a} \dd{v_b}\dd{v_b'}) \bra*{\phi_a(\{e^{-2\pi p}v_b'\})}e^{ip(H+\mathcal{Q})}\otimes
    \bigotimes_{b\ne a}\bra{v_b'}
    e^{-\pi(N-1)p}\\
    \bra{p}_p \gamma_a(a)\ket{0}_p
    \ket*{\phi_a(\{v_b\})}\otimes \bigotimes_{b\ne a}\ket{v_b}
  \end{multlined}\\
  &=\int_{\RR^{N}} \dd{p}\qty\Big(\prod_{b\ne a} \dd{v_b'})  \bra*{\phi_a(\{v_b'\})}\otimes
  \bigotimes_{b\ne a}\bra{v_b'}
  e^{ip(H+\mathcal{Q}+\sum_{b\ne a}\hat{d}_b)}\bra{p}_p \gamma_a(a)\ket{0}_p\ket*{\phi_{|C\mathcal{S}_a}}\\
  &=\int_{-\infty}^\infty \dd{p}\bra*{\phi_{|C\mathcal{S}_a}}\bra{0}_p
  e^{ip(H+\hat{q}+\mathcal{Q}+\sum_{b\ne a}\hat{d}_b)} \gamma_a(a)\ket{0}_p\ket*{\phi_{|C\mathcal{S}_a}},
  \label{Equation: density operator expectation}
\end{align}
where we have used that $\gamma_a(a)$ commutes with $H+\hat{q}+\sum_{b\ne a}\hat{d}_b$, and in the penultimate equality changed variables $v_b'\to e^{2\pi p}v_b'$. On the other hand,
\begin{align}
  \Bra{\phi} A \Ket{\phi} &= \int_{-\infty}^\infty \dd{s} \int_{-\infty}^\infty \dd{t} \bra*{\phi_{|C\mathcal{S}_a}}\otimes\bra{0}_p\otimes\bra{0}_a e^{-is\mathcal{D}}e^{-it(\mathcal{C}-i\pi)} a \ket*{\phi_{|C\mathcal{S}_a}}\otimes\ket{0}_p\otimes\ket{0}_a\\
  &= \int_{-\infty}^\infty \dd{s} \int_{-\infty}^\infty \dd{t} \bra*{\phi_{|C\mathcal{S}_a}}\otimes\bra{0}_p e^{-is(P-\Theta+\sum_{b\ne a}\hat{k}_b)}\\
  &\hspace*{10em} e^{-it(H+\hat{q}+\mathcal{Q}+\sum_{b\ne a}\hat{d}_b)} \gamma_a(a) \ket*{\phi_{|C\mathcal{S}_a}}\otimes\ket{0}_p \bra{-s}\ket{0}_a\nonumber\\
  &= \int_{-\infty}^\infty \dd{t} \bra*{\phi_{|C\mathcal{S}_a}}\otimes\bra{0}_p e^{-it(H+\hat{q}+\mathcal{Q}+\sum_{b\ne a}\hat{d}_b)} \gamma_a(a) \ket*{\phi_{|C\mathcal{S}_a}}\otimes\ket{0}_p,
  \label{Equation: phi expectation value}
\end{align}
where the second equality follows from $e^{-it\hat{d}_a}\ket{0}_a = e^{\pi t}\ket{0}_a$. Thus, comparing~\eqref{Equation: phi expectation value} with~\eqref{Equation: density operator expectation}, we see that~\eqref{Equation: density operator definition} holds. Therefore, $\rho$ is the density operator of $\Ket{\phi}$, as claimed.

\section{The generalised second law for dynamical cuts}
\label{Section: GSL}

We now prove the GSL for dynamical cuts. The argument proceeds as follows. First, we describe how the GSL includes an implicit conditioning on the ordering of the cuts, and explain how we must explicitly do this conditioning when considering dynamical cuts. Under this conditioning, the algebras associated with dynamical cuts constructed in Section~\ref{Section: dynamical cuts} obey isotony, unlike those associated with fixed cuts constructed in Section~\ref{Section: fixed cuts}. This allows us to apply the monotonicity of relative entropy. In particular, a certain application of that more general inequality lands us at the GSL:
\begin{equation}
  S_{\mathcal{N}_a}[\Phi] \ge
  S_{\mathcal{N}_b}[\Phi] + F(\mathcal{S}_a\in\mathcal{N}_b).
  \label{Equation: dynamical GSL}
\end{equation}
Here, we are assuming that the cuts are ordered such that $\mathcal{N}_a\subset\mathcal{N}_b$. As explained in the Introduction, $S_{\mathcal{N}_a}[\Phi]$, $S_{\mathcal{N}_b}[\Phi]$ are the von Neumann entropies for the algebras of the regions $\mathcal{N}_a,\mathcal{N}_b$ respectively, in the state $\Phi$, and $F(\mathcal{S}_a\in\mathcal{N}_b)$ is the free energy of the later cut $\mathcal{S}_a$ (considered as a degree of freedom contained within $\mathcal{N}_b$).

\subsection{Entropy inequalities from conditioning on an ordering}

The second law is a statement that is conditioned on the ordering of the cuts. If $S[\mathcal{N}_a]$ is an entropy associated with the region to the exterior of the cut $\mathcal{S}_a$, then the second law should in principle be some statement of the form (ignoring here for conciseness the cut free energy)
\begin{equation}
  \text{``} \,v_a\ge v_b \implies S[\mathcal{N}_b] \ge S[\mathcal{N}_b],\quad v_a \le v_b \implies S[\mathcal{N}_a] \le S[\mathcal{N}_b].\,\text{''}
\end{equation}
With this in mind, we need to condition on an ordering as part of our formulation of the second law; if we don't do this, then there is no fixed sense in which a given dynamical cut is to the future of another, and so no reason to expect a second law to hold.

Let us write $\mathcal{S}_a>\mathcal{S}_b$ to denote $\mathcal{S}_a$ being to the future of $\mathcal{S}_b$. The operator which conditions on $\mathcal{S}_a>\mathcal{S}_b$ is $\theta(\hat{v}_a-\hat{v}_b)$, which is a central element of both $\mathcal{A}_{\mathcal{N}_a}^G$ and $\mathcal{A}_{\mathcal{N}_b}^G$. This means we can condition on $\mathcal{S}_a>\mathcal{S}_b$ by restricting to states whose density operators $\rho_a\in\mathcal{A}_{\mathcal{N}_a}^G$, $\rho_b\in\mathcal{A}_{\mathcal{N}_b}^G$ satisfy
\begin{equation}
  \rho_a = \rho_a\theta(\hat{v}_a-\hat{v}_b), \qquad \rho_b=\rho_b\theta(\hat{v}_a-\hat{v}_b).
  \label{Equation: conditioned density operators}
\end{equation}
Let us define traces on the subalgebras
\begin{equation}
  \mathcal{A}_{\mathcal{N}_a,\mathcal{S}_a>\mathcal{S}_b}^G=\theta(\hat{v}_a-\hat{v}_b)\mathcal{A}_{\mathcal{N}_a}^G\subset \mathcal{A}_{\mathcal{N}_a}^G,
  \qquad
  \mathcal{A}_{\mathcal{N}_b,\mathcal{S}_a>\mathcal{S}_b}^G=\theta(\hat{v}_a-\hat{v}_b)\mathcal{A}_{\mathcal{N}_{\changed[a]{b}}}^G\subset \mathcal{A}_{\mathcal{N}_{\changed[a]{b}}}^G,
\end{equation}
as the restrictions of $\Tr_a$, $\Tr_b$ on $\mathcal{A}_{\mathcal{N}_a}^G$, $\mathcal{A}_{\mathcal{N}_b}^G$ respectively. Although $\rho_a$ and $\rho_b$ were originally defined as the density operators of $\mathcal{A}_{\mathcal{N}_a}^G$ and $\mathcal{A}_{\mathcal{N}_b}^G$, one may then straightforwardly show that $\rho_a$ and $\rho_b$ obeying~\eqref{Equation: conditioned density operators} are also the corresponding density operators of the \emph{subalgebras} $\mathcal{A}_{\mathcal{N}_a,\mathcal{S}_a>\mathcal{S}_b}^G$ and $\mathcal{A}_{\mathcal{N}_b,\mathcal{S}_a>\mathcal{S}_b}^G$ respectively. Moreover, the von Neumann entropies for $\mathcal{A}_{\mathcal{N}_a}^G$, $\mathcal{A}_{\mathcal{N}_b}^G$ agree with the von Neumann entropies for the subalgebras $\mathcal{A}_{\mathcal{N}_a,\mathcal{S}_a>\mathcal{S}_b}^G$, $\mathcal{A}_{\mathcal{N}_b,\mathcal{S}_a>\mathcal{S}_b}^G$ respectively. This fact is one half of what allows us to derive a GSL.

The other half is that these subalgebras obey \emph{isotony}:
\begin{equation}
  \mathcal{A}_{\mathcal{N}_a,\mathcal{S}_a>\mathcal{S}_b}^G \subseteq \mathcal{A}_{\mathcal{N}_b,\mathcal{S}_a>\mathcal{S}_b}^G.
  \label{Equation: isotony}
\end{equation}
It is to some extent intuitively obvious that this should hold. These two algebras contain operators acting on the degrees of freedom in $\mathcal{N}_a$ and $\mathcal{N}_b$ respectively, conditioned on $\mathcal{S}_a>\mathcal{S}_b$. Under this ordering one has $\mathcal{N}_a\subset\mathcal{N}_b$, so clearly everything which can be observed in $\mathcal{N}_a$ ought to also be capable of being observed in $\mathcal{N}_b$. We will explicitly confirm this below. Due to isotony~\eqref{Equation: isotony}, we can apply the monotonicity of relative entropy~\cite{Araki:1976zv}.\footnote{The relevant result in~\cite{Araki:1976zv} is that relative entropy is monotonic for \emph{hyperfinite} algebras. QFT algebras are hyperfinite~\cite{Buchholz:1986bg}, which implies that all the algebras considered in this paper are hyperfinite too.} In particular, for any two weights $\Phi$ and $\Phi'$, the relative entropy for $\mathcal{A}_{\mathcal{N}_b,\mathcal{S}_a>\mathcal{S}_b}^G$ is bounded below by the relative entropy for its subalgebra $\mathcal{A}_{\mathcal{N}_a,\mathcal{S}_a>\mathcal{S}_b}^G$:
\begin{equation}
  S_{\text{rel}}^{\mathcal{A}_{\mathcal{N}_a,\mathcal{S}_a>\mathcal{S}_b}^G}(\Phi||\Phi')
  \le
  S_{\text{rel}}^{\mathcal{A}_{\mathcal{N}_b,\mathcal{S}_a>\mathcal{S}_b}^G}(\Phi||\Phi').
  \label{Equation: monotonicity of relative entropy}
\end{equation}
We will derive the GSL from this inequality.

It suffices to condition on the ordering of just two cuts, $\mathcal{S}_a>\mathcal{S}_b$, in order to use~\eqref{Equation: monotonicity of relative entropy}. However, in the following we will further condition on a complete ordering of all the cuts. Without loss of generality, we can label the cuts according to the ordering, so that
\begin{equation}
  \mathcal{S}_N > \mathcal{S}_{N-1}>\dots>\mathcal{S}_2>\mathcal{S}_1,
  \label{Equation: complete ordering}
\end{equation}
where $N$ is the total number of cuts. Moreover, we will focus on the case where $\mathcal{S}_a=\mathcal{S}_{b+1}$, i.e.\ where $\mathcal{S}_a$ is the cut just after $\mathcal{S}_b$. The reason for these further conditions is that they make the analysis more straightforward, while still allowing one to derive a GSL relating the entropies of subsequent cuts.\footnote{It is quite likely that our methods can be generalised to the case where these extra conditions do not hold, obtaining stronger entropy inequalities than the GSL that we present in this work. But a complete quantum information theoretic interpretation of these inequalities would be more complicated. Indeed, we suspect that relaxing the conditioning on a complete ordering~\eqref{Equation: complete ordering} may lead to subtleties related to indefinite causal order, while relaxing $\mathcal{S}_a=\mathcal{S}_{b+1}$ would require one to account for entanglement among the cuts in between $\mathcal{S}_a$ and $\mathcal{S}_b$. We shall leave further exploration of these interesting possibilities to future work.}

\subsubsection{Isotony}

Let us now explicitly confirm that isotony~\eqref{Equation: isotony} holds. One way to do so is to convince oneself that, before imposing gauge-invariance, the kinematical algebras obey isotony (in the presence of the conditioning $\mathcal{S}_a>\mathcal{S}_b$), which directly implies that the gauge-invariant algebras do too. Here, let us work directly with the gauge-invariant algebras. Using~\eqref{Equation: invariant algebra}, one has
\begin{align}
  \MoveEqLeft e^{-i\hat{v}_b(P+\sum_{c\ne b}\hat{k}_c)}\mathcal{A}^{G}_{\mathcal{N}_b,\mathcal{S}_a>\mathcal{S}_b}e^{i\hat{v}_b(P+\sum_{c\ne b}\hat{k}_c)} \\
  &= \theta(\hat{v}_a)e^{-i\hat{v}_b(P+\sum_{c\ne b}\hat{k}_c)}\mathcal{A}^{G}_{\mathcal{N}_b}e^{i\hat{v}_b(P+\sum_{c\ne b}\hat{k}_c)} \\
  &=e^{i\hat{p}(H+\sum_{c\ne b}\hat{d}_c)}\qty\Big(\mathcal{A}_{\text{QFT}}(0)\otimes\mathcal{B}(\mathcal{H}_a^{>0})\otimes \bigotimes_{c\ne a,b} \mathcal{A}_{\mathcal{S}_c}(0))e^{-i\hat{p}(H+\sum_{c\ne b}\hat{d}_c)}\vee \{\hat{q}\}'',
  \label{Equation: isotony part one}
\end{align}
and similarly,
\begin{align}
  \MoveEqLeft e^{-i\hat{v}_a(P+\sum_{c\ne a}\hat{k}_c)}\mathcal{A}^{G}_{\mathcal{N}_a,\mathcal{S}_a>\mathcal{S}_b}e^{i\hat{v}_a(P+\sum_{c\ne a}\hat{k}_c)} \\
  &= \theta(-\hat{v}_b)e^{-i\hat{v}_a(P+\sum_{c\ne a}\hat{k}_c)}\mathcal{A}^{G}_{\mathcal{N}_a}e^{i\hat{v}_a(P+\sum_{c\ne a}\hat{k}_c)} \\
  &= \theta(-\hat{v}_b) e^{i\hat{p}(H+\sum_{c\ne a,b}\hat{d}_c)}\qty\Big(\mathcal{A}_{\text{QFT}}(0)\otimes\bigotimes_{c\ne a,b} \mathcal{A}_{\mathcal{S}_c}(0))e^{-i\hat{p}(H+\sum_{c\ne a,b}\hat{d}_c)}\vee \{\hat{q}\}''.
\end{align}
So, noting
\begin{equation}
  e^{-i\hat{v}_b(P+\sum_{c\ne b}\hat{k}_c)}e^{i\hat{v}_a(P+\sum_{c\ne a}\hat{k}_c)} = e^{-i\hat{v}_b\hat{k}_a} e^{i(\hat{v}_a-\hat{v}_b)(P+\sum_{c\ne a,b}\hat{k}_c)}e^{i\hat{v}_a\hat{k}_b},
\end{equation}
we have
\begin{align}
  \MoveEqLeft e^{-i\hat{v}_b(P+\sum_{c\ne b}\hat{k}_c)}\mathcal{A}^{G}_{\mathcal{N}_a,\mathcal{S}_a>\mathcal{S}_b}e^{i\hat{v}_b(P+\sum_{c\ne b}\hat{k}_c)} \\
  &=
  \begin{multlined}[t]
    e^{-i\hat{v}_b\hat{k}_a}\theta(\hat{v}_a-\hat{v}_b)e^{i(\hat{v}_a-\hat{v}_b)(P+\sum_{c\ne a,b}\hat{k}_c)}\\
    \qty\Big(e^{i\hat{p}(H+\sum_{c\ne a,b}\hat{d}_c)}\qty\Big(\mathcal{A}_{\text{QFT}}(0)\otimes\bigotimes_{c\ne a,b} \mathcal{A}_{\mathcal{S}_c}(0))e^{-i\hat{p}(H+\sum_{c\ne a,b}\hat{d}_c)}\vee \{\hat{q}\}'')\\
    e^{-i(\hat{v}_a-\hat{v}_b)(P+\sum_{c\ne a,b}\hat{k}_c)}e^{i\hat{v}_b\hat{k}_a}
  \end{multlined}\\
  &\subseteq e^{-i\hat{v}_b\hat{k}_a}\theta(\hat{v}_a-\hat{v}_b)
  \qty\Big(e^{i\hat{p}(H+\sum_{c\ne a,b}\hat{d}_c)}\qty\Big(\mathcal{A}_{\text{QFT}}(0)\otimes\bigotimes_{c\ne a,b} \mathcal{A}_{\mathcal{S}_c}(0))e^{-i\hat{p}(H+\sum_{c\ne a,b}\hat{d}_c)}\vee \{\hat{q}\}'')e^{i\hat{v}_b\hat{k}_a}\\
  &= e^{i\hat{p}(H+\sum_{c\ne a,b}\hat{d}_c)}\qty\Big(\mathcal{A}_{\text{QFT}}(0)\otimes\theta(\hat{v}_a)\otimes\bigotimes_{c\ne a,b} \mathcal{A}_{\mathcal{S}_c}(0))e^{-i\hat{p}(H+\sum_{c\ne a,b}\hat{d}_c)}\vee \{\hat{q}\}'',
  \label{Equation: isotony part two}
\end{align}
where the $\subseteq$ follows from the fact that $\mathcal{A}_{\text{QFT}}(v)\subseteq\mathcal{A}_{\text{QFT}}(0)$ and $\mathcal{A}_{\mathcal{S}_c}(v)\subset\mathcal{A}_{\mathcal{S}_c}(0)$ for positive $v$, and the conjugation by $e^{i(\hat{v}_a-\hat{v}_b)(P+\sum_{c\ne a,b}\hat{k}_c)}$ (with the presence of $\theta(\hat{v}_a-\hat{v}_b)$) transforms the algebras in this way. Comparing~\eqref{Equation: isotony part two} with~\eqref{Equation: isotony part one}, one finds (since $\theta(\hat{v}_a)\in\mathcal{B}(\mathcal{H}_a^{>0})$)
\begin{equation}
  e^{-i\hat{v}_b(P+\sum_{c\ne b}\hat{k}_c)}\mathcal{A}^{G}_{\mathcal{N}_a,\mathcal{S}_a>\mathcal{S}_b}e^{i\hat{v}_b(P+\sum_{c\ne b}\hat{k}_c)} \subseteq e^{-i\hat{v}_b(P+\sum_{c\ne b}\hat{k}_c)}\mathcal{A}^{G}_{\mathcal{N}_b,\mathcal{S}_a>\mathcal{S}_b}e^{i\hat{v}_b(P+\sum_{c\ne b}\hat{k}_c)},
\end{equation}
or equivalently, the claimed isotony~\eqref{Equation: isotony}.

\subsection{Free energy of the later cut}

As explained in the Introduction, an important role in the GSL will be played by the free energy of the later cut $\mathcal{S}_a$. Let us now give more details on this quantity.

We more precisely consider the free energy of $\mathcal{S}_a$ in the perspective of $\mathcal{S}_b$ and the asymptotic boost clock $C$. This is described by the algebra
\begin{equation}
  \mathcal{A}_{\mathcal{S}_a\in\mathcal{N}_b}^G = e^{i\hat{v}_b\hat{k}_a}e^{i\hat{p}\hat{d}_a}\mathcal{B}(\mathcal{H}_a^{>0}) e^{-i\hat{p}\hat{d}_a} e^{-i\hat{v}_b\hat{k}_a}
\end{equation}
consisting of operators acting on the later cut $\mathcal{S}_a$, dressed to $\mathcal{S}_b$ and $C$. This is a subalgebra of $\mathcal{A}_{\mathcal{N}_b,\mathcal{S}_a>\mathcal{S}_b}^G$. It is clearly a Type I algebra, being unitarily equivalent to the algebra of bounded operators acting on the Hilbert space $\mathcal{H}_a^{>0}$. The trace on this algebra may be simply written
\begin{equation}
  \Tr_{\mathcal{S}_a\in\mathcal{N}_b}\qty(e^{i\hat{v}_b\hat{k}_a}e^{i\hat{p}\hat{d}_a} a e^{-i\hat{p}\hat{d}_a} e^{-i\hat{v}_b\hat{k}_a}) = \tr_{\mathcal{H}_a^{>0}}(a).
\end{equation}
Any state $\Phi$ has a density operator $\hat{\rho}_a\in\mathcal{A}_{\mathcal{S}_a\in\mathcal{N}_b}^G$ via this trace,\footnote{\label{Footnote: explicit free energy state}For a state $\Ket{\phi}$ of the form described in Section~\ref{Subsection: physical Hilbert space}, one may confirm that this algebra is faithfully represented on the Hilbert space, and the density operator in the representation is
  \begin{equation}
    r(\hat\rho_a) = r\qty[e^{i\hat{v}_b\hat{k}_a}e^{i\hat{p}\hat{d}_a} \tr_{\bar{a}}(\ket*{\phi_{|C\mathcal{S}_b}}\bra*{\phi_{|C\mathcal{S}_b}}) e^{-i\hat{p}\hat{d}_a} e^{-i\hat{v}_b\hat{k}_a}],
  \end{equation}
where $\tr_{\bar{a}}$ denotes a partial trace over all other Hilbert space tensor factors than $\mathcal{H}_a$.} and hence a von Neumann entropy
\begin{equation}
  S_{\mathcal{S}_a\in\mathcal{N}_b}[\Phi] := S_{\text{vN}}^{\mathcal{A}^G_{\mathcal{S}_a\in\mathcal{N}_b}}(\Phi) = -\Tr_{\mathcal{S}_a\in\mathcal{N}_b}\qty(\hat\rho_a\log\hat\rho_a) = -\Phi(\log\hat\rho_a).
\end{equation}

Note that $\mathcal{A}_{\mathcal{S}_a\in\mathcal{N}_b}^G$ and $\mathcal{A}_{\mathcal{N}_a,\mathcal{S}_a>\mathcal{S}_b}^G$ together generate $\mathcal{A}_{\mathcal{N}_b,\mathcal{S}_a>\mathcal{S}_b}^G$:
\begin{equation}
  \mathcal{A}_{\mathcal{S}_a\in\mathcal{N}_b}^G\vee\mathcal{A}_{\mathcal{N}_a,\mathcal{S}_a>\mathcal{S}_b}^G = \mathcal{A}_{\mathcal{N}_b,\mathcal{S}_a>\mathcal{S}_b}^G.
\end{equation}
This is because we may use operators in $\mathcal{A}_{\mathcal{S}_a\in\mathcal{N}_b}^G$ to move $\mathcal{S}_a$ to where it overlaps with $\mathcal{S}_b$. At that point, any operator acting in $\mathcal{N}_b$ is also an operator acting in $\mathcal{N}_a$, from which the above follows. One might hope to apply subadditivity of entropy to the above equation in order to derive an inequality like the GSL. Unfortunately, subadditivity of the entropy only straightforwardly applies to Type I algebras, so a more involved argument will be required for the present case.

Let us define the operator
\begin{equation}
  H_{\mathcal{S}_a|\mathcal{S}_b} = \hat{d}_a - 2\pi \hat{v}_b\hat{k}_a\in \mathcal{A}_{\mathcal{S}_a\in\mathcal{N}_b}^G.
\end{equation}
This is the generator of a boost of $\mathcal{S}_a$ around $\mathcal{S}_b$. The free energy that will play a role in the GSL is that of $\mathcal{S}_a\in\mathcal{N}_b$, with respect to $H_{\mathcal{S}_a|\mathcal{S}_b}$:
\begin{equation}
  F(\mathcal{S}_a\in\mathcal{N}_b) = \expval*{H_{\mathcal{S}_a|\mathcal{S}_b}}_\Phi - S_{\mathcal{S}_a\in\mathcal{N}_b}[\Phi].
\end{equation}

\subsubsection{Change in area due to the cut}

It is worth at this stage commenting on how $H_{\mathcal{S}_a|\mathcal{S}_b}$ is related to the effect of $\mathcal{S}_a$ on the geometry of the horizon $\mathscr{H}$. Consider the classical theory. At leading order in perturbation theory, Raychaudhuri's equation on $\mathscr{H}$ reduces to
\begin{equation}
  \dv{\Theta}{v} = - 8\pi G_{\text{N}} t_{vv},
  \label{Equation: Raychaudhuri}
\end{equation}
where $\Theta=\Theta(v)$ is the expansion, and $t_{vv}$ is the ${}_{vv}$ component of the stress tensor of everything on the horizon, including gravitons, matter, and the cuts. It may be decomposed as
\begin{equation}
  t_{vv} = T_{vv} + \sum_{c}\delta(v-v_c) t_c,
\end{equation}
where $T_{\mu\nu}$ is the stress tensor of the gravitons and matter, and $t_c$ is the contribution of cut $\mathcal{S}_c$, satisfying $\int_{\mathscr{H}} \eta\,\delta(v-v_c) t_c=k_c$. Using \eqref{Equation: Raychaudhuri}, and the fact that the expansion vanishes at $v\to\infty$, we may write
\begin{align}
  \frac{A_\infty}{4G_{\text{N}}}-\frac{A(\mathcal{S}_a)}{4G_{\text{N}}} &= 2\pi\int_{\mathscr{H}}\eta \theta(v-v_a) (v-v_a) t_{vv} \\*
  &= 2\pi\qty\Big(\int_{\mathscr{H}}\eta \theta(v-v_a)(v-v_a) T_{vv} + \sum_{c=a+1}^N (v_c-v_a)k_c)
  \label{Equation: gravitational constraint area cut}
\end{align}
where $A({\mathcal{S}_a}),A_\infty$ are the area of $\mathcal{S}_a$ and the asymptotic future cut respectively, and similarly for $\mathcal{S}_b$. Thus,
\begin{equation}
  \frac{A(\mathcal{S}_a)}{4G_{\text{N}}}-\frac{A(\mathcal{S}_b)}{4G_{\text{N}}} = 2\pi (v_a - v_b) k_a + \dots,
\end{equation}
where the $\dots$ contains contributions from the gravitons, matter and momenta $k_c$ of cuts $\mathcal{S}_c$ with $c\ne a$. Upon quantisation to a Hermitian operator, the first term on the right-hand side becomes $H_{\mathcal{S}_a|\mathcal{S}_b}$. Thus, as described in the introduction, we may identify
\begin{equation}
  H_{\mathcal{S}_a|\mathcal{S}_b} = \frac{\Delta_a A}{4G_{\text{N}}},
\end{equation}
where $\Delta_a A$ is the contribution to the change in area between the two cuts due to the energy of $\mathcal{S}_a$.

\subsection{Derivation of the GSL}

We finally now come to the main point of the paper: the derivation of the GSL for dynamical cuts.

Let us write~\eqref{Equation: monotonicity of relative entropy} in terms of density operators. If $\rho_a,\rho_a'$ and $\rho_b,\rho_b'$ are the density operators of $\Phi,\Phi'$ respectively, in the algebras $\mathcal{A}_{\mathcal{N}_a,\mathcal{S}_a>\mathcal{S}_b}^G$ and $\mathcal{A}_{\mathcal{N}_b,\mathcal{S}_a>\mathcal{S}_b}^G$ respectively, then we have
\begin{align}
  S_{\text{rel}}^{\mathcal{A}_{\mathcal{N}_a,\mathcal{S}_a>\mathcal{S}_b}^G}(\Phi||\Phi') &= \Tr_a(\rho_a(\log \rho_a - \log \rho'_a)) = -S_{\text{vN}}^{\mathcal{A}_{\mathcal{N}_a,\mathcal{S}_a>\mathcal{S}_b}^G}(\rho_a) - \expval{\log \rho'_a}_\Phi,\\
  S_{\text{rel}}^{\mathcal{A}_{\mathcal{N}_b,\mathcal{S}_a>\mathcal{S}_b}^G}(\Phi||\Phi') &= \Tr_b(\rho_b(\log \rho_b - \log \rho'_b)) = -S_{\text{vN}}^{\mathcal{A}_{\mathcal{N}_b,\mathcal{S}_a>\mathcal{S}_b}^G}(\rho_b) - \expval{\log \rho'_b}_\Phi.
\end{align}
Now, using
\begin{equation}
  S_{\mathcal{N}_a}[\Phi] := S_{\text{vN}}^{\mathcal{A}_{\mathcal{N}_a^G}}(\rho_a) = S_{\text{vN}}^{\mathcal{A}_{\mathcal{N}_a,\mathcal{S}_a>\mathcal{S}_b}^G}(\rho_a)\qq{and}
  S_{\mathcal{N}_b}[\Phi] := S_{\text{vN}}^{\mathcal{A}_{\mathcal{N}_b^G}}(\rho_b) = S_{\text{vN}}^{\mathcal{A}_{\mathcal{N}_b,\mathcal{S}_a>\mathcal{S}_b}^G}(\rho_b)
\end{equation}
with~\eqref{Equation: monotonicity of relative entropy}, we have
\begin{equation}
  S_{\mathcal{N}_a}[\Phi] + \expval{\log \rho_a'}_\Phi \ge
  S_{\mathcal{N}_b}[\Phi] + \expval{\log \rho_b'}_\Phi.
  \label{Equation: monotonicity Phi'}
\end{equation}
This is close to the kind of inequality we want for the GSL. It holds for any $\Phi'$, so the density operators $\rho_a'$ and $\rho_b'$ are free variables -- but they are not independent of each other. Indeed, they must agree when evaluating expectation values of operators in the subalgebra $\mathcal{A}_{\mathcal{N}_a,\mathcal{S}_a>\mathcal{S}_b}^G$:
\begin{equation}
  \Tr_a(\rho'_a a) = \Tr_b(\rho'_b a) \qq{for all} a\in\mathcal{A}_{\mathcal{N}_a,\mathcal{S}_a>\mathcal{S}_b}^G.
\end{equation}
Note that monotonicity of relative entropy applies for \emph{weights} $\Phi'$ as well as states. Thus, the density operators $\rho'_a,\rho'_b$ need not be normalised or even trace-class.

We will choose a certain useful weight $\Phi'$. In particular, consider $\rho'_b = \Pi_b e^{\hat{q}/2}\sigma e^{\hat{q}/2}$, for non-negative
\begin{equation}
  \sigma = e^{i\hat{v}_b\hat{k}_a}e^{i\hat{p}\hat{d}_a}\tilde\sigma e^{-i\hat{p}\hat{d}_a} e^{-i\hat{v}_b\hat{k}_a} \in \mathcal{A}_{\mathcal{S}_a\in\mathcal{N}_b}^G,
\end{equation}
and
\begin{equation}
  \Pi_b = \prod_{c=1}^{b-1} \theta(\hat{v}_b-\hat{v}_c) \prod_{c=b+1}^N \theta(\hat{v}_c-\hat{v}_b)\in Z(\mathcal{A}_{\mathcal{N}_b}^G).
\end{equation}
As we will shortly show, the corresponding density operator for the later cut is $\rho_a' = \Pi_a e^{\hat{q}}\Psi_{\mathcal{S}_a}(\tilde\sigma)$, with $\Pi_a$ defined similarly to $\Pi_b$.
Let us set $\tilde\sigma = e^{\hat{d}_a/2}\tilde\rho e^{\hat{d}_a/2}$ for some $\tilde\rho$ satisfying $\tr_{\mathcal{H}_a^{>0}}(\tilde\rho)=1$. Then
\begin{equation}
  \rho_a' = \Pi_a e^{\hat{q}}, \qquad \rho_b' = \Pi_b e^{i\hat{v}_b\hat{k}_a}e^{(\hat{q}+\hat{d}_a)/2}e^{i\hat{p}\hat{d}_a}\tilde\rho e^{-i\hat{p}\hat{d}_a} e^{(\hat{q}+\hat{d}_a)/2}e^{-i\hat{v}_b\hat{k}_a}.
\end{equation}
Using that $\hat{q}+\hat{d}_a$ commutes with $e^{i\hat{p}\hat{d}_a}\tilde\rho e^{-i\hat{p}\hat{d}_a}$, we have
\begin{equation}
  \Pi_a\log\rho_a' = \Pi_a \hat{q}, \qquad \Pi_b\log\rho_b' = \Pi_b \qty(\hat{q}+\hat{d}_a-2\pi \hat{v}_b\hat{k}_a + \log \hat\rho) = \Pi_b\qty(\hat{q} + H_{\mathcal{S}_a|\mathcal{S}_b} + \log \hat\rho),
\end{equation}
where $\hat\rho = e^{i\hat{v}_b\hat{k}_a}e^{i\hat{p}\hat{d}_a}\tilde\rho e^{-i\hat{p}\hat{d}_a}e^{-i\hat{v}_b\hat{k}_a}$ is some normalised density operator in $\mathcal{A}_{\mathcal{S}_a\in\mathcal{N}_b}^G$.

Applying this to~\eqref{Equation: monotonicity Phi'} yields
\begin{equation}
  S_{\mathcal{N}_a}[\Phi] \ge
  S_{\mathcal{N}_b}[\Phi] + \expval{H_{\mathcal{S}_a|\mathcal{S}_b} + \log \hat\rho}_\Phi.
\end{equation}
This inequality holds for any normalised density operator $\hat\rho\in\mathcal{A}_{\mathcal{S}_a\in\mathcal{N}_b}^G$. We get the strongest possible inequality by maximising the right-hand side, and it turns out that this maximum is attained when $\hat\rho$ is the density operator of $\Phi$, in which case $\expval{\log\hat\rho}_\Phi=-S_{\mathcal{S}_a\in\mathcal{N}_b}[\Phi]$.\footnote{Indeed, this follows from the fact that the relative entropy of two states on the algebra $\mathcal{A}_{\mathcal{S}_a\in\mathcal{N}_b}^G$ is non-negative, and vanishes if and only if the two states are equal.} Therefore, we have
\begin{equation}
  S_{\mathcal{N}_a}[\Phi] \ge
  S_{\mathcal{N}_b}[\Phi] + F(\mathcal{S}_a\in\mathcal{N}_b).
  \label{Equation: proven GSL}
\end{equation}
This is the GSL for dynamical cuts, and we have now proven it. It holds without a semiclassical limit or a UV cutoff, but we will show in Section~\ref{Section: semiclassical} that it reduces to the usual GSL when we do impose these.

Let us now explain why $\rho_a'=\Pi_ae^{\hat{q}}\Psi_{\mathcal{S}_a}(\tilde\sigma)$. Consider the weight on $\mathcal{A}_{\mathcal{N}_b,\mathcal{S}_a>\mathcal{S}_b}^G$ given by
\begin{equation}
  \Phi'(a) = \Tr_b(\rho_b' a) = \Tr_b(\Pi_b e^{\hat{q}/2} \sigma e^{\hat{q}/2}a), \qquad a\in\mathcal{A}_{\mathcal{N}_b,\mathcal{S}_a>\mathcal{S}_b}^G.
\end{equation}
Suppose we restrict to $a\in \mathcal{A}_{\mathcal{N}_a,\mathcal{S}_a>\mathcal{S}_b}^G$. For such $a$ it may be confirmed that
\begin{equation}
  \gamma_a(a) = \tilde{a}\otimes\theta(-\hat{v}_b),\qquad \gamma_b(a) = e^{i\hat{v}_a(P+\sum_{c\ne a,b}\hat{k}_c)}\qty(\tilde{a} \otimes\theta(\hat{v}_a)) e^{-i\hat{v}_a(P+\sum_{c\ne a,b}\hat{k}_c)}
\end{equation}
for some $\tilde{a}$. Using
\begin{equation}
  \theta(\hat{v}_a)\prod_{c=1}^{b-1}\theta(-\hat{v}_c)\gamma_b(a) = \prod_{c=1}^{b-1}\theta(-\hat{v}_c) e^{i\hat{v}_a(P+\sum_{c=a+1}^N\hat{k}_c)}\qty(\tilde{a} \otimes\theta(\hat{v}_a)) e^{-i\hat{v}_a(P+\sum_{c=a+1}^N\hat{k}_c)},
\end{equation}
we then have
\begin{align}
  \Phi'(a) &= \Psi_b\qty\Big(\bra{0}_p\gamma_b(\Pi_b)\gamma_b(\sigma) e^{\hat{q}/2}\gamma_b(a)e^{-\hat{q}/2}\ket{0}_p)\\
  &= \Psi_b\qty\Big(\prod_{c=1}^{b-1}\theta(-\hat{v}_c)\prod_{c=a+1}^N\theta(\hat{v}_c)\tilde\sigma \bra{0}_pe^{\hat{q}/2}e^{-i\hat{v}_a(P+\sum_{c=a+1}^N\hat{k}_c)}\qty(\tilde{a} \otimes\theta(\hat{v}_a)) e^{i\hat{v}_a(P+\sum_{c=a+1}^N\hat{k}_c)}e^{-\hat{q}/2}\ket{0}_p)\label{Equation: partial trace a}\\
  &= \Psi_b\qty\Big(\prod_{c=1}^{b-1}\theta(-\hat{v}_c)\prod_{c=a+1}^N\theta(\hat{v}_c)e^{i\hat{v}_a(P+\sum_{c=a+1}^N\hat{k}_c)}\tilde\sigma e^{-i\hat{v}_a(P+\sum_{c=a+1}^N\hat{k}_c)} \bra{0}_pe^{\hat{q}/2}\qty(\tilde{a} \otimes\theta(\hat{v}_a)) e^{-\hat{q}/2}\ket{0}_p)\label{Equation: partial trace b}\\
  &= \Psi_b\qty\Big(\prod_{c=1}^{b-1}\theta(-\hat{v}_c)\prod_{c=a+1}^N\theta(\hat{v}_c)\tilde\sigma \bra{0}_pe^{\hat{q}/2}\qty(\tilde{a} \otimes\theta(\hat{v}_a)) e^{-\hat{q}/2}\ket{0}_p)\\
  &= \Psi_a\qty\Big(\prod_{c=1}^{b}\theta(-\hat{v}_c)\prod_{c=a+1}^N\theta(\hat{v}_c)\Psi_{\mathcal{S}_a}(\tilde\sigma) \bra{0}_pe^{\hat{q}/2}\qty(\tilde{a} \otimes\theta(-\hat{v}_b)) e^{-\hat{q}/2}\ket{0}_p)\\
  &= \Psi_{\mathcal{S}_a}(\tilde\sigma) \Psi_a(\bra{0}_pe^{\hat{q}/2}\gamma_a(a)\gamma_a(\Pi_a) e^{-\hat{q}/2}\ket{0}_p)\\
  &= \Psi_{\mathcal{S}_a}(\tilde\sigma)\Tr_a(\Pi_ae^{\hat{q}}a),
\end{align}
so $\rho'_a = \Psi_{\mathcal{S}_a}(\tilde\sigma) \Pi_a e^{\hat{q}}$, as claimed.

In going from~\eqref{Equation: partial trace a} to~\eqref{Equation: partial trace b}, we have used that, for any $s\ge 0$ and $x\in\mathcal{B}(\mathcal{H}^{>0}_c)$,
\begin{equation}
  \Psi_{\mathcal{S}_c}(e^{is\hat{k}_c}xe^{-is\hat{k}_c})
  =
  \Psi_{\mathcal{S}_c}(x).
\end{equation}
This follows from the cyclic property of $\tr_{\mathcal{H}^{>0}_c}$, and the fact that $e^{-is\hat{k}_c}e^{-\hat{d}_c}e^{is\hat{k}_c}=e^{-\hat{d}_c}$ (which can be obtained from an analytic continuation of $e^{it\hat{d}_c}e^{is\hat{k}_c}e^{-it\hat{d}_c}=e^{ise^{2\pi t}\hat{k}_c}$ to $t=i$, since $e^{is\hat{k}_c}$ is a bounded analytic function of $\hat{k}_c$).

\subsection{Different trace and entropy normalisation conventions}
\label{Subsection: normalisation ambiguities}
\changed{So far we have been using a particular set of traces on the algebras $\mathcal{A}^G_{\mathcal{N}_a}$. However, as noted at the beginning of Subsection~\ref{Subsection: traces}, these algebras do not have unique traces. Indeed, the most general trace for $\mathcal{A}^G_{\mathcal{N}_a}$ is of the form $\Tr_a^z$ given in~\eqref{Equation: trace ambiguity}, with $z\in Z(\mathcal{A}^G_{\mathcal{N}_a})$ a positive central element. As discussed there, the von Neumann entropy according to the trace $\Tr_a^z$ differs from that according to the trace $\Tr_a$ by the expectation value of $\log z$, as in~\eqref{Equation: entropy ambiguity}. We fixed this ambiguity by simply setting $z=\mathds{1}$.}

\changed{On the other hand, since we derived the GSL from the monotonicity of relative entropy, and relative entropy is not subject to such ambiguities, it should be clear that the physical content of the GSL is also independent of the ambiguities. Instead, using different traces simply organises the terms in the GSL in different ways. It is the purpose of this subsection to explicitly demonstrate this.}

\changed{Having obtained the form of the GSL for the particular choices of trace $\Tr_a$, $\Tr_b$, we can easily write down the form it would take if we had used different traces $\Tr_a^{z_a}$, $\Tr_b^{z_b}$, for $z_a\in Z(\mathcal{A}^G_{\mathcal{N}_a})$, $z_b\in Z(\mathcal{A}^G_{\mathcal{N}_a})$. The von Neumann entropies according to these traces will be}
\begin{equation}
  \changed{S_{\mathcal{N}_a}^{z_a}[\Phi] = S_{\mathcal{N}_a}[\Phi] + \Phi(\log z_a), \qquad S_{\mathcal{N}_b}^{z_b}[\Phi] = S_{\mathcal{N}_b}[\Phi] + \Phi(\log z_b).}
\end{equation}
\changed{Substituting this in to~\eqref{Equation: proven GSL} yields}
\begin{equation}
  \changed{S^{z_a}_{\mathcal{N}_a}[\Phi] \ge
  S^{z_b}_{\mathcal{N}_b}[\Phi] + F(\mathcal{S}_a\in\mathcal{N}_b) + \Phi(\log z_a - \log z_b).}
  \label{Equation: proven GSL different convention}
\end{equation}
\changed{We see that for the entropies defined with more general traces, an extra term appears on the right-hand side, compensating for the relative normalisation of the von Neumann entropies. One convenience of the traces we have been using is that this extra term can be ignored.}

\subsection{Necessity of the cut free energy term}

In~\eqref{Equation: proven GSL}, the most obvious difference with the standard GSL is the free energy term $F(\mathcal{S}_a\in\mathcal{N}_b)$. This term is essential -- it accounts for correlations between $\mathcal{S}_a$ and degrees of freedom outside $\mathcal{N}_a$. If the free energy term were not present, the resulting inequality $S_{\mathcal{N}_a}[\Phi] \ge S_{\mathcal{N}_b}[\Phi]$ would in fact be \emph{violated} in a significant class of states, for which such correlations play a non-trivial role. This would not be a problem in the context of the ordinary semiclassical GSL, where cuts are not dynamical, so such correlations do not need to be accounted for. But for dynamical cuts, it is important.

To demonstrate this, let us restrict our attention to a very simple class of states. For concreteness, we will use states of the form $\Phi(a) = \Bra{\phi}r(a)\Ket{\phi}$, were $\Ket{\phi}$ is a vector in the physical Hilbert space constructed in Section~\ref{Subsection: physical Hilbert space}. We will assume that there are only two dynamical cuts, $\mathcal{S}_a$ and $\mathcal{S}_b$, and consider $\Ket{\phi}\in\mathcal{H}_{\text{phys}}$ of the form
\begin{equation}
  \Ket{\phi} = \zeta\qty\big(\ket{\Omega}\otimes\ket{f}\otimes\ket{\varphi}\otimes\ket{0}_p\otimes\ket{0}_b).
\end{equation}
Here $\ket{f}\in\mathcal{H}_a$, $\ket{\varphi}\in\mathcal{H}_{-\infty}$ are some normalised states of $\mathcal{S}_a$ and the past asymptotic frame at $v=-\infty$ respectively. The overall state $\Ket{\phi}$ corresponds to the state `in the perspective of $C$ and $\mathcal{S}_b$' being
\begin{equation}
  \ket*{\phi_{|C\mathcal{S}_b}} = \mathcal{R}_{C\mathcal{S}_b}\Ket{\phi} = \ket{\Omega}\otimes\ket{f}\otimes\ket{\varphi},
\end{equation}
i.e.\ the tensor product of the QFT vacuum $\ket{\Omega}$, $\ket{f}$ and $\ket{\varphi}$. We will assume that $\Ket{\phi}$ is a state satisfying $\mathcal{S}_a>\mathcal{S}_b$, which here may be written as the condition that $\bra{v_a}_a\ket{f}=0$ for all $v_a<0$.

In this state, the entanglement entropy contribution to the free energy vanishes: $S_{\mathcal{S}_a\in\mathcal{N}_b}[\Phi] =0$.\footnote{Indeed, from footnote~\ref{Footnote: explicit free energy state}, one has that the density operator for this state in the algebra $\mathcal{A}_{\mathcal{S}_a\in\mathcal{N}_b}^G$ is pure:
  \begin{equation}
    r(\hat{\rho}_a) = r\qty\big[e^{i\hat{v}_a\hat{k}_b}e^{i\hat{p}\hat{d}_a}\ket{f}\bra{f}e^{-i\hat{p}\hat{d}_a}e^{-i\hat{v}_a\hat{k}_b}].
\end{equation}} One could also consider examples where there is non-trivial entanglement, but here we are just considering the simplest possible case, to demonstrate the necessity of the cut free energy term. The expectation value of $H_{\mathcal{S}_a|\mathcal{S}_b}$ also simplifies via $\expval*{H_{\mathcal{S}_a|\mathcal{S}_b}}_\Phi= \bra{f}\hat{d}_a\ket{f}$, so overall one has
\begin{equation}
  F(\mathcal{S}_a\in\mathcal{N}_b) = \bra{f}\hat{d}_a\ket{f}.
\end{equation}

Let us now compute the two von Neumann entropies $S_{\mathcal{N}_a}[\Phi]$, $S_{\mathcal{N}_b}[\Phi]$. One may straightforwardly substitute the state $\Ket{\phi}$ into the density operator formula~\eqref{Equation: density operator formula}. One finds that the density operators $\rho_a\in r(\mathcal{A}_{\mathcal{N}_a}^G)$ and $\rho_b\in r(\mathcal{A}_{\mathcal{N}_b}^G)$ take the following forms:
\begin{equation}
  \rho_a = r\qty[2\pi e^{\hat{q}}\theta(\hat{v}_a-\hat{v}_b)\tr_{a\,-\infty}\qty(\ket{\chi}\bra{\chi})],\qquad
  \rho_b = r\qty[2\pi e^{\hat{q}+H_{\mathcal{S}_a|\mathcal{S}_b}}e^{i\hat{v}_b\hat{k}_a}\tr_{-\infty}\qty(\ket{\chi}\bra{\chi})e^{-i\hat{v}_b\hat{k}_a}],
\end{equation}
where
\begin{equation}
  \ket{\chi} = e^{-i\hat{p}(\hat{d}_a+\mathcal{Q})}\qty\big(\ket{f}\otimes\ket{\varphi}\otimes\ket{0}_q),
\end{equation}
$\ket{0}_q = \frac1{\sqrt{2\pi}}\int_{-\infty}^\infty\dd{p}\ket{p}_p$ is the zero eigenstate of $\hat{q}$, and $\tr_{a\,-\infty}$ and $\tr_{-\infty}$ denote ordinary Hilbert space partial traces over $\mathcal{H}_a\otimes\mathcal{H}_{-\infty}$ and $\mathcal{H}_{-\infty}$ respectively. Due to the commuting nature of the operators involved, the logarithms of these density operators obey
\begin{align}
  \log \rho_a &= \log(2\pi) + r\qty[\hat{q} + \log\theta(\hat{v}_a-\hat{v}_b) + \log(\tr_{a\,-\infty}\qty(\ket{\chi}\bra{\chi}))],\\
  \log\rho_b &= \log(2\pi) + r\qty[\hat{q}+H_{\mathcal{S}_a|\mathcal{S}_b} + e^{i\hat{v}_b\hat{k}_a}\log(\tr_{-\infty}\qty(\ket{\chi}\bra{\chi}))e^{-i\hat{v}_b\hat{k}_a}].
\end{align}
Now we may evaluate the corresponding von Neumann entropies with
\begin{align}
  S_{\mathcal{N}_a}[\Phi] &= -\Bra{\phi}\log\rho_a\Ket{\phi} \\
  &= -\log(2\pi) -\Bra{\phi}r(\hat{q})\Ket{\phi} - \Bra{\phi}r\qty[\log(\tr_{a\,-\infty}\qty(\ket{\chi}\bra{\chi}))]\Ket{\phi},\label{Equation: example entropy a}\\
  S_{\mathcal{N}_b}[\Phi] &= -\Bra{\phi}\log\rho_a\Ket{\phi} \\
  &= -\log(2\pi)-\Bra{\phi}r(\hat{q})\Ket{\phi} -\bra{f}\hat{d}_a\ket{f}- \Bra{\phi}r\qty[e^{i\hat{v}_b\hat{k}_a}\log(\tr_{a\,-\infty}\qty(\ket{\chi}\bra{\chi}))e^{-i\hat{v}_b\hat{k}_a}]\Ket{\phi}\label{Equation: example entropy b}
\end{align}
For the terms involving $\ket{\chi}$ it is convenient to use a replica trick, as follows. One may show that
\begin{equation}
  \Bra{\phi}r\qty\big[\tr_{a\,-\infty}(\ket{\chi}\bra{\chi})^n] \Ket{\phi} = \int_{-\infty}^\infty\dd{x} \mathbb{P}_{\hat{d}_a+\mathcal{Q}}(x)^{n+1},
\end{equation}
and
\begin{equation}
  \Bra{\phi}r\qty\big[e^{i\hat{v}_b\hat{k}_a}\tr_{-\infty}(\ket{\chi}\bra{\chi})^ne^{-i\hat{v}_b\hat{k}_a}] \Ket{\phi} = \int_{-\infty}^\infty\dd{x} \mathbb{P}_{\mathcal{Q}}(x)^{n+1},
\end{equation}
where
\begin{equation}
  \mathbb{P}_{\hat{d}_a+\mathcal{Q}}(x) = \bra{f}\otimes\bra{\varphi} \delta(\hat{d}_a+\mathcal{Q}-x) \ket{f}\otimes\ket{\varphi}, \qquad
  \mathbb{P}_{\mathcal{Q}}(x) = \bra{\varphi} \delta(\mathcal{Q}-x) \ket{\varphi}
\end{equation}
are the Born probability distributions associated with $\hat{d}_a+\mathcal{Q}$ and $\mathcal{Q}$ respectively.\footnote{The delta function of a Hermitian operator $a$ is defined via its Fourier transform $\delta(a) = 2\pi\int_{-\infty}^\infty\dd{y} e^{iya}$.} Taking an $n$ derivative and setting $n=0$, one recognises the final terms in~\eqref{Equation: example entropy a} and~\eqref{Equation: example entropy b} as the Shannon entropies of these probability distributions:
\begin{equation}
  - \Bra{\phi}r\qty[\log(\tr_{a\,-\infty}\qty(\ket{\chi}\bra{\chi}))]\Ket{\phi} = H(\hat{d}_a+\mathcal{Q}) = -\int_{-\infty}^\infty\dd{x} \mathbb{P}_{\hat{d}_a+\mathcal{Q}}(x) \log \mathbb{P}_{\hat{d}_a+\mathcal{Q}}(x)
\end{equation}
and
\begin{equation}
  - \Bra{\phi}r\qty[e^{i\hat{v}_b\hat{k}_a}\log(\tr_{-\infty}\qty(\ket{\chi}\bra{\chi}))e^{-i\hat{v}_b\hat{k}_a}]\Ket{\phi} = H(\mathcal{Q}) = -\int_{-\infty}^\infty\dd{x} \mathbb{P}_{\mathcal{Q}}(x) \log \mathbb{P}_{\mathcal{Q}}(x).
\end{equation}
Overall, one finds that the difference of the von Neumann entropies may be written
\begin{equation}
  \mathcal{S}_{\mathcal{N}_a}[\Phi]-\mathcal{S}_{\mathcal{N}_b}[\Phi] = F(\mathcal{S}_a\in\mathcal{N}_b) + H(\hat{d}_a+\mathcal{Q})-H(\mathcal{Q}).
  \label{Equation: simple case entropy diff}
\end{equation}

Suppose we move the free energy to the left-hand side:
\begin{equation}
  \mathcal{S}_{\mathcal{N}_a}[\Phi]-\mathcal{S}_{\mathcal{N}_b}[\Phi] - F(\mathcal{S}_a\in\mathcal{N}_b) = H(\hat{d}_a+\mathcal{Q})-H(\mathcal{Q}).
\end{equation}
The right-hand side of this equation must be non-negative (which is reassuring, because it means the simple states we are considering in this subsection are consistent with the general GSL~\eqref{Equation: proven GSL} we proved earlier). To see this, note that $\hat{d}_a$ and $\mathcal{Q}$ commute with each other, and have independent probability distributions in the state $\ket{f}\otimes\ket{\varphi}$. Thus, they may be treated as independent classical random variables. It is well known that for such random variables the expression $H(\hat{d}_a+\mathcal{Q})-H(\mathcal{Q})$ is strictly non-negative and may be arbitrarily close to zero (consequently, the states we are considering can get arbitrarily close to saturating~\eqref{Equation: proven GSL}).

On the other hand, the right-hand side of~\eqref{Equation: simple case entropy diff} can take arbitrary negative or positive values, so there can be no general inequality of the form $\mathcal{S}_{\mathcal{N}_a}[\Phi]\ge\mathcal{S}_{\mathcal{N}_b}[\Phi]$ (since it would fail for the simple states considered in this subsection). To demonstrate this, first note that the continuous spectra of $\hat{d}_a$ and $\mathcal{Q}$ are both the full real line (even when restricting to $\hat{v}_a>0$), so we can pick the states $\ket{f}$ and $\ket{\varphi}$ such that $\hat{d}_a$ and $\mathcal{Q}$ have normal probability distributions with variances $\sigma^2_{\hat{d}_a},\sigma^2_{\mathcal{Q}}$ and means $\mu_{\hat{d}_a},\mu_{\mathcal{Q}}$. Then one may show that
\begin{equation}
  H(\hat{d}_a+\mathcal{Q})-H(\mathcal{Q}) = \frac12\log(1+\sigma^2_{\hat{d}_a}\Big/\sigma^2_{\mathcal{Q}}).
\end{equation}
By considering states with $\sigma^2_{\hat{d}_a} \ll \sigma^2_{\mathcal{Q}}$, this can be made arbitrarily close to zero. On the other hand, the free energy is just
\begin{equation}
  F(\mathcal{S}_a\in\mathcal{N}_b) = \mu_{\hat{d}_a},
\end{equation}
which can be arbitrarily positive or negative, independent of the variances.

To conclude this subsection, we have exhibited states for which the entropies may be explicitly computed, and $\mathcal{S}_{\mathcal{N}_a}[\Phi]\ge\mathcal{S}_{\mathcal{N}_b}[\Phi]$ fails. This is evidently strong motivation for needing a modified GSL. Indeed, we also showed that these states are consistent with the modified GSL~\eqref{Equation: proven GSL} proposed in this paper. Of course, we already proved that~\eqref{Equation: proven GSL} holds for all states.

\section{Recovering the semiclassical UV-regulated GSL}
\label{Section: semiclassical}

Let us now show how to recover the traditional semiclassical GSL from our inequality~\eqref{Equation: proven GSL}. This happens in three steps:
\begin{enumerate}
  \item Take a classical limit of the boost clock.
  \item Impose a UV cutoff on the fields.
  \item Take a classical limit of the dynamical cuts (and impose some energy conditions).
\end{enumerate}
These could be carried out in any order, but we will do them in the order shown here. The first two steps are similar to what has appeared previously~\cite{Chandrasekaran_2023b,kudlerflam2024generalizedblackholeentropy,Jensen_2023,DeVuyst:2024pop,DEHKlong}, while the third step is new and is now necessary because the cuts are quantum and dynamical.

\subsection{Classical limit of boost clock}

Let us first restrict to states for which the time provided by asymptotic boost clock $C$ may be treated as approximately classical. We may characterise this by considering the expectation value of operators $D_b(a)e^{-i\hat{q}t}\in\mathcal{A}_{\mathcal{N}_b}^G$, where we define for notational simplicity
\begin{equation}
  D_b(a) = e^{i(P+\sum_{c\ne b}\hat{k}_c)\hat{v}_b}e^{i(H+\sum_{c\ne b}\hat{d}_c)\hat{p}}ae^{-i(H+\sum_{c\ne b}\hat{d}_c)\hat{p}}e^{-i(P+\sum_{c\ne b}\hat{k}_c)\hat{v}_b},
\end{equation}
with $a\in\mathcal{A}_{\text{QFT}}(0)\otimes\bigotimes_{c\ne b}\mathcal{A}_{\mathcal{S}_b}(0)$. The operator $D_b(a)e^{-i\hat{q}t}$ acts with $a$ dressed to the boost clock and the cut $\mathcal{S}_a$, and also changes the boost clock time by $t$.

In particular, for a given state $\Phi$ with density operator $\rho_b\in\mathcal{A}_{\mathcal{N}_b}^G$, the boost time is approximately classical if the following two conditions hold. First, the expectation value $\Tr_b(\rho_b D_b(a)e^{-i\hat{q}t})$ is sharply peaked within $\abs{t}<\epsilon$ for some small $\epsilon$. Intuitively, this means if we change the boost time a bit, we get a state basically orthogonal to the one we started with -- which is characteristic of the boost time being a classical degree of freedom. Second, within the peak, the boost clock is approximately uncorrelated from dressed operators, so
\begin{equation}
  \Tr_b(\rho_b D_b(a)e^{-i\hat{q}t}) \approx \Tr_b(\rho D_b(a)) \Tr_b(\rho e^{-i\hat{q}t}), \qquad\text{if }\abs{t}<\epsilon.
\end{equation}
This can be understood as there being no entanglement between the clock and the rest of the degrees of freedom (as in the `classical-quantum' states of~\cite{Witten_2018,Chandrasekaran_2023,Chandrasekaran_2023b,kudlerflam2024generalizedblackholeentropy,Jensen_2023,Faulkner_2016}).
One can generalise this to allow for entanglement (see~\cite{DEHKlong}), which may be interpreted in this classical limit as classical correlations. For simplicity, we consider here the uncorrelated case.

For reasons which we will explain shortly, these conditions suffice to give the following approximate expression for the density operator:
\begin{equation}
  \rho_b \approx 2\pi\,e^{\hat{q}+H+\sum_{c>b}\hat{d}_c-2\pi\hat{v}_b(P+\sum_{c>b}\hat{k}_c)}\, \tilde{f}(\hat{q})\, \Delta_{\Phi|\tilde\Psi}.
  \label{Equation: approximate density operator}
\end{equation}
We are continuing to assume that the complete ordering of the cuts~\eqref{Equation: complete ordering} holds. Here,
\begin{equation}
  \tilde{f}(q) = \Phi(\ket{q}\bra{q}) = \Tr_b(\rho_b\ket{q}\bra{q})
\end{equation}
is the probability distribution over possible values of $\hat{q}$, while $\Delta_{\Phi|\tilde\Psi}$ is the relative modular operator of the algebra
\begin{equation}
  D_b\qty\Big(\mathcal{A}_{\text{QFT}}(0)\otimes \bigotimes_{c>b}\mathcal{A}_{\mathcal{S}_c}(0))
  \label{Equation: dressed algebra without area}
\end{equation}
from $\tilde\Psi$ to $\Phi$, where
\begin{equation}
  \tilde\Psi\circ D_b = \Psi_{\text{QFT}} \otimes\bigotimes_{c>b}\tr_{\mathcal{H}_c^{>0}}.
  \label{Equation: tilde Psi}
\end{equation}
Physically speaking, the algebra~\eqref{Equation: dressed algebra without area} consists of all gauge-invariant operators in $\mathcal{N}_b$, except for those involving the future asymptotic area $\hat{q}$.

We will also show that the conditions above imply that $\tilde{f}(\hat{q})$ and $\Delta_{\Phi|\tilde\Psi}$ approximately commute with each other. This can be understood as meaning that the state of the subsystem corresponding to the algebra~\eqref{Equation: dressed algebra without area} is effectively constant over boost time intervals of size $\epsilon$ (this follows from~\eqref{Equation: slowly varing P(0)} and~\eqref{Equation: P(0) Delta} below). Another way to put this is that the dressed operators in $\mathcal{N}_b$ fluctuate much less than the future asymptotic horizon area $\hat{q}$ -- this is a consistency condition required by the semiclassical regime.

Since the three operator factors in~\eqref{Equation: approximate density operator} all approximately commute with each other, one has
\begin{equation}
  \log\rho_b \approx \log(2\pi)+\hat{q}+H+\sum_{c>b}\hat{d}_c-2\pi\hat{v}_b(P+\sum_{c>b}\hat{k}_c) + \log \tilde{f}(\hat q) + \log \Delta_{\Phi|\tilde\Psi},
\end{equation}
so the entropy may be written
\begin{align}
  S[\mathcal{N}_b] &= -\Phi(\log \rho_b) \\
  &\approx -\log(2\pi)- \Big\langle\hat{q}+H+\sum_{c>b}\hat{d}_c-2\pi\hat{v}_b(P+\sum_{c>b}\hat{k}_c)\Big\rangle_\Phi -\int_{-\infty}^\infty\dd{q}\tilde{f}(q)\log\tilde{f}(q) - \Phi(\log\Delta_{\Phi|\tilde\Psi}).
  \label{Equation: classical boost time entropy}
\end{align}
A similar approximation holds for $S[\mathcal{N}_a]$.

Let us now demonstrate the validity of the above approximations. We may always write the density operator as
\begin{equation}
  \rho_b = \int_{-\infty}^\infty\dd{t'} e^{\hat{q}/2} e^{i\hat{q}t'}D_b(P(t'))e^{\hat{q}/2}.
\end{equation}
Then we have
\begin{equation}
  \Tr_b(\rho_b D_b(a)e^{-i\hat{q}t}) = \Psi_b( P(t) e^{-(H+\sum_{c\ne b}\hat{d}_c)/2} a e^{(H+\sum_{c\ne b}\hat{d}_c)/2}).
\end{equation}
By the first condition above, this is suppressed for $\abs{t}>\epsilon$, and by the faithfulness of $\Psi_b$ this implies that $P(t)$ is suppressed for $\abs{t}>\epsilon$. Moreover, by the second condition we may write
\begin{equation}
  \Tr_b(\rho_b D_b(a)e^{-i\hat{q}t}) \approx \Psi_b( P(0)e^{-(H+\sum_{c\ne b}\hat{d}_c)/2}ae^{(H+\sum_{c\ne b}\hat{d}_c)/2})f(t),
\end{equation}
where $f(t) = \Psi_b(P(t)) = \Tr_b(\rho_be^{-i\hat{q}t})$ is sharply peaked at $\abs{t}<\epsilon$.
Again, by the faithfulness of $\Psi_b$, we can conclude that $P(t) \approx f(t)P(0)$ for $\abs{t}<\epsilon$, and hence we have the approximation
\begin{equation}
  \rho_b \approx \int_{-\infty}^\infty\dd{t'} e^{\hat{q}/2} e^{i\hat{q}t'}f(t')D_b(P(0))e^{\hat{q}/2} = 2\pi e^{\hat{q}/2}\tilde{f}(\hat{q})D_b(P(0))e^{\hat{q}/2},
  \label{Equation: rho deltad}
\end{equation}
where
\begin{equation}
  \tilde{f}(\hat{q}) = \frac1{2\pi} \int \dd{t}e^{i\hat{q}t} f(t) = \frac1{2\pi} \int \dd{t}e^{i\hat{q}t} \Tr_b(\rho_be^{-i\hat{q}t}) = \int_{-\infty}^\infty \dd{q} \Phi(\ket{q}\bra{q}) \ket{q}\bra{q},
\end{equation}
Note that the second condition above also gives, for $\abs{t}<\epsilon$,
\begin{equation}
  \Tr_b(e^{i\hat{q}t}\rho_b e^{-i\hat{q}t}D_b(a)) \approx \Tr_b(\rho_b D_b(a)),
\end{equation}
which may be written
\begin{multline}
  \Psi_b( P(0)e^{-(H+\sum_{c\ne b}\hat{d}_c)/2}ae^{(H+\sum_{c\ne b}\hat{d}_c)/2}) \\
  \approx
  \Psi_b( e^{-i(H+\sum_{c\ne b}\hat{d}_c)t}P(0)e^{i(H+\sum_{c\ne b}\hat{d}_c)t} e^{-(H+\sum_{c\ne b}\hat{d}_c)/2}ae^{(H+\sum_{c\ne b}\hat{d}_c)/2}).
\end{multline}
Again by the faithfulness of $\Psi_b$, this implies
\begin{equation}
  P(0) \approx e^{-i(H+\sum_{c\ne b}\hat{d}_c)t}P(0)e^{i(H+\sum_{c\ne b}\hat{d}_c)t},\qquad \text{if }\abs{t}<\epsilon.
  \label{Equation: slowly varing P(0)}
\end{equation}
Since $f(t)$ is peaked in $\abs{t}<\epsilon$, this in particular implies that $[P(0),\tilde{f}(\hat{q})]\approx 0$. Now, continuing to assume the conditioning $\mathcal{S}_1<\dots<\mathcal{S}_N$ holds, we may write (comparing the definitions of $\Psi_b$ and $\tilde\Psi$)
\begin{align}
  \Tr_b(\rho_b D_b(ab)) &\approx \tilde\Psi\circ D_b(b e^{-(H+\sum_{c>b}\hat{d}_c)/2} P(0)e^{-(H+\sum_{c> b}\hat{d}_c)/2}a)\\
  &= \tilde\Psi\qty(D_b(b)D_b\qty\big(e^{-(H+\sum_{c>b}\hat{d}_c)/2} P(0)e^{-(H+\sum_{c> b}\hat{d}_c)/2})D_b(a)).
\end{align}
From this we may conclude that
\begin{equation}
  D_b\qty\big(e^{-(H+\sum_{c>b}\hat{d}_c)/2} P(0)e^{-(H+\sum_{c> b}\hat{d}_c)/2}) = \Delta_{\Phi|\tilde\Psi}
  \label{Equation: P(0) Delta}
\end{equation}
is the relative modular operator from $\tilde\Psi$ to $\Phi$ for the algebra~\eqref{Equation: dressed algebra without area}. Using
\begin{equation}
  D_b\qty\big(e^{-(H+\sum_{c>b}\hat{d}_c)/2}) = e^{-\qty\big(H+\sum_{c>b}\hat{d}_c-2\pi\hat{v}_b(P+\sum_{c>b}\hat{k}_c))/2},
\end{equation}
we may combine~\eqref{Equation: P(0) Delta} with~\eqref{Equation: rho deltad} (and use that $\hat{q}+H+\sum_{c>b}\hat{d}_c-2\pi\hat{v}_b(P+\sum_{c>b}\hat{k}_c)$ commutes with all operators in~\eqref{Equation: dressed algebra without area}) to obtain~\eqref{Equation: approximate density operator}, as required.

\subsection{Imposing a UV cutoff}

We now impose a UV cutoff, such that all QFT algebras become Type I, following closely the methods used in~\cite{Jensen_2023,Faulkner:2024gst}. We then have that the algebra~\eqref{Equation: dressed algebra without area} is Type I, so we can write the relative modular operator in terms of density operators:
\begin{equation}
  \Delta_{\Phi|\tilde\Psi} = \rho_{\Phi} \qty\big(\rho'_{\tilde\Psi})^{-1}.
\end{equation}
Here $\rho_{\Phi}$ is the density operator in the algebra~\eqref{Equation: dressed algebra without area} for $\Phi$, and $\rho'_{\tilde\Psi}$ is the density operator in the commutant of~\eqref{Equation: dressed algebra without area} for $\tilde\Psi$.

Due to the form~\eqref{Equation: tilde Psi} of $\tilde\Psi$, we have that $\rho'_{\tilde\Psi} = D_b(\rho'_\Omega)$, where $\rho'_\Omega$ is the density operator of the QFT vacuum $\ket{\Omega}$ with respect to the algebra $\mathcal{A}_{\text{QFT}}(0)'$. Here we can use that $\ket{\Omega}$ is thermal with respect to the one-sided QFT modular Hamiltonian
\begin{equation}
  \rho'_\Omega = \frac{e^{-K'}}Z,
\end{equation}
where $Z$ is a normalisation constant, and $K'$ may be written as an integral of the field stress tensor over the horizon:
\begin{equation}
  K' = -2\pi\int_{\mathscr{H}}\eta \theta(-v)v T_{vv}.
\end{equation}
We thus have
\begin{equation}
  \log \Delta_{\Phi|\tilde\Psi} = \log \rho_\Phi - \log D_b(\rho'_{\Omega}) = \log \rho_\Phi + \log Z - 2\pi\int_{\mathscr{H}}\eta\theta(\hat{v}_b-v)(v-\hat{v}_b) T_{vv}.
\end{equation}
Combining this with~\eqref{Equation: classical boost time entropy}, one finds
\begin{multline}
  S[\mathcal{N}_b] \approx -\log(2\pi)- \Big\langle\hat{q}+2\pi\int_{\mathscr{H}}\eta\theta(v-\hat{v}_b)(v-\hat{v}_b)T_{vv}+\sum_{c>b}\qty(\hat{d}_c-2\pi\hat{v}_b\hat{k}_c)\Big\rangle_\Phi \\
  -\log Z -\int_{-\infty}^\infty\dd{q}\tilde{f}(q)\log\tilde{f}(q) - \Phi(\log \rho_\Phi).
  \label{Equation: UV regulated entropy}
\end{multline}

At this stage, we make two identifications. With the first, we use the gravitational constraint~\eqref{Equation: gravitational constraint area cut} to write
\begin{equation}
  \frac{\expval*{A^{(2)}(\mathcal{S}_b)}_\Phi}{4G_{\text{N}}}=-\Big\langle\hat{q}+\int_{\mathscr{H}}\eta\theta(v-\hat{v}_b)(v-\hat{v}_b)T_{vv}+\sum_{c>b}\qty(\hat{d}_c-2\pi\hat{v}_b\hat{k}_c)\Big\rangle_\Phi
\end{equation}
where $A^{(2)}(\mathcal{S}_b)$ is the second order perturbation of the area of $\mathcal{S}_b$. With the second identification, we write
\begin{equation}
  S_{\text{out}}(\mathcal{N}_b) =-\int_{-\infty}^\infty\dd{q}\tilde{f}(q)\log\tilde{f}(q) - \Phi(\log \rho_\Phi).
\end{equation}
Indeed, the first term is the Shannon entropy of the future asymptotic area $\hat{q}$, and the second term is the von Neumann entropy of~\eqref{Equation: dressed algebra without area}, i.e.\ everything else in $\mathcal{N}_b$. These two are uncorrelated under the assumptions we have made, which is why they contribute separately here, and it is sensible to view $S_{\text{out}}(\mathcal{N}_b)$ as the total UV-regulated von Neumann entropy of the degrees of freedom in $\mathcal{N}_b$. For reasons explained in~\cite{DEHKlong}, this continues to be the case even in the presence of correlations between $\hat{q}$ and the rest of the degrees of freedom.

Altogether, one finds
\begin{equation}
  S[\mathcal{N}_b] \approx \frac{\expval*{A^{(2)}(\mathcal{S}_b)}_\Phi}{4G_{\text{N}}} + S_{\text{out}}(\mathcal{N}_b) - \log(2\pi) -\log Z.
  \label{Equation: UV regulated generalised entropy}
\end{equation}
Therefore, the von Neumann entropy in $\mathcal{N}_b$ reduces to the generalised entropy in $\mathcal{N}_b$, with a subtracted state-independent constant. The same formula applies to $\mathcal{N}_a$, with the same constant. The constant therefore cancels out in~\eqref{Equation: dynamical GSL}, which thus now may be written
\begin{equation}
  S_{\text{gen}}(\mathcal{N}_a)\ge S_{\text{gen}}(\mathcal{N}_b) + F(\mathcal{S}_a\in\mathcal{N}_b),
  \label{Equation: half semiclassical GSL}
\end{equation}
which is the form~\eqref{Equation: modified GSL} found with the heuristic thermodynamical derivation in the Introduction. Unlike~\eqref{Equation: dynamical GSL}, which works to all orders beyond the semiclassical regime, this inequality is now merely \emph{approximate}, holding only at leading order.

\subsection{Classical limit of dynamical cuts}

It remains to consider the classical limit of the dynamical cuts. This happens in a somewhat different, but conceptually similar way to the classical limit of the boost clock. The aim will be to constrain the behaviour of the free energy $F(\mathcal{S}_a\in\mathcal{N}_b)$.

The term $S_{\mathcal{S}_a\in\mathcal{N}_b}[\Phi]$ that contributes to the free energy is the entropy of $\mathcal{S}_a$, dressed to $\mathcal{S}_b$ and the boost clock $C$. In the perspective-neutral approach to QRFs, this has the interpretation of the entanglement entropy of $\mathcal{S}_a$ `in the perspective of' $\mathcal{S}_b$ and $C$. For a classical limit, it may be well-motivated to require that $\mathcal{S}_a$ is unentangled, so that $S_{\mathcal{S}_a\in\mathcal{N}_b}[\Phi]=0$. Let us at first assume this is the case.

The free energy then reduces to the expectation value of
\begin{equation}
  H_{\mathcal{S}_a|\mathcal{S}_b} = D_b(\hat{d}_a) = \pi \pb{D_b(\hat{k}_a)}{D_b(\hat{v}_a)}.
\end{equation}
The operator $D_b(\hat{v}_a)$ appearing here measures the location of the later cut $\mathcal{S}_a$, dressed to the earlier cut $\mathcal{S}_b$ and the boost clock $C$. In a classical limit for the cuts, we will assume that the state $\Phi$ is very sharply peaked at a certain value for $D_b(\hat{v}_a)$. One may then approximately decompose the expectation value of $H_{\mathcal{S}_a|\mathcal{S}_b}$ as
\begin{equation}
  \expval*{H_{\mathcal{S}_a|\mathcal{S}_b}}_\Phi \approx 2\pi \expval*{D_b(\hat{k}_a)}_\Phi\expval*{D_b(\hat{v}_a)}_\Phi.
  \label{Equation: classical H}
\end{equation}
Now, since we have conditioned on $\mathcal{S}_a>\mathcal{S}_b$, we have $\expval*{D_b(\hat{v}_a)}_\Phi\ge 0$. If $\Phi$ is a state in which $\mathcal{S}_a$ obeys the null energy condition, then we also have $\expval*{D_b(\hat{k}_a)}_\Phi\ge 0$.

Therefore, if $\mathcal{S}_a$ is unentangled and obeys the null energy condition, its free energy is non-negative:
\begin{equation}
  F(\mathcal{S}_a\in\mathcal{N}_b) = \expval*{H_{\mathcal{S}_a|\mathcal{S}_b}}_\Phi \ge 0.
\end{equation}
Using this in~\eqref{Equation: half semiclassical GSL}, one obtains the traditional GSL
\begin{equation}
  S_{\text{gen}}(\mathcal{N}_a)\ge S_{\text{gen}}(\mathcal{N}_b).
  \label{Equation: semiclassical traditional GSL}
\end{equation}

Even in the case the cut $\mathcal{S}_a$ does have non-vanishing entropy $S_{\mathcal{S}_a\in\mathcal{N}_b}[\Phi]$, a similar result should hold, so long as $\Phi$ is a state in which $\mathcal{S}_a$ obeys the \emph{quantum} null energy condition (QNEC)~\cite{Bousso:2015mna,Bousso:2015wca}, which roughly speaking says that the entropy $S$ obeys $\frac{1}{2\pi A}\dv[2]{v} S\le \expval{t_{vv}}$. If one isolates the contribution from the cut $\mathcal{S}_a$, and uses~\eqref{Equation: classical H}, one finds that this inequality leads to $F(\mathcal{S}_a\in\mathcal{N}_b)\ge 0$, and so~\eqref{Equation: semiclassical traditional GSL} holds.

Throughout the paper we have been using an idealised model for each dynamical cut as a position operator on an auxiliary $L^2(\RR)$. It is important to note that such cuts will not obey the NEC or QNEC for all states. However, a more realistic model would involve forming the dynamical cuts out of pre-existing field degrees of freedom. For a large class of QFTs, the fields \emph{do} obey the QNEC for all states~\cite{Bousso:2015wca}. Thus, in such a model it may be that~\eqref{Equation: half semiclassical GSL} implies~\eqref{Equation: semiclassical traditional GSL} in all states. We leave further investigation of this for future work.

\section{Conclusion}
\label{Section: Conclusion}

In this paper, we have explained how one may account for the dynamics of horizon degrees of freedom, in the presence of boost and null translation gauge symmetries, by making use of \emph{dynamical cuts}. We constructed the algebras of operators supported outside such dynamical cuts, and showed that they admit well-defined von Neumann entropies. We then proved that if we condition on an ordering for the cuts, the entropies obey the inequality
\begin{equation}
  S_{\mathcal{N}_a}[\Phi] \ge
  S_{\mathcal{N}_b}[\Phi] + F(\mathcal{S}_a\in\mathcal{N}_b).
  \label{Equation: conclusion GSL}
\end{equation}
In other words, the entropy of the later region $\mathcal{N}_a$ is bounded below by the entropy of the earlier region $\mathcal{N}_b$ plus the free energy of the later cut $\mathcal{S}_a$ inside $\mathcal{N}_b$. As we have shown, this inequality holds exactly, without needing to go to a semiclassical limit or impose a UV cutoff. To support our proposal that~\eqref{Equation: conclusion GSL} is the appropriate form of the GSL beyond these regimes, we showed that if one does go to a semiclassical limit and imposes a UV cutoff, $S_{\mathcal{N}_a}[\Phi], S_{\mathcal{N}_b}[\Phi]$ reduce to the semiclassical generalised entropies of the respective regions (up to a state-independent constant), and argued that a cut obeying the appropriate energy conditions has positive free energy $F(\mathcal{S}_a\in\mathcal{N}_b)$. Thus, under these conditions, the inequality reduces to the ordinary GSL, $\Delta S_{\text{gen}}\ge 0$.

Before ending, we would like to comment on the various limitations of our results.

First, the dynamical cuts we have employed are idealised, and auxiliary to the system under consideration. They are the simplest possible degrees of freedom one could have used as dynamical cuts. In a more physically realistic model, the dynamical cuts should be made up of the degrees of freedom already present in the system, and will not be as well-behaved as the ones we have used. Nevertheless, we expect that it will be possible to extend our results to other kinds of cuts. We should note that there must be \emph{some} degrees of freedom available that resemble dynamical cuts, if we are to have any hope of establishing a GSL in a theory with diffeomorphism invariance; indeed, as discussed in the Introduction and Section~\ref{Section: fixed cuts}, the need to \emph{relationally} define horizon exterior regions makes dynamical cuts crucial. There are various places these degrees of freedom could come from. They might originate in the asymptotic region, as is the case for the cuts defined by infalling particles shown in Figure~\ref{Figure: dynamical cuts from infalling particles}. Alternatively, they might be degrees of freedom intrinsic to the horizon, which would presumably usually only be explicitly visible at higher orders in perturbation theory. Probably, a full account should allow for all kinds of different dynamical cuts; depending on which cuts we use, we would get different manifestations of the GSL~\cite{DeVuyst:2024pop}. For initial steps in using gravitational QRFs formed out of fields, see~\cite{Chen:2024rpx,Kaplan:2024xyk}. Also, see~\cite{Ciambelli:2024swv} for a promising set of degrees of freedom which can be used as a clock intrinsic to the horizon, emerging from a certain anomaly.

Among the simplifications in the model of dynamical cuts we used is that they are `ideal' QRFs, meaning their location on the horizon may be specified with arbitrary precision, which meant we were able to easily condition on a definite ordering for the cuts, and thus formulate a GSL. In contrast, more physically realistic cuts would usually be `non-ideal', meaning they have some fundamental uncertainty in their location, which might obstruct conditioning on a definite ordering. One way to construct a model of non-ideal cuts is to restrict the spectra of the null energies $\hat{k}_a$ of the ideal cuts considered in this paper. This would allow us to impose the null energy condition on all states by bounding the spectra of $\hat{k}_a$ below -- to still have a good representation of boosts $k\to e^{-2\pi t}k$, the only simple option is to set the spectra of $\hat{k}_a$ to $[0,\infty)$. As explained in~\cite{Hoehn:2019fsy,Hoehn:2020epv,DeVuyst:2024pop,DEHKlong}, each $\hat{v}_a$ eigenstate is then replaced by a coherent state
\begin{equation}
  \ket{v_a}_a = \frac1{\sqrt{2\pi}}\int_0^\infty\dd{k}e^{-iv_ak_a}\ket{k_a}_a.
\end{equation}
This should be thought of the state of the dynamical cut when it is located at $v=v_a$. But now these states are not orthogonal for different values of $v_a$; indeed one has
\begin{equation}
  \bra{v_a}\ket{v_a'} = \frac12\delta(v_a-v_a') + \frac{i}{2\pi}\mathcal{P}\frac1{v_a-v_a'},
\end{equation}
with $\mathcal{P}$ the Cauchy principal value. This overlap is non-vanishing for any value of $\abs{v_a-v_a'}$, so \emph{any} state of the overall system will have non-vanishing contributions from all possible orderings of the cuts.

Of course, there may be other ways to make the cuts non-ideal, while still allowing for states with a definite ordering. But the most general kind of dynamical cut one could consider would not have this property, and in any case there is good reason to believe that indefinite causal ordering should be a feature of quantum gravity~\cite{Hardy_2007,delaHamette:2022cka,Kabel:2024lzr,delaHamette:2024xax}. Without a way to condition on a definite ordering, it is hard to say whether one should expect a GSL to hold in \emph{any} form. However, a surprising amount may be done with non-ideal QRFs that one might have expected to only be possible with ideal QRFs. We anticipate that it may still be possible to formulate a suitable generalisation of the GSL for non-ideal cuts, and would like to explore this further.

We have only accounted for a two-dimensional gauge group. The full gauge group of gravity is much larger; it is an infinite-dimensional diffeomorphism group. As mentioned in the Introduction, the gauge transformations we have accounted for are special, being deeply tied to the modular structure of QFT states. But, for a complete theory, the other gauge symmetries need to be taken into account too. Roughly speaking, the leftover gauge symmetries are given by the quotient $\operatorname{Diff}(\mathcal{M})/G$ of the spacetime diffeomorphism group $\operatorname{Diff}(\mathcal{M})$ by the group $G$ of boosts and null translations (note, this is not a normal subgroup). We have implicitly been assuming that these leftover gauge symmetries have either already been imposed (so, for example, the field algebras $\mathcal{A}_{\text{QFT}}(v)$ are defined in a way that already accounts for them), or that they will be imposed down the line. One possibility is that they do not change the story very much. Another possibility is that the argument given here cannot be extended to the case where we account for all gauge symmetries; we see no qualitative reason why this should be true. In our view, the most likely possibility is that other gauge symmetries will add more chapters to the interesting story that has been unfolding. Indeed, boost invariance regularised the entropy, null translation invariance allowed for an exact GSL -- maybe invariance under other diffeomorphisms will allow us to do even more? In any case, we are interested to see what happens next.

We have ignored most of the degrees of freedom present in the asymptotic future and past of the horizon, and we have strictly speaking restricted our attention to spacetimes for which the horizon is a Cauchy surface. We expect the techniques in~\cite[Section 4]{Faulkner:2024gst} may be generalised to include dynamical cuts, and thus account for spacetimes with more complicated asymptotics. As noted above, realistic dynamical cuts might in fact be constructed out of the asymptotic degrees of freedom. See also~\cite{Rignon-Bret:2024zhj} for a different formulation of black hole thermodynamics relative to asymptotic degrees of freedom.

The last limitation we would like to point out is perhaps the most important: everything we have been doing has been in perturbative gravity. A long term goal should be to extend our results to non-perturbative gravity. It is hard to say how much of this can be done in a model-independent way, but there are certain universal features it may be possible to exploit. For example, can one reproduce the proposals for non-perturbative generalised entropy in~\cite{hollands2024entropydynamicalblackholes,Ciambelli:2023mir,Ciambelli:2024swv}, using the kinds of algebraic methods employed in this paper? It is likely that one would need to involve more subtle techniques such as coarse-graining~\cite{Engelhardt_2019}, and quantum error correction~\cite{Pastawski_2015,QRFsQECC}. Moreover, the correct full treatment would presumably involve Type I algebras, from which the Type II algebras considered in this paper only emerge in the perturbative setting. Of course, non-perturbative effects play an incredibly important role in the quantum information theoretic properties of black holes. They are required, for instance, when using the Euclidean path integral to reproduce the Page curve~\cite{Almheiri:2019psf,Almheiri:2020cfm,Penington:2019kki,Penington:2019npb,Almheiri:2019qdq,Almheiri:2019hni}.

On that note, let us finally comment that our GSL~\eqref{Equation: conclusion GSL} contains a loophole that may admit consistency with the Page curve: the generalised entropy outside a dynamical cut can decrease, if the free energy of a dynamical cut is negative. In principle, this could happen if the cut becomes entangled with enough Hawking radiation, which may be the case after the Page time. Indeed, this is the kind of behaviour we would need for a fine-grained entropy (such as the von Neumann entropy we have constructed here) to be consistent with black hole unitarity~\cite{Almheiri:2020cfm} (see also~\cite{Matsuo:2023cmb}). We would be very keen to understand whether this loophole can be exploited in practice.

\phantomsection
\addcontentsline{toc}{section}{\numberline{}Acknowledgements}
\section*{Acknowledgements}
I am grateful to Luca Ciambelli, Julian De Vuyst, Stefan Eccles, Laurent Freidel, Philipp H\"ohn, Krishna Jalan, Fabio Mele, Antony Speranza and Aron Wall for helpful and stimulating discussions.
This project/publication was made possible through the support of the ID\# 62312 grant from the John Templeton Foundation, as part of the \href{https://www.templeton.org/grant/the-quantum-information-structure-of-spacetime-qiss-second-phase}{\textit{`The Quantum Information Structure of Spacetime'} Project (QISS)}.~The opinions expressed in this project/publication are those of the author(s) and do not necessarily reflect the views of the John Templeton Foundation. Research at Perimeter Institute is supported in part by the Government of Canada through the Department of Innovation, Science and Economic Development and by the Province of Ontario through the Ministry of Colleges and Universities.

\appendix

\section{Useful lemmas}
\label{Appendix: lemmas}

\begin{lemma} \label{Lemma: crossed product center}
  Let $\mathcal{A}$ be a von Neumann algebra with an action by automorphisms $\alpha:\RR\to\operatorname{Aut}(\mathcal{A})$, $t\mapsto \alpha_t$. Suppose $\alpha_t$ is an outer automorphism of $z\mathcal{A}$ for all non-zero $z\in Z(\mathcal{A})^\alpha$ and all $t\ne 0$. Then the center of the crossed product algebra $\mathcal{A}\rtimes_\alpha \RR$ consists of central elements of $\mathcal{A}$ invariant under $\alpha_t$ for all $t$, i.e.\ $Z(\mathcal{A}\rtimes_\alpha \RR) = Z(\mathcal{A})^\alpha$.
\end{lemma}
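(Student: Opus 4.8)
My plan is to work with the concrete crossed product $\mathcal{M}:=\mathcal{A}\rtimes_\alpha\RR$, generated by a normal faithful embedding $\pi(\mathcal{A})$ together with a strongly continuous unitary one-parameter group $s\mapsto\lambda(s)$ obeying $\lambda(s)\pi(a)\lambda(-s)=\pi(\alpha_s(a))$, and to bring in the dual action $\hat\alpha:\RR\to\operatorname{Aut}(\mathcal{M})$ determined by $\hat\alpha_p(\pi(a))=\pi(a)$, $\hat\alpha_p(\lambda(s))=e^{ips}\lambda(s)$, whose fixed-point algebra is $\mathcal{M}^{\hat\alpha}=\pi(\mathcal{A})$. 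Identifying $\mathcal{A}$ with $\pi(\mathcal{A})$, the goal is $Z(\mathcal{M})=Z(\mathcal{A})^\alpha$, which I would split into the two inclusions.

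The inclusion $Z(\mathcal{A})^\alpha\subseteq Z(\mathcal{M})$ is immediate: for $z\in Z(\mathcal{A})^\alpha$, $\pi(z)$ commutes with $\pi(\mathcal{A})$ because $z$ is central, and with every $\lambda(s)$ because $\lambda(s)\pi(z)\lambda(-s)=\pi(\alpha_s(z))=\pi(z)$; since these generate $\mathcal{M}$, $\pi(z)\in Z(\mathcal{M})$. For the reverse inclusion I would take $x\in Z(\mathcal{M})$ and first show $x\in\pi(\mathcal{A})$; granting that, write $x=\pi(a)$: commutation with $\pi(\mathcal{A})$ forces $a\in Z(\mathcal{A})$, and commutation with each $\lambda(s)$ gives $\pi(\alpha_s(a))=\pi(a)$, hence $\alpha_s(a)=a$ for all $s$ by injectivity of $\pi$, so $a\in Z(\mathcal{A})^\alpha$.

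The real content is therefore showing $Z(\mathcal{M})\subseteq\pi(\mathcal{A})=\mathcal{M}^{\hat\alpha}$, and this is where the proper-outerness hypothesis must enter — the statement is false without it, e.g.\ for $\mathcal{A}=\CC$, where $\mathcal{M}\cong L^\infty(\RR)$ is abelian but $Z(\mathcal{A})^\alpha=\CC$. I would argue by contradiction: if $x\in Z(\mathcal{M})$ is not $\hat\alpha$-invariant, its Arveson spectrum under $\hat\alpha$ contains some $q\neq 0$, so there is a smeared element $y=\int_\RR f(p)\,\hat\alpha_p(x)\,\dd{p}\neq 0$ with $\widehat f$ supported in a small neighbourhood of $q$ bounded away from $0$. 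Since $\hat\alpha_p$ fixes $\pi(\mathcal{A})$ pointwise and scales $\lambda(s)$ by a phase, $y$ still commutes with all of $\pi(\mathcal{A})$ and all $\lambda(s)$, so $y\in Z(\mathcal{M})$ with compact $\hat\alpha$-spectrum concentrated near $q$. The spectral subspace of $\hat\alpha$ at frequency $q$ is exactly $\pi(\mathcal{A})\lambda(q)$, so after a localisation argument shrinking $\operatorname{supp}\widehat f$ onto $\{q\}$ one is reduced to a central element of the form $\pi(a)\lambda(q)$. Centrality then gives $a=\alpha_s(a)$ for all $s$ (commuting with $\lambda(s)$) and $a\,\alpha_q(b)=b\,a$ for all $b\in\mathcal{A}$ (commuting with $\pi(b)$); from the latter one extracts $a^*a,aa^*\in Z(\mathcal{A})$, polar-decomposes $a=v|a|$ into a central partial isometry $v$ times a central positive operator $|a|$, and concludes that $\alpha_q$ restricted to $z\mathcal{A}$ — with $z$ the (nonzero, $\alpha$-invariant, central) support projection of $a$ — is implemented by $v$, i.e.\ is \emph{inner}, contradicting the hypothesis with $t=q\neq 0$. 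Hence $x\in\pi(\mathcal{A})$, which closes the argument.

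The main obstacle I expect is making that last spectral-subspace step fully rigorous: both the passage from a central element with a nonzero Fourier mode at $q\neq 0$ to the explicit form $\pi(a)\lambda(q)$ (which genuinely requires Arveson/Connes-spectrum technology for crossed products by $\RR$, not just elementary manipulations), and the bookkeeping in the polar-decomposition argument needed to produce an honest nonzero $\alpha$-invariant central projection on which $\alpha_q$ is inner. Everything else — the easy inclusion and the final deduction from $x\in\pi(\mathcal{A})$ — is routine.
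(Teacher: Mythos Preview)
Your argument is correct and matches the paper's: both extract a nonzero Fourier component at some $t\neq 0$ from a putative central element outside $\pi(\mathcal{A})$, then use the resulting intertwining relation $a\,\alpha_t(b)=b\,a$ together with a polar decomposition to exhibit $\alpha_t$ as inner on a nonzero $\alpha$-invariant central corner, contradicting the hypothesis. The only difference is packaging --- you phrase the Fourier extraction via the dual action and Arveson spectral subspaces, whereas the paper writes the decomposition $A=\int f(t)\,(a(t)\otimes\mathds{1})\,u(t)\dd{t}$ directly and argues pointwise in $t$; the technical subtlety you flag about passing to an honest element of the form $\pi(a)\lambda(q)$ is exactly the informality hidden in the paper's coefficient-wise reasoning.
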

\begin{proof}
  Without loss of generality we can assume $\mathcal{A}$ acts on a Hilbert space $\mathcal{H}$ with unitary operators $U(t)\in\mathcal{B}(\mathcal{H})$ such that $U(t)aU(t)^\dagger = \alpha_t(a)$. Let us use the following representation of the crossed product:
  \begin{equation}
    \mathcal{A}\rtimes_\alpha \RR \simeq \{a\otimes \mathds{1}_{L^2(\RR)}, u(t) = U(t)\otimes \lambda(t) \mid a\in \mathcal{A}, t\in \RR\}'',
  \end{equation}
  where $\lambda(t)$ is translation by $t$ on $L^2(\RR)$. Clearly $Z(\mathcal{A})^\alpha\otimes\mathds{1}_{L^2(\RR)}\subset Z(\mathcal{A}\rtimes_\alpha \RR)$.

  A general element of the crossed product may be written
  \begin{equation}
    A = \int_{-\infty}^\infty\dd{t} f(t)\qty(a(t) \otimes\mathds{1}_{L^2(\RR)}) u(t) \in \mathcal{A}\rtimes_\alpha \RR,
    \label{Equation: A integral}
  \end{equation}
  where $f$ is some distribution over $\RR$, and $a(t)\in\mathcal{A}$. Suppose $A\in Z(\mathcal{A}\rtimes_\alpha \RR)$. Then for all $b\in \mathcal{A}$ and $t'\in \RR$ we have
  \begin{equation}
    0 = [A,b\otimes\mathds{1}_{L^2(\RR)}] = \int_{-\infty}^\infty\dd{t} f(t)\qty((a(t)\alpha_t(b) - b a(t))\otimes\mathds{1}_{L^2(\RR)}) u(t)
  \end{equation}
  and
  \begin{equation}
    0 = [A,u(t')] = \int_{-\infty}^\infty\dd{t} f(t)\qty((a(t)-\alpha_{t'}(a(t)))\otimes\mathds{1}_{L^2(\RR)}) u(t+t'),
  \end{equation}
  which imply
  \begin{equation}
    f(t)\qty(a(t)\alpha_t(b)-b a(t)) = 0, \qquad f(t)\qty(a(t) - \alpha_{t'}(a(t))) = 0 \text{ for all }t,t'\in\RR, b\in\mathcal{A}.
  \end{equation}
  Suppose $f(t)\ne 0$ for some $t$. If $t=0$, these equalities give straight away that $a(0)\in Z(\mathcal{A})^\alpha$.
  If $t\ne 0$, the first equality implies that $a(t)^\dagger a(t)$ commutes with $\mathcal{A}$, and the second equality gives that $a(t)^\dagger a(t)$ is invariant under $\alpha$, so we must have $a(t)^\dagger a(t) = \Lambda(t)\in Z(\mathcal{A})^\alpha$. With a polar decomposition we can write $a(t) = V(t)\Lambda(t)^{1/2}$ for some unitary operator $V(t)\in\mathcal{A}$. Then the first equality above gives
  \begin{equation}
    \Lambda(t)\alpha_t(b) = \Lambda(t)V(t)^\dagger b V(t).
  \end{equation}
  But, unless $\Lambda(t)=0$, this contradicts the assumption that $\alpha_t$ is outer on $\Lambda(t)\mathcal{A}$. Therefore, we must have $a(t)=0$, so~\eqref{Equation: A integral} has no contributions away from $t=0$. Therefore, $A\in Z(\mathcal{A})^\alpha$, as required.
\end{proof}

\begin{lemma} \label{Lemma: faithful normal rep}
  Let $\mathcal{A}\subset\mathcal{B}(\mathcal{K})$ be a von Neumann algebra with a normal representation $r:\mathcal{A}\to\mathcal{B}(\mathcal{H})$ such that $r(z)\ne 0$ for all non-zero $z\in Z(\mathcal{A})$. Then $r$ is faithful.
\end{lemma}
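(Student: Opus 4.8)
The statement is a standard fact about normal representations of von Neumann algebras, and the plan is to reduce it to the structure theory of $\sigma$-weakly closed two-sided ideals. First I would observe that, since $r$ is normal (i.e.\ $\sigma$-weakly continuous) and a $*$-homomorphism, its kernel $\ker r$ is a $\sigma$-weakly closed two-sided ideal of $\mathcal{A}$. The key classical input is then that every $\sigma$-weakly closed two-sided ideal of a von Neumann algebra $\mathcal{A}$ is of the form $z\mathcal{A}$ for a unique projection $z\in Z(\mathcal{A})$; this is proved by taking $z$ to be the supremum (in $\mathcal{A}$) of the range projections of elements of the ideal, checking that $z$ is central because the ideal is two-sided, and that $z$ lies in the ideal by normality. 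I would either cite this (e.g.\ from~\cite{TheoryOfOperatorAlgebrasI}) or include the short argument.

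Given this, write $\ker r = z\mathcal{A}$ with $z\in Z(\mathcal{A})$ a central projection. Since $z\in\ker r$ we have $r(z)=0$. But the hypothesis says $r(w)\ne 0$ for every non-zero $w\in Z(\mathcal{A})$; applying this with $w=z$ forces $z=0$, hence $\ker r = \{0\}$, i.e.\ $r$ is injective. Injectivity of a $*$-homomorphism between von Neumann (or even $C^*$) algebras is exactly faithfulness, which completes the proof.

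There is essentially no obstacle here: the only non-trivial ingredient is the ideal–central-projection correspondence, which is standard, and the rest is a one-line deduction from the hypothesis. If one prefers to avoid quoting that structure theorem, an alternative is to note that a normal $*$-homomorphism $r$ of $\mathcal{A}$ factors as $\mathcal{A}\to\mathcal{A}/\ker r \cong \mathcal{A}(1-p)\hookrightarrow r(\mathcal{A})$ for a central projection $p$ with $r(p)=0$; the hypothesis again gives $p=0$. Either way the proof is short, so in the paper I would simply state: $\ker r$ is a $\sigma$-weakly closed two-sided ideal, hence equals $z\mathcal{A}$ for a central projection $z$; then $r(z)=0$ forces $z=0$ by assumption, so $r$ is faithful.
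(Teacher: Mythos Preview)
Your proposal is correct and follows essentially the same route as the paper: identify $\ker r$ as a $\sigma$-weakly (equivalently ultraweakly) closed two-sided ideal, invoke the ideal--central-projection correspondence to write $\ker r = z\mathcal{A}$ with $z\in Z(\mathcal{A})$, and then use the hypothesis $r(z)=0\Rightarrow z=0$ to conclude faithfulness. The paper even includes the short argument for the ideal--projection correspondence that you mention as optional.
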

\begin{proof}
  Since $r$ is normal, its kernel $\ker(r)$ is ultraweakly closed. It is also a two-sided ideal, since $r(abc)=r(a)r(b)r(c)=0$ if $b\in\ker(r)$.

  Any ultraweakly closed two-sided ideal $\mathcal{I}\subset\mathcal{A}$ is of the form $\mathcal{I} = e \mathcal{A}$ for some $e\in Z(\mathcal{A})$. To see this, let $\mathcal{K'}\subset\mathcal{K}$ be the space of states spanned by the images of operators in $\mathcal{I}$, and let $e$ be the projection onto $\mathcal{K}'$. Since $\mathcal{I}$ is ultraweakly closed, we have $e\in\mathcal{I}$. Moreover, $ex = x$ for any $x\in\mathcal{I}$. Since $\mathcal{I}$ is a two-sided ideal, $ea,ae\in\mathcal{I}$ for any $a\in \mathcal{A}$. Therefore, $ea=eae=ae$ for any $a\in\mathcal{A}$, i.e.\ $e\in Z(\mathcal{A})$ as required.

  By the assumption that $r(z)=0\implies z=0$ for $z\in Z(\mathcal{A})$, we have $e=0$. Therefore, $\ker(r) = \emptyset$.
\end{proof}

\printbibliography

\end{document}